\newif\ifignore 
\newcommand{\auxproof}[1]{
\ifignore\mbox{}\newline
\textbf{BEGIN: AUX-PROOF} \dotfill\newline
{#1}\mbox{}\newline
\textbf{END: AUX-PROOF}\dotfill\newline
\fi}
\def\labelstyle{\scriptstyle}
  \def\pb#1{\save[]+<20 pt,0 pt>:a(#1)\ar@{pb{}}[]\restore}
\newcommand{\true}{\mathtt{True}}
\newcommand{\X}{\mathsf{X}}
\newcommand{\F}{\mathsf{F}}
\newcommand{\Fdisc}[1]{\mathop{\F\raisebox{-.2ex}[0ex][0ex]{$\scriptstyle #1$}}\nolimits}
\newcommand{\G}{\mathsf{G}}
\newcommand{\Gdisc}[1]{\mathop{\G\raisebox{-.2ex}[0ex][0ex]{$\scriptstyle #1$}}\nolimits}
\newcommand{\U}{\mathbin{\mathsf{U}}}
\newcommand{\R}{\mathbin{\mathsf{R}}}
\newcommand{\N}{\mathbb{N}}
\newcommand{\Q}{\mathbb{Q}}
\newcommand{\Lang}{\mathcal{L}}
\newcommand{\A}{\mathcal{A}}
\newcommand{\Ap}{\mathcal{A}^{\mathrm{p}}}
\newcommand{\Ana}{\mathcal{A}^{\mathrm{na}}}
\newcommand{\K}{\mathcal{K}}
\newcommand{\D}{\mathcal{D}}
\newcommand{\Dexp}{\D_{\expo}}
\newcommand{\Fcal}{\mathcal{F}}
\newcommand{\Fcalmc}{\Fcal_{\mathrm{mc}}}
\newcommand{\Pcal}{\mathcal{P}}
\newcommand{\expo}{\mathrm{exp}}
\newcommand{\LTLd}[1]{\textrm{LTL}^\mathrm{disc}[#1]}
\newcommand{\LTLdD}{\LTLd{\D,\Fcal}}
\newcommand{\Sub}{\mathop{\mathrm{Sub}}\nolimits}
\newcommand{\place}{\underline{\phantom{n}}\,} 
\newcommand{\run}{\mathop{\mathrm{run}}\nolimits}
\newcommand{\pathrm}{\mathop{\mathrm{path}}\nolimits}
\newcommand{\ttrue}{\mathtt{t{\kern-1.5pt}t}}
\newcommand{\ffalse}{\mathtt{f{\kern-1.5pt}f}}
\newcommand{\sem}[1]{\llbracket #1 \rrbracket} 
\newcommand{\dotminus}{\mathbin{\scriptstyle\dot{\smash{\textstyle-}}}}
\newcommand{\AP}{\mathit{AP}}
\newcommand{\concatseq}{\mathbin{:}}
\newcommand{\deletewords}[1]{}    
\newcommand{\movewordsfrom}[2]{}  
\theoremstyle{definition}
\newtheorem{defi}{Definition}[section]
\theoremstyle{remark}
\newtheorem{rem}[defi]{Remark}
\newtheorem{exam}[defi]{Example}
\theoremstyle{plain}
\newtheorem{prop}[defi]{Proposition}
\newtheorem{cor}[defi]{Corollary}
\newtheorem{lem}[defi]{Lemma}
\newtheorem{sublem}[defi]{Sublemma}
\newtheorem{thm}[defi]{Theorem}
\title{Near-Optimal Scheduling  for
 LTL with Future Discounting}
\author{Shota Nakagawa}
\author{Ichiro Hasuo}
\affil{Department of  Computer Science, The University of Tokyo
  }
\authorrunning{S. Nakagawa and I. Hasuo} 
\subjclass{F.1.1 Models of Computation}
\keywords{quantitative verification,
optimization,
temporal logic} 
\begin{document}


%
%
%


\maketitle              

 \begin{abstract}
  We study the search problem for optimal schedulers for the \emph{linear temporal
  logic (LTL) with future discounting}.  The logic, introduced by
  Almagor, Boker and Kupferman, is a quantitative variant of LTL in
  which an event in the far future has only discounted contribution to a
  truth value (that is a real number in the unit interval $[0,1]$). The
  precise problem we study---it naturally arises e.g.\ in search for a
  scheduler that recovers from an internal error state as soon as
  possible---is the following: given a Kripke frame, a formula and a
  number in $[0,1]$ called a \emph{margin}, find a path of the Kripke
  frame that is optimal with respect to the formula up to the prescribed
  margin (a truly optimal path may not exist). We present an algorithm
  for the problem; it works even in the extended setting with 
  propositional quality operators, a setting where (threshold)
  model-checking is known to be undecidable.

 \end{abstract}

\section{Introduction}\label{sec:intro}
In the field of \emph{formal methods} where a mathematical approach is
taken to modeling and verifying systems, the conventional theory
is built around the Boolean notion of truth: if a given system
satisfies a given specification, or not. This \emph{qualitative} theory
has produced an endless list of notable achievements from hardware
design to communication protocols. Among many techniques,
\emph{automata-based} ones for verification and synthesis have been
particularly successful in serving engineering needs, by offering 
a specification method by temporal logic and  push button-style
algorithms. See e.g.~\cite{Vardi96anautomata-theoretic,PnueliR89}.

However, trends today in the use of computers---computers as
part of more and more \emph{heterogeneous} systems---have pushed
researchers to turn to \emph{quantitative} consideration of systems,
too. For example, in an \emph{embedded system} where
a microcomputer 
controls a bigger system  with  mechanical/electronic components, 
concerns include  \emph{real-time properties}---if an expected task
is finished within the prescribed deadline---and \emph{resource consumption}
e.g.\ with respect to electricity, memory, etc. 

Quantities in formal methods can thus arise from a specification (or
 an \emph{objective}) that is quantitative in nature. Another
source of quantities are systems that are themselves quantitative, such
as one with
probabilistic 
behaviors.

Besides, quantities  can arise simply via \emph{refinement} of the Boolean notion of
 satisfaction. For example, consider 
the usual interpretation of the \emph{linear
 temporal logic (LTL)} formula
 $\F\varphi$---it 
is satisfied by a sequence 
$s_{0}s_{1}\dotsc$ 
 if there exists $i$ such that $s_{i}\models\varphi$. 
 It has the following natural
 quantitative refinement, where the modality $\F$ is replaced with 
 a \emph{discounted} modality $\Fdisc{\expo_{\frac{1}{2}}}$:
\begin{equation}\label{eq:introDiscounting}
 \sem{s_{0}s_{1}\dotsc,\,\Fdisc{\expo_{\frac{1}{2}}}\varphi}
 \;=\;
 \textstyle(\frac{1}{2})^{i}\enspace,
 \quad\text{where $i$ is the least index such that
 $s_{i}\models \varphi$.}
\end{equation}
This value $\sem{s_{0}s_{1}\dotsc,\,\Fdisc{\expo_{\frac{1}{2}}}\varphi}\in[0,1]$ is a
quantitative \emph{truth value} and is like \emph{utility} in the
game-theoretic terminology.
Such  refinements allow quantitative  reasoning about
so-called \emph{quality of service (QoS)}, specifically ``how soon $\varphi$
becomes true'' in this example. Another example is a quantitative
variation $\Gdisc{\expo_{\frac{1}{2}}}\varphi$ of $\G\varphi$, where
\begin{math}
 \sem{s_{0}s_{1}\dotsc,\,\Gdisc{\expo_{\frac{1}{2}}}\varphi}
 =
 1-\textstyle(\frac{1}{2})^{i}
 \end{math}---where $i$ is the least index such that
  $s_{i}\not\models \varphi$---meaning that violation of $\varphi$ in the far future only has a small
negative
impact.

\noindent\textbf{$\LTLdD$: LTL with Future Discounting}\quad
The last examples are about quantitative refinement of temporal
specifications. An important step in this direction is taken
in the recent work~\cite{AlmagorBK14}.
There various useful quantitative refinements in
LTL---including the last examples---are unified under the notion of
\emph{future discounting}, an idea first presented in~\cite{AlfaroHM03}
in the field of formal methods. They introduce a clean syntax of
the logic  $\LTLdD$---called
\emph{LTL with discounting}---that combines: 1) a ``discounting until'' operator
$\U_{\eta}$; 2) the usual features of LTL such as the non-discounting
one $\U$;  and 3) so-called propositional
quality operators such as the (binary) average operator $\oplus$, in
addition to $\land$ and $\lor$. In~\cite{AlmagorBK14} they define its semantics; and importantly, they
show that  usual automata-theoretic techniques for verification and
synthesis (e.g.\ from~\cite{Vardi96anautomata-theoretic,PnueliR89})
mostly remain applicable.  

Probably the most important algorithm in~\cite{AlmagorBK14} is for the
\emph{threshold model-checking problem}: given a Kripke structure $\K$,
a formula $\varphi$ and a \emph{threshold} $v\in[0,1]$, it asks if 
$\sem{\K,\varphi}>v$, i.e.\ 
the
worst case truth value of a path of $\K$ is above $v$ or
not.
The core idea of
 the algorithm
is what we call an \emph{event horizon}: assuming that a discounting
function $\eta$ in $\U_{\eta}$ tends to $0$ as time goes by, and
that $v>0$, there exists a time beyond which nothing is 
significant enough
to change the answer to the threshold model-checking problem.
In this case we can approximate an infinite path by its finite prefix.

\noindent
\textbf{Our Contribution: Near-Optimal Scheduling for
$\LTLdD$}\quad
Now that a temporal formula $\varphi$ assigns quantitative \emph{truth}
or \emph{utility} $\sem{\xi,\varphi}$ to each path $\xi$, a natural task is
to find a path $\xi_{0}$ in a given Kripke structure $\K$ that achieves the
optimal. On the ground that the logic $\LTLdD$ from~\cite{AlmagorBK14} is capable of 
expressing many common specifications encountered in real-world
problems, finding an optimal path---i.e.\ resolving nondeterminism in
the best possible way---must have numerous applications. The situation
is similar to one with \emph{timed automata}, for which optimal
scheduling problems are studied e.g.\ in~\cite{AbdeddaimAM06}. 

It turns out, however, that 
a (truly) optimal path need not exist
(Example~\ref{exam:optimalityNotAchievable}):
$v_{0}=\sup_{\xi\in\pathrm(\K)}\sem{\xi,\varphi}$ is obviously a limit point but no
$\xi_{0}$ achieves $\sem{\xi_{0},\varphi}=v_{0}$. This leads us to the
following \emph{near}-optimal scheduling problem:
\begin{quote}
 \textbf{Near-optimal scheduling.}
 Given a Kripke structure $\K$, an $\LTLdD$ formula $\varphi$ and a
 \emph{margin}
 $\varepsilon\in(0,1)$, find a path $\xi_{0}\in\pathrm(\K)$ that is
 \emph{$\varepsilon$-optimal}, that is,
 \begin{math}
  \sup_{\xi\in\pathrm(\K)}\sem{\xi,\varphi} -\varepsilon
  \le
  \sem{\xi_{0}, \varphi}\enspace.
 \end{math}
\end{quote}
We study automata-theoretic algorithms
for this problem. In the basic setting where there are no propositional
quality operators, we can find  a straightforward algorithm  that conducts
binary search using the model-checking algorithm
from~\cite{AlmagorBK14}. 
Our main contribution, however,  is
 an alternative algorithm  that
takes the usual workflow: it
constructs, from
a formula $\varphi$ and a margin $\varepsilon$, 
 an automaton $\A_{\varphi,\varepsilon}$
 with which we combine a system model $\K$;
running a nonemptiness check-like algorithm to the resulting automaton then
yields an answer. 

On  the one hand, our (alternative) algorithm  resembles the one in~\cite{AlmagorBK14}. 
In particular it  relies on the idea of event horizon:
a margin $\varepsilon$ in our setting plays the role
of a threshold $v$ in~\cite{AlmagorBK14} and
enables us to ignore events in the far future.

On the other hand, a major difference from~\cite{AlmagorBK14} 
is that we translate a specification
$(\varphi,\varepsilon)$ into an automaton that is itself quantitative
(what we call a \emph{$[0,1]$-acceptance
automaton}, with  Boolean
branching and $[0,1]$-acceptance values). This is 
unlike~\cite{AlmagorBK14} where the target automaton
is totally Boolean.
An advantage of  $[0,1]$-acceptance
automata is that they allow optimal path search 
much like
emptiness of B\"uchi automata is checked (via
lasso computations). 
Applied to our current problem, 
this enables us to directly find a
near-optimal path
for $\LTLdD$
without knowing the optimal value
$\sup_{\xi\in\pathrm(\K)}\sem{\xi,\varphi}$.

\noindent
\textbf{Presence of $\oplus$ and Other Propositional Quality Operators}\quad
Notably,  our  (alternative) algorithm is shown to work even in the presence
 of
 any propositional quality operators that are \emph{monotone} and
 \emph{continuous} (in the sense we will define later; an example is the
 average operator $\oplus$). 
Those operators makes the logic  more complex: indeed~\cite{AlmagorBK14}
shows that, in presence of 
 the average operator $\oplus$, the model-checking problem for the logic
 $\LTLdD$ becomes
 undecidable.
 The binary-search algorithm mentioned earlier (that
 repeats model checking) ceases to work for this reason; our
 alternative algorithm works, nevertheless.


We analyze the complexity of the proposed algorithm, focusing on a certain
subclass of the logic $\LTLdD$ (\S\ref{subsec:complexity}). Furthermore
we present our prototype implementation and some experimental
results (\S\ref{sec:experiments}). They all seem to suggest the following: addition of propositional
quality
operators (like the average operator $\oplus$) does incur substantial
computational costs---as is expected from the fact that $\oplus$ makes
model checking undecidable; still our automata-theoretic approach is a
viable approach, potentially applicable to optimization 
problems in the field of model-based system design. 

The significance of the average operator $\oplus$ in
envisaged applications is that it allows one to \emph{superpose} multiple
objectives. For example, one would want an event $\varphi$ as soon as
possible, but at the same time avoiding a different event $\psi$ as long as
possible. This is a trade-off situation and the formula
$\Fdisc{\eta}\varphi\oplus \Gdisc{\eta'}\lnot \psi$---with suitable discounting
functions $\eta,\eta'$---represents a 50-50 trade-off. Other
 trade-off ratios can be represented as (monotone and continuous) proportional quality operators, too,
and our algorithm accommodates them.



\noindent
\textbf{Related Work}\quad
Quantitative  temporal logics and their decision procedures
have  been a very active research
topic~\cite{AlmagorBK14,AlmagorBK13,AlfaroHM03,BouyerMM14,FaellaLS08}. 
 We shall lay them out along a basic taxonomy. We denote by $\K$  (the model of) the
system against which a specification formula $\varphi$ is verified (or tested, synthesized,
etc.).

\begin{itemize}
\item
  \textit{Quantitative vs.\ Boolean system models.} Sometimes we need
       quantitative considerations just because the system
        $\K$ itself is quantitative. This is the case e.g.\ when $\K$ is a Markov chain, a Markov decision
       process, a timed or hybrid automaton, etc. In the current work
       $\K$ is a Kripke structure and is Boolean.

\item
 \textit{Quantitative vs.\ Boolean truth values.} The previous
       distinction is quite orthogonal to whether a formula $\varphi$
       has truth values from $[0,1]$ (or another continuous domain), or from
       $\{\ttrue,\ffalse\}$. For example, the temporal logic PCTL~\cite{HanssonJ94} for
       reasoning about probabilistic systems has modalities
       like $\mathcal{P}_{>v}\psi$ (``$\psi$ with a probability 
       $>v$'') and has Boolean interpretation.  In $\LTLdD$ studied  here,
       truth values are from $[0,1]$.

\item
 \textit{Linear time vs.\ branching time.} This distinction is
       already there in the qualitative/Boolean
       setting~\cite{vanGlabbeek01}---its probabilistic variant is
       studied in~\cite{CheungSV07}---and gives rise to temporal logics
       with the corresponding flavors (LTL vs.\ CTL, $\textrm{CTL}^{*}$).
       In fact the idea of future discounting is first introduced to 
       a branching-time logic in~\cite{AlfaroHM03}, where an
       approximation algorithm for truth values is presented.


\item
 \textit{Future discounting vs.\ future averaging.} The temporal
       quantitative operators in $\LTLdD$ are \emph{discounting}---an
     event's significance tends to $0$ as time proceeds---a fact that
       benefits model checking  via event horizons. 
        Different  temporal
       quantitative operators are studied in~\cite{BouyerMM14}, including the
       \emph{long-run average} operator $\widetilde{\G}\psi$. 
     Presence of $\widetilde{G}$, however, makes most
       common decision problems undecidable~\cite{BouyerMM14}.

\end{itemize}

In~\cite{FaellaLS08} LTL (without additional quantitative operators)
is interpreted over the unit interval $[0,1]$, and its model-checking
problem against quantitative systems $\K$ is shown to be decidable. 
In this setting---where the LTL connectives are interpreted by
   idempotent operators $\min$
and $\max$---the variety of truth values arises only from a finite-state
quantitative system $\K$, hence is finite.


In~\cite[Thm.~4]{AlmagorBK14} it is proved that the \emph{threshold
synthesis} problem for the logic $\LTLd{\D,\emptyset}$ (see Def.~\ref{def:LTLdDSyntax}) is feasible. This problem asks:
given a partition of atomic propositions into the input and output
signals, an $\LTLd{\D,\emptyset}$ formula $\varphi$ and $v\in[0,1]$, to come up with a transducer (i.e.\ a finite-state strategy)
 that makes the truth value of $\varphi$ at least
$v$. 
We remark that this is
different from the near-optimal scheduling problem that we
solve
in this paper. The \emph{synthesis} problem
in~\cite[\S{}2.2]{AlmagorBK13}, without a threshold, is closer to ours.

Automata- (or game-) theoretic approaches are  taken 
in~\cite{BloemCHJ09,CernyCHRS11}
to the synthesis of controllers or programs with better
quantitative performance, too. In these papers, a specification is 
given itself as an automaton, instead of a temporal formula in the
current work. Another difference is that, in~\cite{BloemCHJ09,CernyCHRS11}, utility is computed
along a path by limit-averaging, not future discounting. The algorithms in~\cite{BloemCHJ09,CernyCHRS11} therefore rely on 
those which are known for mean-payoff games, including the ones
in~\cite{ChatterjeeHJ05}.

More and more
diverse quantitative measures of systems' QoS are studied recently:
from best/worst case probabilities and costs, to quantiles,
conditional probabilities and ratios. See~\cite{BaierDK14} and the
references
therein.  Study of such in $\LTLdD$ is future work.

In~\cite{ChatterjeeDH10} so-called \emph{cut-point languages} of 
weighted automata are studied.
Let $L:\Sigma^{\omega} \to \mathbb{R}$ be the quantitative 
language of a weighted automata $\A$.
For a threshold $\eta$, the cut-point language of $\A$ is 
the set consisting of all words $w$ such that $L(w) \geq \eta$.
In~\cite{ChatterjeeDH10} it is proved that the cut-point languages of 
deterministic limit-average automata and those of discounted-sum automata 
are $\omega$-regular if the threshold $\eta$ is \emph{isolated}, that is, 
there is no word $w$ such that $L(w)$ is close to $\eta$. 
We expect that similar properties for the logic $\LTLdD$ are not hard to
establish, although details are yet to be worked out.




\noindent
\textbf{Organization of the Paper} \quad
In~\S{}\ref{sec:syntaxAndThresholdProblem} we review the logic
$\LTLdD$ and known results on threshold model checking and
satisfiability,
all from~\cite{AlmagorBK14}. We introduce  quantitative
variants of (alternating) B\"uchi automata, called
(alternating)  $[0,1]$-acceptance automata,
in~\S{}\ref{sec:zeroOneAutomata}, with  auxiliary observations on their
relation to \emph{fuzzy automata}~\cite{Rahonis05}.
These automata
play a central role in~\S{}\ref{sec:nearOptimalSchedulerSynth} where we
formalize and solve the near-optimal scheduling problem for the logic
$\LTLdD$ (under certain assumptions on $\D$ and $\Fcal$).
We also study complexities, focusing on the average operator $\oplus$ as
the only propositional quality operator. In~\S{}\ref{sec:experiments} we
present our implementation and some experimental results;
in~\S{}\ref{sec:conclFutureWork} we conclude, citing some future work.
Omitted proofs are found in Appendix~\ref{appendix:omittedproofs}.

\noindent
\textbf{Notations and Terminologies} \quad
We shall fix some notations and terminologies, mostly
following~\cite{AlmagorBK14}.  They are all standard.

The powerset of a set $X$ is denoted by $\mathcal{P}X$.
We fix the set $\AP$  of \emph{atomic propositions}.
 A
\emph{computation} (over $\AP$) is an infinite sequence $\pi= \pi_{0} \pi_{1} \ldots \in
 (\mathcal{P}(\AP))^{\omega}$ over the alphabet
 $\mathcal{P}(\AP)$. 
For $i\in\N$, 
$\pi^{i} = \pi_{i} \pi_{i+1} \ldots$ 
denotes the suffix of $\pi$ starting from its $i$-th element.

A \emph{Kripke structure} over $\AP$ is a tuple $\K=(W,R,\lambda)$ of:  
  a finite set $W$ of states; a  transition
 relation 
 $R \subseteq W^{2}$ 
that is left-total (meaning that $\forall s\in W.\,\exists
 s'\in W.\,(s,s')\in R$), and a labeling function  $\lambda : W
  \rightarrow \mathcal{P}(\AP)$. 
We follow~\cite{KupfermanVW00} and call an infinite sequence  $\xi=s_{0}s_{1}\dotsc$ of states $s_{i}\in W$, such that
 $(s_{i},s_{i+1})\in R$ for each $i\in\N$,  a \emph{path} of
  a Kripke structure $\K$.
The set of paths of $\K$ is
 denoted by $\pathrm(\K)$. A path
$\xi=s_{0}s_{1}\dotsc\in W^{\omega}$ gives rise to a computation
$\lambda(s_{0})\,\lambda(s_{1})\dotsc\in(\mathcal{P}(\AP))^{\omega}$; the
  latter is  denoted by $\lambda(\xi)$.

Given a set $X$, 
 $\mathcal{B}^{+}(X)$ denotes, as
 usual, the set of positive propositional formulas (using
 $\land,\lor,\top,\bot$) over $x\in X$  as atomic
 propositions.


\section{The Logic $\LTLd{\D,\Fcal}$, and Its Threshold Problems}
\label{sec:syntaxAndThresholdProblem}
Here we recall from~\cite{AlmagorBK13,AlmagorBK14}  our target
logic, and some existing (un)decidability results.

The logic $\LTLd{\D,\Fcal}$  extends LTL with: 1) 
propositional quality operators~\cite{AlmagorBK13} like the 
average operator $\oplus$; and 2) discounting in temporal
operators~\cite{AlmagorBK14}. In~\cite{AlmagorBK14} the two extensions
have been studied separately because their coexistence leads to undecidability
of the (threshold) model-checking problem; here we put them altogether.

\auxproof{The logic is quantitative---a truth value is a
number in the unit interval $[0,1]$ rather than a Boolean value from
$\{\ttrue,\ffalse\}$. The logic features fixed point operators with
future discounting. 
}

 The logic $\LTLd{\D,\Fcal}$ has two parameters: 
a set $\D$ of discounting
functions; and a set $\Fcal$ of propositional connectives,
called propositional quality operators. 
\begin{defi}[discounting function~\cite{AlmagorBK14}]\label{def:discountFunction}
 A \emph{discounting function} is a strictly decreasing function $\eta :
 \N \rightarrow [0,1]$ such that $\lim_{i \to \infty} \eta(i) = 0$. A
 special case is an \emph{exponential discounting function}
 $\expo_{\lambda}$, where $\lambda \in (0,1)$, that is defined by
 $\expo_{\lambda}(i) = \lambda^{i}$. 

 The set
\begin{math}
 \Dexp 
= \{ \expo_{\lambda} \mid \lambda \in (0,1) \cap \Q \}
\end{math} is that of exponential discounting
 functions.
\end{defi}

\auxproof{
\begin{rem}
 An extension of the framework is proposed in~\cite{AlmagorBK14} where a discounting functions $\eta$ need
 not tend to $0$. It is claimed that
 such an extension, e.g.\ when $\eta$ tends to $\frac{1}{2}$, is suited for
 the situation where we are (not totally pessimistic but) ambivalent
 about the future. This extension does not change the algorithmic
 results in~\cite{AlmagorBK14}, nor here. It is
 enough that the limit value $\lim_{k\to\infty}\eta(k)$ is statically known so that
 we can use the value in construction of automata.
\end{rem}
}


\begin{defi}[(monotone and continuous) propositional quality
 operator~\cite{AlmagorBK13}]\label{def:propositionalQualityOperator}
 Let $k \in \N$ be a natural number.
 A $k$-ary \emph{propositional quality operator} is a 
 function
 $f :
 [0,1]^{k} \rightarrow [0,1]$. 

 We will eventually restrict to  propositional quality operators that are 
 \emph{monotone} (wrt.\ the usual order between real numbers)
 and \emph{continuous} (wrt.\ the usual Euclidean topology). 
 The set of monotone and continuous propositional quality operators
is denoted by $\Fcalmc$.
\end{defi}

\begin{exam}\label{exam:propositionalQualityOperator}
 A prototypical example of a propositional quality operator is 
the \emph{average operator} $\oplus\colon [0,1]^{2}\to [0,1]$, defined
 by $v_{1} \oplus v_{2} = (v_{1} + v_{2})/2$. (Note that $\oplus$ is a
 ``propositional'' average operator and is different from the
 ``temporal'' average operator $\widetilde{\U}$
 in~\cite{BouyerMM14}). The operator $\oplus$ is monotone and
 continuous. 
 Other (unary) examples from~\cite{AlmagorBK13ExtendedPreprint} include: 
 $\triangledown_{\lambda}(v)=\lambda\cdot v$ and
 $\blacktriangledown_{\lambda}(v)=\lambda\cdot v + (1-\lambda)$ (they
 are explained in~\cite{AlmagorBK13ExtendedPreprint} to express \emph{competence} and \emph{necessity}, respectively).
 The conjunction and disjunction connectives $\land,\lor$, interpreted 
 by  infimums and supremums in $[0,1]$, can also be regarded as binary
 propositional quality operators. They are  monotone and
 continuous, too. 
\end{exam}

Recall that the set $\AP$  is that of atomic propositions. 
\begin{defi}[$\LTLd{\D,\Fcal}$]\label{def:LTLdDSyntax}
Given a set $\D$ of discounting functions and a set $\Fcal$ of propositional quality operators, the \emph{formulas} of $\LTLd{\D,\Fcal}$
are defined by the  grammar:
 \begin{displaymath}
  \varphi ::= \true \mid p \mid \lnot \varphi \mid \varphi \land \varphi \mid \X \varphi \mid \varphi \U \varphi \mid \varphi \U_{\eta} \varphi \mid f(\varphi,\ldots,\varphi)\enspace,
 \end{displaymath}
 where $p \in \AP$, $\eta \in \D$ is a discounting function and
 $f \in \Fcal$ is a propositional quality operator (of a suitable arity). 
 We adopt the usual notation conventions: 
\begin{math}
{\F \varphi} 
=
{\true \U \varphi}
\end{math}
and 
\begin{math}
 \G \varphi
=
\lnot \F \lnot \varphi
\end{math}. The same goes for discounting operators:
\begin{math}
{\Fdisc{\eta} \varphi} 
=
{\true \U_{\eta} \varphi}
\end{math}
and 
\begin{math}
 \Gdisc{\eta} \varphi
=
\lnot \Fdisc{\eta} \lnot \varphi
\end{math}.
\end{defi}
%
%
As we have already discussed, 
the logic $\LTLd{\D,\Fcal}$ extends the usual LTL with: 1) discounted
temporal operators like $\U_{\eta}$ (cf.~(\ref{eq:introDiscounting}));
and 2) propositional quality operators like $\oplus$ that operate, on
 truth values from $[0,1]$ that arise from the discounted modalities, in
 the ways other than $\land$ and $\lor$ do. The  precise
 definition below
closely follows~\cite{AlmagorBK13,AlmagorBK14}.

\begin{defi}[semantics of $\LTLd{\D,\Fcal}$~\cite{AlmagorBK13,AlmagorBK14}]\label{def:LTLdDSemantics}
Let $\pi= \pi_{0} \pi_{1} \ldots \in
 (\mathcal{P}(\AP))^{\omega}$ be a computation (see~\S{}\ref{sec:intro}), and
 $\varphi$ be an $\LTLd{\D,\Fcal}$ formula.
 The \emph{truth value} $\sem{ \pi, \varphi }$ of $\varphi$ in $\pi$
 is a real number in $[0,1]$ defined
 as follows. Recall that $\pi^{i} = \pi_{i} \pi_{i+1} \ldots$ is a
 suffix of $\pi$.
 \begin{displaymath}\small
\renewcommand*{\arraystretch}{1}
\begin{array}{ll}
 \sem{ \pi,\true } \;=\; 1
\qquad
& 
\sem{ \pi,p } \;=\; 
1 \quad\text{(if $p\in\pi_{0}$);}
\qquad
0 \quad\text{(if $p\not\in\pi_{0}$)}
\\
\sem{ \pi, \lnot \varphi}
\;=\; 1 - \sem{ \pi,\varphi }
\phantom{hogehoge}
 &
\sem{ \pi,\varphi_{1} \land \varphi_{2} } \;=\; \min \bigl\{\, \sem{ \pi,\varphi_{1} }, \sem{ \pi,\varphi_{2} } \,\bigr\}
 \\
\sem{\pi, \X \varphi} \;=\; \sem{ \pi^{1},\varphi }
\\
\multicolumn{2}{l}{
\textstyle
\sem{ \pi, \varphi_{1} \U \varphi_{2} } 
\;=\;  \sup_{i \in \N} \bigl\{\, \min \bigl\{ \sem{ \pi^{i}, \varphi_{2} }, \min_{0 \leq j < i} \sem{ \pi^{j}, \varphi_{1} } \bigr\}
\,\bigr\}
}
\\
\multicolumn{2}{l}{
\textstyle
\sem{ \pi, \varphi_{1} \U_{\eta} \varphi_{2} } 
\;=\;  \sup_{i
\in \N} \bigl\{\, \min \bigl\{\, \eta(i)\sem{ \pi^{i}, \varphi_{2} },\, \min_{0 \leq j
< i} \eta(j)\sem{ \pi^{j}, \varphi_{1} } \,\bigr\}
\,\bigr\}
}
\\
\sem{\pi, f(\varphi_{1},\ldots,\varphi_{k})} \;=\; f \bigl(\sem{ \pi,\varphi_{1} },\ldots,\sem{ \pi,\varphi_{k} }\bigr)
\end{array}  
 \end{displaymath}
\end{defi}
Compare the semantics of $\varphi_{1} \U \varphi_{2}$
and that of $\varphi_{1} \U_{\eta} \varphi_{2}$. The former is a
straightforward quantitative analogue of the usual Boolean semantics;
the latter additionally includes ``discounting'' by
$\eta(i),\eta(j)\in[0,1]$. Recall that a discounting function $\eta$ is
deemed to be strictly decreasing; this allows us to express intuitions
like in~(\ref{eq:introDiscounting}).

\begin{prop}\label{lem:maxTruthValOfDiscountedUntil}
 The truth value $\sem{ \pi, \varphi_{1} \U_{\eta} \varphi_{2} } $
 lies between $0$ and $\eta(0)$. \qed
\end{prop}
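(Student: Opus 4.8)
The plan is to read off both inequalities directly from the defining clause
\[
\sem{\pi,\varphi_{1}\U_{\eta}\varphi_{2}}
\;=\;
\sup_{i\in\N}\Bigl\{\,\min\bigl\{\,\eta(i)\sem{\pi^{i},\varphi_{2}},\ \min_{0\le j<i}\eta(j)\sem{\pi^{j},\varphi_{1}}\,\bigr\}\,\Bigr\},
\]
after first recording the auxiliary fact that $\sem{\pi,\psi}\in[0,1]$ for every $\LTLd{\D,\Fcal}$ formula $\psi$ and every computation $\pi$. The latter is a routine structural induction on $\psi$: the clauses for $\true$, $p$, $\lnot$, $\land$ and $\X$ plainly stay within $[0,1]$; a supremum of a subset of $[0,1]$ again lies in $[0,1]$, which handles $\U$ and $\U_{\eta}$ (the $\U_{\eta}$ case also uses $\eta(i)\in[0,1]$); and the clause for $f(\varphi_{1},\dots,\varphi_{k})$ stays in $[0,1]$ because every propositional quality operator has codomain $[0,1]$ by Definition~\ref{def:propositionalQualityOperator}.

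For the upper bound I would fix an arbitrary $i\in\N$ and bound the $i$-th member of the set over which we take the supremum from above by discarding the second argument of the outer $\min$:
\[
\min\bigl\{\,\eta(i)\sem{\pi^{i},\varphi_{2}},\ \min_{0\le j<i}\eta(j)\sem{\pi^{j},\varphi_{1}}\,\bigr\}
\;\le\;\eta(i)\,\sem{\pi^{i},\varphi_{2}}
\;\le\;\eta(i)
\;\le\;\eta(0),
\]
where the second inequality uses $\sem{\pi^{i},\varphi_{2}}\le 1$ from the auxiliary fact and the third uses that $\eta$ is decreasing. Since every element of the set is $\le\eta(0)$, its supremum is $\le\eta(0)$ as well.

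For the lower bound it suffices to observe that every member of that set is a minimum of products of numbers in $[0,1]$, hence nonnegative (for $i=0$ the inner minimum ranges over the empty index set $\{j:0\le j<0\}$ and contributes nothing, so the $0$-th member is simply $\eta(0)\,\sem{\pi^{0},\varphi_{2}}\ge 0$); therefore the supremum is $\ge 0$. Combining the two bounds yields $0\le\sem{\pi,\varphi_{1}\U_{\eta}\varphi_{2}}\le\eta(0)$. There is no genuine obstacle here; the only points needing a moment's care are the empty-range convention for $\min_{0\le j<i}$ at $i=0$, and the fact that the whole argument leans on all truth values lying in $[0,1]$, which is why I would dispatch that short induction first.
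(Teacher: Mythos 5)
Your proof is correct. The paper gives no proof of this proposition at all --- it is stated as immediate from Def.~\ref{def:LTLdDSemantics} --- and your argument (bounding each member of the supremand above by $\eta(i)\sem{\pi^{i},\varphi_{2}}\le\eta(i)\le\eta(0)$ and below by $0$, after the routine induction showing all truth values lie in $[0,1]$) is exactly the direct verification the authors leave to the reader.
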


We extend the semantics to Kripke structures (see~\S{}\ref{sec:intro}). 
\begin{defi}
 Let $\K$ be a Kripke structure and $\xi$ be a path of $\K$. 
 The truth value $\sem{ \xi, \varphi }$ of $\varphi$ in the path $\xi$ is defined by
 $\sem{ \xi, \varphi } =  \sem{ \lambda(\xi), \varphi }$, where
 $\lambda(\xi)\in(\mathcal{P}(\AP))^{\omega}$ is the computation 
 induced by $\xi$ (see~\S{}\ref{sec:intro}). 
 The truth value $\sem{ \K, \varphi }$ of $\varphi$ in $\K$ is defined by $\sem{ \K, \varphi } = \inf_{\xi \in \pathrm(\K)} \sem{ \xi, \varphi }$.
\end{defi}
\auxproof{
The choice of infimum (instead of $\sup$) in the definition of $\sem{ \K,
\varphi }$ is due to~\cite{AlmagorBK14} and is more natural, modeling
the worst case value, when we consider
the threshold model-checking problem (see
below).
}


\begin{rem} 
 Later in this paper we will restrict to propositional quality operators 
 that are monotone and
 continuous, i.e.\ $\LTLd{\D,\Fcal}$  with $\Fcal\subseteq\Fcalmc$. 
 Such a logic can nevertheless express some non-monotonic 
 operators with the help of negation. For example, the function
 $f_{0}\colon [0,1]\to[0,1], f_{0}(v) = |v-\frac{1}{2}|$ can be expressed as 
 a combination $f_{0}(v) = \max \{ 1 - f_{1} (v), f_{2} (v) \}$, using
$f_{1} (v) = \min \{v + \frac{1}{2}, 1\}$ and $f_{2} (v) = \max \{ v- \frac{1}{2}, 0 \}$
 (note that  $f_{1},f_{2}\in\Fcalmc$)---i.e.\ as the semantics of the formula
 $(\lnot f_{1}\varphi)\lor (f_{2}\varphi)$. A nonexample is the function 
 $f_{3}(v)=v\cdot\sin \frac{1}{v}$ that oscillates infinitely often in $[0,1]$.
\end{rem}

The following ``threshold'' problems are studied
in~\cite{AlmagorBK14,AlmagorBK13ExtendedPreprint}. It is shown that the logic
$\LTLd{\D,\emptyset}$---i.e.\ without propositional quality operators
other than $\land, \lor$---has those problems decidable. Adding the
average operator $\oplus$ makes them undecidable~\cite{AlmagorBK14},
while adding $\triangledown_{\lambda}$
(Example~\ref{exam:propositionalQualityOperator}) maintains decidability~\cite{AlmagorBK13ExtendedPreprint}.
Here the complexities are  in terms of a suitable notion
 $|\langle \varphi \rangle|$ of the size of $\varphi$ 
 (see~\cite{AlmagorBK14}).


\begin{thm}[\cite{AlmagorBK14}]\label{thm:thresholdModelCheckingDecidable}
 The \emph{threshold model-checking problem} for $\LTLd{\D,\emptyset}$ is:
given a
 Kripke structure $\K$, an $\LTLd{\D,\emptyset}$ formula $\varphi$ and a threshold $v
 \in [0,1]$,  decide whether $\sem{ \K, \varphi } \geq v$. It
 is decidable; when restricted to $\LTLd{\Dexp,\emptyset}$ and $v\in\Q$, the problem is in PSPACE in $|\langle
 \varphi \rangle|$ and in the description of  $v$, and in NLOGSPACE in the
 size of $\K$.

 The \emph{threshold satisfiability problem} for $\LTLd{\D,\emptyset}$ is:
given an $\LTLd{\D,\emptyset}$ formula $\varphi$, a threshold $v \in
 [0,1]$
and   $\mathord{\sim} \in \{ <, > \}$, decide whether there exists a computation $\pi \in
 (\mathcal{P}(\AP))^{\omega}$ such that $\sem{ \pi, \varphi }
 \sim v$. This is decidable; when restricted to $\LTLd{\Dexp,\emptyset}$ 
 and $v\in\Q$, the problem is in PSPACE in $|\langle
 \varphi \rangle|$ and in the description of  $v$. \qed
\end{thm}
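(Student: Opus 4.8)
The plan is to reduce both problems to (non)emptiness of Büchi automata, following the \emph{event horizon} principle sketched in the introduction. I would first dispose of the degenerate thresholds (e.g.\ $v=0$, where $\sem{\K,\varphi}\ge 0$ holds vacuously, and the extreme cases of the strict inequalities in the satisfiability problem). The key observation for everything else is Proposition~\ref{lem:maxTruthValOfDiscountedUntil} together with $\lim_{i}\eta(i)=0$: in the semantics of $\varphi_1\U_\eta\varphi_2$ the $i$-th term of the supremum is bounded by $\eta(i)$, so for a positive threshold $t$ only the finitely many positions $i$ with $\eta(i)\ge t$, i.e.\ $i<k_0(\eta,t)$ for an explicit bound $k_0$, can ever witness ``$\sem{\pi,\varphi_1\U_\eta\varphi_2}\ge t$'', and dually for ``$\le t$''. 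Hence for discounted operators the supremum, whenever it is positive, is actually a maximum over a bounded range of positions, so the far future is irrelevant. For $\eta=\expo_\lambda$ and rational $v$ one checks that $k_0$ is linear in the bit-size of $v$ and of $\lambda$, since $\lambda^{k}<t$ amounts to $k>\ln(1/t)/\ln(1/\lambda)$.

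Second, I would construct from $\varphi$ and $v$ a nondeterministic Büchi automaton $\A_{\varphi\ge v}$ over the alphabet $\mathcal{P}(\AP)$ accepting exactly the computations $\pi$ with $\sem{\pi,\varphi}\ge v$ (and analogously $\A_{\varphi>v}$, $\A_{\varphi\le v}$, $\A_{\varphi<v}$; by negation it suffices to treat ``$\ge$'' and ``$>$''). The construction is a threshold-annotated variant of the classical LTL tableau / alternating-Büchi translation: a state records, for each relevant subformula $\psi$, a constraint of the form $\sem{\pi,\psi}\ge t$ or $\sem{\pi,\psi}\le t$ demanded at the current suffix. Negation flips $\ge t$ into $\le 1-t$; $\land$ and $\lor$, being $\min$ and $\max$, distribute the constraint to the subformulas (conjunctively or, in the ``wrong'' direction, via nondeterministic choice); $\X$ shifts it to the next step; $\varphi_1\U\varphi_2$ is unfolded by one step as usual, leaving the threshold unchanged; and $\varphi_1\U_\eta\varphi_2\ge t$ is unfolded step by step, where the ``postponed'' obligation after $k$ steps carries the rescaled threshold $t/\eta(k)$, which exceeds $1$ — killing the branch — once $k\ge k_0(\eta,t)$. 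Thus each discounted subformula contributes only boundedly much memory, and (crucially, because $\Fcal=\emptyset$, so $\land,\lor,\lnot$ and $\U,\U_\eta$ are the only connectives and none of $\min,\max$ creates a new value) the set of thresholds reachable from $v$ is finite; for $\LTLd{\Dexp,\emptyset}$ with $v\in\Q$ these thresholds have bit-size polynomial in $|\langle\varphi\rangle|$ and in the description of $v$, and there are at most singly exponentially many states. Büchi acceptance handles the non-discounted eventualities $\varphi_1\U\varphi_2$ in the standard way (infinitely often discharging the obligation); discounted eventualities need no fairness, since they are discharged within $k_0$ steps.

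Third, I would assemble the decision procedures. For threshold model checking, $\sem{\K,\varphi}\ge v$ fails iff some path $\xi\in\pathrm(\K)$ satisfies $\sem{\xi,\varphi}<v$, i.e.\ iff the product of $\K$, viewed as a generator of computations, with $\A_{\varphi<v}$ has nonempty language; running the NLOGSPACE Büchi-nonemptiness test on this product while exploring $\A_{\varphi<v}$ on the fly (so that no state is ever stored with more than polynomial space) yields PSPACE in $|\langle\varphi\rangle|$ and in the description of $v$, and NLOGSPACE in the size of $\K$. For threshold satisfiability one simply tests nonemptiness of $\A_{\varphi\sim v}$, again on the fly, which is in PSPACE in $|\langle\varphi\rangle|$ and in the description of $v$.

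The step I expect to be the main obstacle is making the threshold-annotation bookkeeping simultaneously (i) sound and complete with respect to the $[0,1]$-semantics — in particular getting the strict versus non-strict thresholds right across negations and across the supremum in $\U$, where the non-discounted case may fail to attain its value while the discounted case always does (by the event-horizon estimate above), and correctly turning ``$\le$''-constraints on $\U$ and $\U_\eta$ into release/safety obligations — and (ii) small enough: one must verify that the reachable threshold set is finite and bound its size and bit-complexity, which is exactly where $k_0(\eta,t)$ and the absence of genuine propositional quality operators are used. The remaining ingredients — the alternating-to-nondeterministic step, on-the-fly emptiness, and the product with $\K$ — are routine adaptations of the classical automata-theoretic machinery for LTL.
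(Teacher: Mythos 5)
This theorem is imported from~\cite{AlmagorBK14} and the paper gives no proof of its own, so the only meaningful comparison is with the cited source; your reconstruction --- a threshold-annotated tableau in which constraints $\sem{\pi,\psi}\ge t$ (resp.\ $\le t$) are propagated through the connectives, rescaled by $\eta(k)$ under $\U_{\eta}$, and killed once the event horizon makes the rescaled threshold exceed $1$, followed by standard on-the-fly B\"uchi emptiness --- is essentially the argument of~\cite{AlmagorBK14}, and the same template underlies this paper's own construction of $\A_{\varphi,\varepsilon}$ in \S{}4. The points you flag as delicate (strict vs.\ non-strict thresholds across negation and the possibly unattained supremum of the undiscounted $\U$, and the finiteness and bit-size of the reachable threshold set, which is where $\Fcal=\emptyset$ and $\D=\Dexp$ are used) are indeed exactly the places where the cited proof does its work, so your plan is sound.
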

%
%
\begin{thm}[\cite{AlmagorBK14}]\label{thm:averageUndecidable}
  For $\LTLd{\D,\{ \oplus \}}$ where $\D \neq \emptyset$, both the
 threshold model-checking problem and the threshold satisfiability problem are undecidable.
 \qed
\end{thm}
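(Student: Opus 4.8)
The plan is to obtain both statements from a single reduction from the halting problem for deterministic two-counter (Minsky) machines, which is undecidable. Fix such a machine $M$ with instruction set $Q$, and fix an arbitrary discounting function $\eta\in\D$ (this is the only place the hypothesis $\D\neq\emptyset$ is used). First I would fix an encoding of runs of $M$ as computations over a finite set $\AP$ of atomic propositions: a proposition $p_q$ for each instruction $q\in Q$, a proposition $\mathrm{new}$ marking the start of a configuration block, a proposition $\mathrm{sep}$ separating the two counters inside a block, and a proposition $\#$ for the unit cells of a counter. A configuration $(q,m,n)$ of $M$ is encoded by a block $p_q\,\mathrm{new}\,\#^{m}\,\mathrm{sep}\,\#^{n}$, and a run of $M$ by the infinite concatenation of such blocks (looping forever once the halting instruction is reached); thus $M$ halts iff some computation over $\AP$ encodes such a run.

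Next I would write a \emph{purely Boolean} formula $\varphi_{\mathrm{loc}}$, using only $\lnot,\land,\X,\U$, that captures all \emph{local} correctness conditions of the encoding: the first block encodes the initial configuration $(q_0,0,0)$; every block is syntactically well-formed; the instruction named in block $i+1$ is the one the transition function of $M$ prescribes from the instruction of block $i$ (zero-tests are local, since ``the counter is $0$'' just means that no $\#$ occurs before the relevant marker); and the halting instruction is eventually named. Since $\varphi_{\mathrm{loc}}$ is Boolean, $\sem{\pi,\varphi_{\mathrm{loc}}}\in\{0,1\}$, equal to $1$ exactly on computations that pass every local check; this part is entirely standard.

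The technical heart is the third step: encoding the \emph{global} counter-update constraints, namely that the number of $\#$'s in the appropriate slot of block $i+1$ equals that of block $i$ plus or minus one as dictated by the instruction of block $i$. This is the one condition that plain LTL cannot express, and it is where $\oplus$ and the discounting function $\eta$ do the work. The idea is to attach to the start of each block a sub-formula whose truth value is a real number in $[0,1]$ recording the unary counter value stored there \emph{injectively} --- obtained by a discounted read-off built from $\U_{\eta}$ and exploiting that $\eta$ is strictly decreasing with $\lim_{i\to\infty}\eta(i)=0$ --- so that an equality ``$x=y$'' between two such read-offs is expressible by requiring the value of $\psi_{x}\oplus\lnot\psi_{y}$ (with $\psi_{x},\psi_{y}$ the respective read-off formulas) to be exactly $\tfrac12$, and off-by-one variants by shifting one read-off by a cell before averaging. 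One then assembles $\varphi=\varphi_{\mathrm{loc}}\land\Phi$, where $\Phi$ is a $\G$-guarded conjunction of such $\oplus$-constraints (one per consecutive pair of blocks), and picks a rational threshold $v$ so that $\sem{\pi,\varphi}>v$ holds for some $\pi$ iff $M$ halts. I expect this step to be the main obstacle: one must make the read-off genuinely injective and the $\oplus$-comparisons sharp enough that a single violated update provably drags $\sem{\pi,\varphi}$ strictly below $v$ --- rather than leaving it close to $v$ --- which requires carefully controlling the interplay between the decay of $\eta$, the monotone arithmetic of $\oplus$, and the $\sup$/$\min$ shape of the semantics of $\U_{\eta}$; for a general non-exponential $\eta$ the encoding must moreover be arranged to use only the qualitative properties of $\eta$. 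This is exactly the leverage that is unavailable without $\oplus$, which is why Theorem~\ref{thm:thresholdModelCheckingDecidable} can hold there.

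Finally, both undecidability claims follow. For threshold satisfiability with $\mathord{\sim}$ equal to $>$: $\exists\pi.\,\sem{\pi,\varphi}>v$ iff $M$ halts, so a decision procedure for it would decide halting. For threshold model-checking, let $\K$ be the Kripke structure over $\AP$ that generates every computation; then $\sem{\K,\lnot\varphi}\ge 1-v$ iff $\forall\pi.\,\sem{\pi,\varphi}\le v$ iff $M$ does not halt, so a decision procedure for threshold model-checking would decide non-halting. Hence both problems are undecidable.
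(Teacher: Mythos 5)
This theorem is imported from \cite{AlmagorBK14} and the paper contains no proof of its own; your sketch reconstructs essentially the reduction used there --- encoding two-counter-machine computations as blocks with unary counters, plain Boolean LTL for the local consistency checks, and $\eta$-discounted read-offs compared via $\oplus$ (symmetrized with $\land$ in both directions so that a correct update yields exactly $\tfrac{1}{2}$ and any violation strictly less, which is what turns ``equals $\tfrac12$'' into a genuine threshold condition) for the counter updates, followed by the universal Kripke structure and negation to pass from satisfiability to model checking. The difficulty you flag --- making the comparison one-sided and sharp for an arbitrary strictly decreasing $\eta$, so that a single violated update provably separates the value from the threshold --- is exactly where the cited proof does its work, so the outline is sound and follows the same route.
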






\section{$[0,1]$-Acceptance B\"uchi Automata}
\label{sec:zeroOneAutomata}

Our algorithm for near-optimal scheduling relies on a certain
notion of quantitative automaton---called \emph{$[0,1]$-acceptance B\"uchi
automaton}, see Def.~\ref{def:zeroOneAcceptanceBuchi}---and an algorithm for its optimal value problem
(Lem.~\ref{lem:lassoOptimalityForQuantitativeAcceptAutom}).
The notion is not
extensively studied in the literature,   to the best of our knowledge.

In a $[0,1]$-acceptance B\"uchi
automaton 
 each state has a real value
$v\in[0,1]$,
instead of 
a Boolean value
 $b\in\{\ttrue,\ffalse\}$, of acceptance.
Note that branching is Boolean (i.e.\ nondeterministic) and not
$[0,1]$-weighted. 
 In Appendix~\ref{appendix:fuzzyAndZeroOne}
we study a relationship to so-called \emph{fuzzy automata} (see
e.g.~\cite{Rahonis05}) and
show
that adding weights to branching does not increase expressivity when it
comes to (weighted) languages.

 \begin{defi}[{$[0,1]$}-acceptance automaton]
  \label{def:zeroOneAcceptanceBuchi}
  A \emph{$[0,1]$-acceptance B\"uchi automaton}---or simply a
  \emph{$[0,1]$-acceptance automaton} henceforth---is $\A =
 (\Sigma,Q,I,\delta,F)$, where $\Sigma$ is a finite  alphabet,
 $Q$ is a finite set of states, $I \subseteq Q$ is a set of initial
 states, $\delta : Q \times \Sigma \rightarrow \left(\mathcal{P}(Q) \setminus \{ \emptyset \}\right)$ is a
 transition function and $F : Q \rightarrow [0,1]$ is 
 a function that assigns an \emph{acceptance value} to each state.
  We define the (weighted) language $\Lang(\A) : \Sigma^{\omega} \rightarrow [0,1]$ of $\A$ by
  \begin{equation}\label{eq:acceptanceZeroOneAutom}\textstyle
     \Lang(\A)(w) \;=\; \max \{ F(q) \mid \exists \rho \in \run(w).\, q
    \in \mathrm{Inf}(\rho) \} \quad\text{for each $w\in \Sigma^{\omega}$}\enspace,
  \end{equation}
 where the sets $\run(w)$ and $\mathrm{Inf}(\rho)$ are defined as
  usual. Precisely:
\begin{itemize}
 \item For an infinite word $w\in\Sigma^{\omega}$, a \emph{run} over $w$
       of $\A$ is an infinite alternating sequence
       $\rho=q_{0}a_{0}q_{1}a_{1}\dotsc$ such that: 1) $q_{i}\in Q$ is a
       state and $a_{i}\in \Sigma$ is a letter, for all $i\in \N$; 2)
       $q_{0}\in I$; and 3)
       $q_{i+1}\in \delta(q_{i},a_{i})$ for all $i\in \N$. The set of
       runs over $w$ is denoted by  $\run(w)$.
 \item Given a run $\rho$, the set
       $\mathrm{Inf}(\rho)$ is defined by
       $\mathrm{Inf}(\rho)=\{q\in Q\mid \text{$q$ occurs infinitely
       often in $\rho$}\}$.
\end{itemize}
\end{defi}
 Note that, when we restrict to Boolean acceptance values (i.e.\
$F(q)\in \{0,1\}$), 
the acceptance value in~(\ref{eq:acceptanceZeroOneAutom})
precisely coincides with the one in the usual notion of B\"{u}chi
automaton.
Note also that, in~(\ref{eq:acceptanceZeroOneAutom}), we take the
maximum of finitely many values (the state space $Q$ is finite).

\auxproof{
{$[0,1]$}-acceptance B\"uchi automata are quantitative extension of B\"uchi
automata. The extension, however, does not incur too much additional 
complexity. For example its threshold acceptance problem can be solved
by a B\"uchi automaton.

\begin{lem}
  Let $\A = (\Sigma,Q,I,\delta,F)$ be a {$[0,1]$}-acceptance B\"uchi automaton,
 $v\in[0,1]$, and $\triangleright\in\{>,\geq\}$.
 \begin{enumerate}
  \item There exists an (ordinary) B\"uchi automaton $\A_{\triangleright
	v}$ such that 
  \begin{math}
    \Lang(\A_{\triangleright v}) = \{ w \in \Sigma^{\omega} \mid \Lang(\A)(w) \triangleright  v \}
  \end{math}.
  \item Let  $F_{v} = \{ q \in Q \mid F(q) = v \}$, and
 $\A_{v} = (\Sigma,Q,I,\delta,F_{v})$  be an (ordinary) B\"uchi
	automaton. We have
  \begin{math}
    \Lang(\A)(w) = \max \{ v \in [0,1] \mid w \mbox{ is accepted by } \A_{v} \}
  \end{math}.
 \end{enumerate}
\end{lem}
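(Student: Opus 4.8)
The plan is to reduce both claims to direct unfoldings of the weighted-language definition~(\ref{eq:acceptanceZeroOneAutom}) together with the ordinary (Boolean) B\"uchi acceptance condition. Throughout, the key auxiliary observation is that for every $w\in\Sigma^{\omega}$ the set
\begin{math}
 S_{w}\;=\;\{\,F(q)\mid \exists\rho\in\run(w).\ q\in\mathrm{Inf}(\rho)\,\}
\end{math}
is a \emph{nonempty, finite} subset of $[0,1]$: it is finite because $S_{w}\subseteq F(Q)$ and $Q$ is finite; and it is nonempty because $\run(w)\neq\emptyset$ (as $\delta$ has codomain $\mathcal{P}(Q)\setminus\{\emptyset\}$, so any prefix extends, provided $I\neq\emptyset$) and because any infinite run over the finite state set $Q$ has $\mathrm{Inf}(\rho)\neq\emptyset$ by pigeonhole. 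Consequently $\Lang(\A)(w)=\max S_{w}$ is well-defined and the maximum is actually attained.

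For item~1, I would put $G=\{\,q\in Q\mid F(q)\triangleright v\,\}$ and let $\A_{\triangleright v}=(\Sigma,Q,I,\delta,G)$ be the ordinary B\"uchi automaton with accepting set $G$. Then $w\in\Lang(\A_{\triangleright v})$ iff some $\rho\in\run(w)$ has $\mathrm{Inf}(\rho)\cap G\neq\emptyset$, iff there are $\rho\in\run(w)$ and $q\in\mathrm{Inf}(\rho)$ with $F(q)\triangleright v$, iff $\exists s\in S_{w}.\ s\triangleright v$. Since $S_{w}$ is finite and nonempty, $(\exists s\in S_{w}.\ s\triangleright v)$ is equivalent to $\max S_{w}\triangleright v$ for both choices of $\triangleright$ (for ${\geq}$ because the maximum is attained, for ${>}$ because a finite set has an element exceeding $v$ exactly when its maximum does). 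Hence $\Lang(\A_{\triangleright v})=\{\,w\mid \Lang(\A)(w)\triangleright v\,\}$.

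For item~2, observe that $\A_{v}=(\Sigma,Q,I,\delta,F_{v})$ accepts $w$ iff some $\rho\in\run(w)$ has $\mathrm{Inf}(\rho)\cap F_{v}\neq\emptyset$, iff there are $\rho\in\run(w)$ and $q\in\mathrm{Inf}(\rho)$ with $F(q)=v$, iff $v\in S_{w}$. Thus $\{\,v\in[0,1]\mid w\text{ is accepted by }\A_{v}\,\}=S_{w}$, and taking maxima yields $\max\{\,v\mid w\text{ accepted by }\A_{v}\,\}=\max S_{w}=\Lang(\A)(w)$.

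I do not expect a genuine obstacle: the whole argument is bookkeeping with definitions. The only points that need a moment's care are (i) checking that $S_{w}$ is nonempty and finite so that the various maxima are genuinely attained and the case split on $\triangleright$ is uniform, and (ii) the degenerate case $I=\emptyset$, which one can either exclude by convention (as~(\ref{eq:acceptanceZeroOneAutom}) tacitly does) or dispose of by noting that then $\run(w)=\emptyset$ and both sides of each identity are empty/vacuous.
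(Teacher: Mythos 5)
Your proof is correct and takes essentially the same route as the paper: for item~1 the paper simply sets $F_{\triangleright v}=\{q\in Q\mid F(q)\triangleright v\}$ and takes $\A_{\triangleright v}=(\Sigma,Q,I,\delta,F_{\triangleright v})$, leaving the verification (and item~2) as the routine definitional unfolding you spell out. Your added care about $S_{w}$ being finite and nonempty, so that the maxima are attained, is exactly the bookkeeping the paper leaves implicit.
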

\begin{proof}
 For 1., let $F_{\triangleright v} = \{ q \in Q \mid F(q) \triangleright  v \}$ and let
 $\A_{\triangleright v} = (\Sigma,Q,I,\delta,F_{\triangleright v})$. \qed
\end{proof}
}



The following observation, though not hard, is a key fact for
our search algorithm. 
It is a quantitative
 analogue
of emptiness check in usual (Boolean) automata.
\begin{lem}[the optimal value problem for {$[0,1]$}-acceptance automata]\label{lem:lassoOptimalityForQuantitativeAcceptAutom}
  Let $\A = (\Sigma,Q,I,\delta,F)$ be a {$[0,1]$}-acceptance B\"uchi automaton.
There exists the maximum
$\max_{w \in \Sigma^{\omega}} \Lang(\A)(w)$ 
 of $\Lang(\A)$.
Moreover, 
there is an algorithm that computes 
the value $\max_{w \in \Sigma^{\omega}} \Lang(\A)(w)$ 
as well as a run $\rho_{\mathrm{max}}
=
q_{0}a_{0}q_{1}a_{1}\dotsc \in
 (\Sigma\times Q)^{\omega}
$ 
that realizes the maximum.

\end{lem}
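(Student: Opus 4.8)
The plan is to reduce the optimal value problem for a $[0,1]$-acceptance automaton to a finite collection of ordinary Büchi emptiness checks, one for each acceptance value appearing in the automaton, and then read off the maximum.

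First I would observe that the range of $F$ is a finite subset of $[0,1]$, say $\{v_1 > v_2 > \dots > v_m\}$, since $Q$ is finite. The key structural remark is that $\Lang(\A)(w) \ge v$ if and only if there is a run $\rho$ over $w$ with $\mathrm{Inf}(\rho) \cap F^{-1}(\{v' \mid v' \ge v\}) \neq \emptyset$, i.e.\ if and only if $w$ is accepted by the ordinary Büchi automaton $\A_{\ge v} := (\Sigma, Q, I, \delta, F_{\ge v})$ with Büchi set $F_{\ge v} = \{q \in Q \mid F(q) \ge v\}$. Consequently $\max_{w} \Lang(\A)(w) = \max\{v_k \mid \Lang(\A_{\ge v_k}) \neq \emptyset\}$, where the maximum over the empty set is interpreted as $0$ (more precisely, if no $\A_{\ge v_k}$ is nonempty the automaton accepts nothing; but since $\delta$ is totally defined and $I$ may or may not be empty — if $I = \emptyset$ the language is the empty function, and the ``maximum'' is vacuous; I would handle that degenerate case separately, or note the standing assumption $I \neq \emptyset$, in which case every $w \in \Sigma^\omega$ has a run and at least $\Lang(\A)(w) = \max\{F(q) \mid q \in \mathrm{Inf}(\rho)\}$ is attained for some $v_k$, so the set is nonempty). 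This establishes existence of the maximum.

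For the algorithmic part, I would run the standard nonemptiness check (lasso search: find a reachable state in $F_{\ge v_k}$ that lies on a cycle) on each $\A_{\ge v_k}$, starting from $k=1$ (largest value) and stopping at the first $k^\ast$ for which $\A_{\ge v_{k^\ast}}$ is nonempty. That $v_{k^\ast}$ is the optimal value. The witness lasso found by the emptiness check — a finite prefix $q_0 a_0 \dots q_\ell$ reaching some $q \in F_{\ge v_{k^\ast}}$ followed by a cycle $q a_\ell q_{\ell+1} \dots q$ through $q$ — unfolds into an infinite run $\rho_{\max} = q_0 a_0 q_1 a_1 \dots \in (\Sigma \times Q)^\omega$ with $q \in \mathrm{Inf}(\rho_{\max})$ and $F(q) \ge v_{k^\ast}$; minimality of $k^\ast$ gives $\Lang(\A)(\pi(\rho_{\max})) = v_{k^\ast}$, so $\rho_{\max}$ realizes the maximum. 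One should double-check that $\Lang(\A)(w) = F(q)$ exactly (not merely $\ge$) for the word $w$ underlying this run: that holds because every state in $\mathrm{Inf}(\rho_{\max})$ lies on the cycle, hence in $F_{\ge v_{k^\ast}}$ but — were any of them in a strictly higher class — $\A_{\ge v_{k^\ast - 1}}$ would already be nonempty, contradicting minimality; I would spell this out carefully.

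I do not expect a serious obstacle here; the content is essentially the observation recorded in the commented-out lemma in the excerpt together with classical Büchi emptiness. The one point requiring genuine care is the correctness of the \emph{witness}: the emptiness check only certifies $\Lang(\A)(w) \ge v_{k^\ast}$, and one must argue, using minimality of $k^\ast$ and the fact that $\mathrm{Inf}(\rho_{\max})$ consists exactly of the cycle states, that equality holds — and more importantly that the \emph{maximum} over all words is indeed $v_{k^\ast}$ rather than something larger, which is immediate since $\A_{\ge v_j}$ is empty for all $j < k^\ast$ means no word achieves a value $\ge v_j$. A secondary bookkeeping point is the degenerate case $I = \emptyset$ (or, depending on conventions, whether $\Sigma = \emptyset$), which I would dispatch in a sentence.
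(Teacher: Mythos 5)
Your proposal is correct and is essentially the paper's argument: the paper's proof also performs a lasso search, picking the state of greatest acceptance value among those reachable from an initial state and lying on a cycle, which is exactly the state your first nonempty threshold automaton $\A_{\ge v_{k^\ast}}$ detects. Your packaging as a descending sequence of ordinary B\"uchi emptiness checks (which in fact reproduces an auxiliary observation the authors state elsewhere) is only a cosmetic reorganization of the same one-pass search, and your care about the witness's exact value and the degenerate case $I=\emptyset$ is sound, if slightly more detailed than the paper's treatment.
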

\auxproof{Note that
existence of 
$\max_{w \in \Sigma^{\omega}} \Lang(\A)(w)$ (as opposed to a supremum)
is nontrivial.
}
\begin{proof}
 The algorithm is much like the one for emptiness check of (ordinary)
B\"uchi automata, searching for a suitable lasso computation. More
concretely: consider those states $q$ which are both reachable from some
initial state and reachable from $q$ itself. Let $s$ be one, among those
 states, with the
greatest acceptance value $F(s)$. It is easy to show that a lasso
computation with the state $s$ as a ``knot'' gives the run
$\rho_{\mathrm{max}} $ that we seek for.
\end{proof}


 Our algorithm first translates a formula    into an
\emph{alternating} $[0,1]$-acceptance automata. 
\begin{defi}[alternating {$[0,1]$}-acceptance automaton]\label{def:alternative_buchi}
  An \emph{alternating {$[0,1]$}-acceptance (B\"uchi) automaton} is a tuple $\A =
 (\Sigma,Q,I,\delta,F)$, where $\Sigma$ is a finite alphabet,
 $Q$ is a finite set of states, $I \subseteq Q$ is a set of initial
 states, $\delta : Q \times \Sigma \rightarrow \mathcal{B}^{+} (Q \cup
 [0,1])$ is a transition function and $F : Q \rightarrow [0,1]$ gives
 acceptance values.  Recall (\S{}\ref{sec:intro})
 that $\mathcal{B}^{+}(Q \cup [0,1])$ is the set of positive
 propositional combinations of $q\in Q$ and $v\in[0,1]$.

  We define the (weighted) language $\Lang(\A) : \Sigma^{\omega} \rightarrow [0,1]$ of $\A$ by
  \begin{equation}\label{eq:langOfAFBA}\textstyle
    \Lang(\A)(w) \;=\; \max_{\tau \in \run_{\A}(w)} \min_{\rho \in \pathrm_{\A,w} (\tau)} F^{\infty}(\rho)\enspace,
  \end{equation}
  where \emph{runs}, \emph{paths} and the function $F^{\infty}$ are
 formally defined much like with the usual alternating
 automata. Precisely:
 \begin{itemize}
  \item A run is much like with the usual alternating
 automata. Precisely, let $\A=(\Sigma,Q,I,\delta,F)$ be an alternating
	$[0,1]$-acceptance automaton and 
 $w=a_{0}a_{1}\dotsc\in \Sigma^{\omega}$ be an infinite word.
	A \emph{run} $\tau$ of  $\A$ over 
 $w$ is a (possibly infinite-depth) tree subject to the following.
	\begin{itemize}
	 \item Each node $t$ of the tree $\tau$ is labeled from $Q\cup
	       [0,1]$.
\auxproof{	       That is, either by a state $q\in Q$
	       or a number $v\in[0,1]$.
}
	 \item The root of $\tau$ is labeled with an initial state
	       $q_{0}\in I$.
	 \item Any node $t$ labeled with a number $v\in[0,1]$ is a leaf.
	 \item Consider an arbitrary node $t$ that is labeled 
	with a state $q\in Q$. Assume that $t$ is of depth $i\in\N$; and 
	       let the labels of $t$'s children be $l_{1},\dotsc, l_{k}\in Q\cup[0,1]$.
We require  
	       $l_{1},\dotsc,
	l_{k}\models
	\delta(q,a_{i})$, where: $\delta(q,a_{i})\in\mathcal{B}^{+} (Q \cup
 [0,1])$ is the
	       $a_{i}$-successor of $q$ in $\A$; and $\models$
	       designates the obvious Boolean notion of satisfaction
	       (where we think of elements of $Q \cup
 [0,1]$ as atomic variables).
	\end{itemize}
	The set $\run_{\A}(w)$ is that of all runs of $\A$ over the word
	$w$.
  \item A \emph{path} $\rho$ of a run $\tau\in\run_{\A}(w)$ is simply a (finite or
	infinite) path in the tree $\tau$, from the root of $\tau$. 
	A path $\rho$ is finite only
	when its last state is a leaf of $\tau$.
	The set of paths of $\tau\in\run_{\A}(w)$ is denoted by $\pathrm_{\A,w} (\tau)$.
  \item The function $F^{\infty}\colon \pathrm_{\A,w}(\tau)\to [0,1]$
 in~(\ref{eq:langOfAFBA}) is defined as follows. 
	If $\rho\in\pathrm_{\A,w}(\tau)$ is an
	infinite path,  each node $t$ in $\rho$ is labeled with
	a state $q$ of $\A$. We define
 \begin{equation}\label{eq:langOfAFBA1}
  F^{\infty}(\rho)
  \;=\;
  \max\{\,F(q)\,\mid\,\text{$q\in Q$ occurs infinitely often, as labels,
  in $\rho$}\,\}\enspace.
 \end{equation}
	Assume now that $\rho\in\pathrm_{\A,w}(\tau)$ is finite, say
	$\rho=t_{0}t_{1}\dotsc t_{i}$. Then the last node $t_{i}$
	is labeled either by $v\in[0,1]$ or $q\in [0,1]$.
	In the former case we define
	\begin{math}
	 F^{\infty}(\rho)
	 =
	 v
	\end{math} 
	(i.e.\ $F^{\infty}$ returns the label of $t_{i}$).
 In the latter
	case,
	we have that $\delta(t_{i}, a_{i})$ is
	propositionally equivalent to $\top$ (``truth'') by the
	definition of run. We define
	\begin{math}
	 F^{\infty}(\rho)
	 =
	 1
	\end{math}.
 \end{itemize}
%
\end{defi}
In the above we used $\max$ and $\min$ (not $\sup$
or $\inf$) since $\{F(q)\mid q\in Q\}$ is a finite set.

\begin{prop}\label{prop:ABAtoNBA}
  Let $\A = (\Sigma,Q,I,\delta,F)$ be an alternating
 {$[0,1]$}-acceptance  automaton. There exists a {$[0,1]$}-acceptance
 automaton $\A'$ such that $\Lang(\A) = \Lang(\A')$. 
\qed
\end{prop}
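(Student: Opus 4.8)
The plan is to mimic the classical Miyano--Hayashi construction that converts an alternating Büchi automaton into a nondeterministic one, tracking the acceptance-value layer carefully so that the $\max$-over-runs/$\min$-over-paths value in~(\ref{eq:langOfAFBA}) is preserved exactly. First I would fix a threshold $v$ and observe, as in the auxiliary lemma sketched above, that $\Lang(\A)(w)\ge v$ iff there is a run $\tau\in\run_{\A}(w)$ all of whose infinite paths $\rho$ have $F^{\infty}(\rho)\ge v$ and all of whose finite paths end in a leaf with value $\ge v$. For each fixed $v$ from the finite set $V=\{F(q)\mid q\in Q\}\cup\{1\}$, this is exactly the acceptance condition of an ordinary alternating Büchi automaton $\A_{\ge v}$: take the same $\Sigma$, $Q$, $I$, $\delta$, let the Büchi-accepting set be $\{q\in Q\mid F(q)\ge v\}$, and interpret a number $u\in[0,1]$ appearing in a transition formula as $\top$ if $u\ge v$ and as $\bot$ otherwise (so that a finite path surviving to a numeric leaf contributes $\ge v$ exactly when that number is $\ge v$). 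By the standard Miyano--Hayashi theorem each $\A_{\ge v}$ has an equivalent ordinary nondeterministic Büchi automaton $\mathcal{B}_{v}=(\Sigma,Q_{v},I_{v},\delta_{v},F_{v})$ with $\Lang(\mathcal{B}_{v})=\{w\mid \Lang(\A)(w)\ge v\}$.

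Next I would assemble the $[0,1]$-acceptance automaton $\A'$ as (essentially) the disjoint union of the $\mathcal{B}_{v}$ over $v\in V$, with the acceptance-value function on $\A'$ defined so that a state originating from $\mathcal{B}_{v}$ receives value $v$. Concretely, set $Q'=\bigsqcup_{v\in V}Q_{v}$, $I'=\bigsqcup_{v\in V}I_{v}$, $\delta'$ the obvious union (each summand keeps its own transition function, staying within its copy), and $F'(q)=v$ whenever $q\in Q_{v}$. Since $V$ is finite (as $Q$ is finite), $\A'$ is a finite automaton. Now for a word $w$, a run of $\A'$ lives entirely inside one summand $\mathcal{B}_{v}$, and by~(\ref{eq:acceptanceZeroOneAutom}) the value $\Lang(\A')(w)$ is the maximum over all $v\in V$ for which $w$ has an accepting run of $\mathcal{B}_{v}$, i.e.\ $\Lang(\A')(w)=\max\{v\in V\mid \Lang(\A)(w)\ge v\}$. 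It remains to check this equals $\Lang(\A)(w)$: by Proposition~\ref{lem:maxTruthValOfDiscountedUntil}-style reasoning, the value $\Lang(\A)(w)$ defined by~(\ref{eq:langOfAFBA}) is always attained and lies in $V$ — indeed an infinite path contributes some $F(q)\in V$ by~(\ref{eq:langOfAFBA1}), a finite path ending in $q\in Q$ contributes $1\in V$, and a finite path ending at a numeric leaf labelled $u$ contributes $u$, but the outermost $\max_{\tau}\min_{\rho}$ of these quantities is still realised by one of the finitely many $F(q)$ or by $1$ whenever the optimal run exists, which it does since there are only finitely many "relevant" thresholds to check. Hence $\Lang(\A)(w)=\max\{v\in V\mid\Lang(\A)(w)\ge v\}=\Lang(\A')(w)$.

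The main obstacle — and the point deserving the most care — is the treatment of the numeric leaves and of finite paths in the definition of $F^{\infty}$, since these have no counterpart in the purely Boolean Miyano--Hayashi setting. One has to be precise that when $v$ is the current threshold, a transition formula in $\A$ of the shape, say, $q_{1}\lor u$ with $u\in[0,1]$ should, in $\A_{\ge v}$, become $q_{1}\lor\top=\top$ if $u\ge v$ and $q_{1}\lor\bot=q_{1}$ if $u<v$; this substitution must be applied uniformly to $\delta(q,a)\in\mathcal{B}^{+}(Q\cup[0,1])$ so that the Boolean satisfaction relation "$l_{1},\dots,l_{k}\models\delta(q,a_{i})$" for $\A_{\ge v}$ matches "those children of value $\ge v$ in the run of $\A$". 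A secondary subtlety is justifying that $\Lang(\A)(w)$ is genuinely a maximum (the statement of~(\ref{eq:langOfAFBA}) already writes $\max$/$\min$, and the paragraph after Definition~\ref{def:alternative_buchi} notes $\{F(q)\mid q\in Q\}$ is finite, so this is really just spelling out that only finitely many threshold values $V$ need be considered and for each the alternating Büchi acceptance is well-defined). Once these bookkeeping points are settled the rest is routine invocation of the classical construction.
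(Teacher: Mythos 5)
Your threshold-slicing strategy---build an ordinary alternating B\"uchi automaton $\A_{\ge v}$ for each candidate value $v$, apply the classical Miyano--Hayashi theorem to each, and take a disjoint union with acceptance values $v$---is a genuinely different route from the paper's, which performs a single Miyano--Hayashi-like construction carrying the quantitative data along (sets of state/accumulated-value pairs, a register for the minimum over numeric leaves, and an exposition flag). Your route is workable in principle and leans on a black-box classical theorem, but as written it contains two concrete errors. First, the threshold set $V=\{F(q)\mid q\in Q\}\cup\{1\}$ is too small, and your claim that the value $\max_{\tau}\min_{\rho}F^{\infty}(\rho)$ ``is still realised by one of the finitely many $F(q)$ or by $1$'' is false: take $Q=\{q_{0}\}$ with $F(q_{0})=0$ and $\delta(q_{0},a)=\frac{1}{2}$ for every $a$. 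Then every run is a root with a single numeric leaf, so $\Lang(\A)(w)=\frac{1}{2}$, whereas $\max\{v\in V\mid \Lang(\A)(w)\ge v\}=0$. You must enlarge $V$ to include the finitely many numbers occurring in the transition formulas; with that repair the slicing does capture the exact value.

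Second, setting $F'(q)=v$ for \emph{every} state $q\in Q_{v}$ erases the B\"uchi condition of $\mathcal{B}_{v}$: since $Q_{v}$ is finite, every infinite run inside the $\mathcal{B}_{v}$-component has some state occurring infinitely often, so by~(\ref{eq:acceptanceZeroOneAutom}) \emph{any} run of $\mathcal{B}_{v}$ on $w$---accepting or not---already contributes $v$, and $\Lang(\A')(w)$ collapses to $\max\{v\mid w \text{ admits some run in } \mathcal{B}_{v}\}$ rather than $\max\{v\mid w\in\Lang(\mathcal{B}_{v})\}$. The intended construction requires $F'(q)=v$ only on the B\"uchi-accepting states of $\mathcal{B}_{v}$ and $F'(q)=0$ elsewhere (and a value-$0$ sink state so that $\delta'$ never returns $\emptyset$, as Def.~\ref{def:zeroOneAcceptanceBuchi} demands). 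With these two fixes, plus the routine verification that $\Lang(\A)(w)\ge v$ is equivalent to alternating B\"uchi acceptance of $\A_{\ge v}$ (which you argue correctly), the proof goes through.
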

The  construction of $\A'$ is a quantitative adaptation of the
 one in~\cite{MiyanoH84} that turns an alternating $\omega$-automaton into
a  nondeterministic one. In our adaptation we use what we call
 \emph{exposition flags}, an idea that is potentially useful in other
 settings with B\"uchi-type acceptance conditions, too. 
 See Appendix~\ref{pf:lemABAtoNBA} for details of the proof and the
 construction therein.

Later we will also use the fact that
$[0,1]$-acceptance automata are closed under monotone propositional
quality operators (Def.~\ref{def:propositionalQualityOperator}).  

\begin{prop}\label{prop:ClosedUnderIncOperator}
  Let $f \colon [0,1]^{k} \to [0,1]$ be monotone, and
$\A_{1},\dotsc,\A_{k}$ be
 $[0,1]$-acceptance automata
 over a common alphabet $\Sigma$. There is a $[0,1]$-acceptance
 automaton $f(\A_{1},\ldots,\A_{k})$ such that
 $\Lang\bigl(f(\A_{1},\ldots,\A_{k})\bigr) (w) =
 f\bigl(\Lang(\A_{1})(w),\ldots,\Lang(\A_{k})(w)\bigr)$ for each $w \in
 \Sigma^{\omega}$. \qed
\end{prop}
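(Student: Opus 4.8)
The plan is to build $f(\A_1,\dots,\A_k)$ as a product-style construction that runs all $k$ automata $\A_i = (\Sigma, Q_i, I_i, \delta_i, F_i)$ in parallel on the same input word $w$, and reads off its acceptance value by applying $f$ to the $k$ component acceptance values. The key obstacle is that in a $[0,1]$-acceptance automaton the value of a run is $\max\{F_i(q)\mid q\in\mathrm{Inf}(\rho_i)\}$, i.e.\ it depends on the limit behaviour of each component run, so a naive product over state spaces does not directly let us read off $\max_{\rho_i}$ for each $i$ independently — in a product run the $i$-th projection is some run of $\A_i$, but we must be able to \emph{choose} the best run in each component simultaneously, and the $\max$ over runs in the definition of $\Lang(f(\A_1,\dots,\A_k))$ must coordinate all of them. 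Because branching is nondeterministic (not weighted), this coordination is in fact free: a run of the product is exactly a $k$-tuple of runs of the components over the common $w$, so $\max$ over product runs factors as $\max$ over each component run independently, and monotonicity of $f$ will let us commute $f$ with these maxima.

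Concretely, I would first take the state space $Q = Q_1\times\cdots\times Q_k$, initial states $I = I_1\times\cdots\times I_k$, and transition function $\delta\big((q_1,\dots,q_k),a\big) = \{(q_1',\dots,q_k')\mid q_i'\in\delta_i(q_i,a)\text{ for all }i\}$, which is nonempty since each $\delta_i(q_i,a)$ is. The remaining issue is the acceptance values: since $\max\{F_i(q)\mid q\in\mathrm{Inf}(\rho_i)\}$ is not a function of $\mathrm{Inf}$ of the product run alone in a way a single $F\colon Q\to[0,1]$ can capture (the set of limit states of the product can project onto different subsets of each $Q_i$, but $F_i$ restricted to that projection still has a well-defined max), I would augment each coordinate to record the \emph{maximum acceptance value seen so far} in that component. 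That is, take $Q = \prod_i (Q_i\times V_i)$ where $V_i = F_i(Q_i)\subseteq[0,1]$ is the finite set of acceptance values of $\A_i$; the $V_i$-component starts at $F_i(q_i^0)$ and on a transition to $q_i'$ updates to $\max\{v_i, F_i(q_i')\}$. Along any infinite run the $V_i$-component is non-decreasing and eventually stabilises, and its stable value is exactly $\max\{F_i(q)\mid q\in\mathrm{Inf}(\rho_i)\}$; so on the set $\mathrm{Inf}(\rho)$ of the product run the tuple $(v_1,\dots,v_k)$ is constant, and I set $F\big((q_1,v_1),\dots,(q_k,v_k)\big) = f(v_1,\dots,v_k)$.

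It then remains to verify $\Lang(f(\A_1,\dots,\A_k))(w) = f(\Lang(\A_1)(w),\dots,\Lang(\A_k)(w))$. A run $\rho$ of the product over $w$ determines, by projection and forgetting the $V_i$-components, runs $\rho_1,\dots,\rho_k$ of the $\A_i$ over $w$, and conversely any such $k$-tuple arises from a unique product run; moreover, as noted, the value of the product run is $f(v_1^\infty,\dots,v_k^\infty)$ where $v_i^\infty = \max\{F_i(q)\mid q\in\mathrm{Inf}(\rho_i)\}$ is the value of $\rho_i$. Hence
\begin{displaymath}
\Lang(f(\A_1,\dots,\A_k))(w) = \max_{(\rho_1,\dots,\rho_k)} f\big(v_1^\infty,\dots,v_k^\infty\big).
\end{displaymath}
Since the choices of $\rho_1,\dots,\rho_k$ are independent (all over the same $w$) and $f$ is monotone, the maximum distributes over the coordinates: $\max_{\rho_i} f(\dots) = f(\dots,\max_{\rho_i} v_i^\infty,\dots)$, and $\max_{\rho_i} v_i^\infty = \Lang(\A_i)(w)$ by definition. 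Carrying this out coordinate by coordinate gives the claim. (All maxima here are over finite sets — finitely many runs modulo their eventual behaviour, or equivalently the finitely many achievable tuples $(v_1,\dots,v_k)$ — so they genuinely exist.) The main subtlety to get right in the write-up is the argument that $v_i^\infty$ equals the eventual stable value of the $V_i$-register along the run, and that monotonicity is exactly what is needed to pull $f$ outside the independent maxima; everything else is bookkeeping on the product automaton.
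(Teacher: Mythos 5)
Your overall architecture---a synchronous product whose runs are exactly $k$-tuples of component runs, a per-component register $v_{i}$ accumulating acceptance values, and monotonicity of $f$ to pull $f$ outside the independent maxima---is the right one and is essentially what the paper does. But there is a genuine gap in the key step where you claim that the eventual stable value of the register $v_{i}$ equals $\max\{F_{i}(q)\mid q\in\mathrm{Inf}(\rho_{i})\}$. Your register is initialized to $F_{i}(q_{i}^{0})$ and only ever updated by $v_{i}\mapsto\max\{v_{i},F_{i}(q_{i}')\}$, so it stabilizes at $\max\{F_{i}(q)\mid q\ \text{occurs in}\ \rho_{i}\}$, the maximum over \emph{all} states visited, including transient ones. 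A single high-valued state visited once in the prefix permanently contaminates the register. Concretely, take $k=1$, $f=\mathrm{id}$, and $\A_{1}$ with states $q_{0},q_{1}$, $F_{1}(q_{0})=1$, $F_{1}(q_{1})=0$, initial state $q_{0}$, and the only transitions $q_{0}\to q_{1}\to q_{1}$: every run has $\mathrm{Inf}(\rho_{1})=\{q_{1}\}$, so $\Lang(\A_{1})(w)=0$, but your product automaton outputs $1$. So the construction itself (not just its justification) overestimates the value, and monotonicity of $f$ only gives you the inequality $\geq$, not the needed equality.

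The missing idea is a mechanism that forces the registers to reflect only states occurring infinitely often. The paper does this with what it calls an \emph{exposition flag} $b\in\{\ffalse,\ttrue\}$ added to the product state (a quantitative analogue of the Miyano--Hayashi breakpoint construction): the acceptance value $f(v_{1},\ldots,v_{k})$ is awarded only at states with $b=\ttrue$, and at such states all registers are reset to the current $F_{i}(q_{i}')$ rather than accumulated further. A run can then nondeterministically postpone exposition until every register has acquired its true value $\Lang(\A_{i})(w)$ from the recurrent part of $\rho_{i}$, and expose infinitely often thereafter, which yields the lower bound; the upper bound follows because any value exposed infinitely often is of the form $f(v_{1},\ldots,v_{k})$ with each $v_{i}$ realized by states that $\rho_{i}$ visits infinitely often (since the registers are reset at each exposition), whence $v_{i}\le\Lang(\A_{i})(w)$ and monotonicity of $f$ applies. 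Your write-up needs this reset-and-expose mechanism (or an equivalent way of discarding the transient prefix) to be correct.
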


\begin{rem}
 Prop.~\ref{prop:ABAtoNBA} and~\ref{prop:ClosedUnderIncOperator} are
 essentially two separate constructions that deal with: the connectives
 $\land$ and $\lor$; and the other propositional quality operators, respectively. 
 One can alternatively think of $\land$ and $\lor$ as special cases of the latter 
 (Example~\ref{exam:propositionalQualityOperator}) and  use
 Prop.~\ref{prop:ClosedUnderIncOperator} altogether. 
This however results in a worse
complexity: the powerset-like construction in Prop.~\ref{prop:ABAtoNBA}
 exploits the commutativity, idempotency and associativity of $\land$ to 
 suppress the number of states, while such  is not done in the
 product-like
 construction in Prop.~\ref{prop:ClosedUnderIncOperator}.
\end{rem}

A generalization of $[0,1]$-acceptance automaton
  is naturally obtained by making transitions also
$[0,1]$-weighted. The result is called \emph{fuzzy automaton}
  and studied e.g.\ in~\cite{Rahonis05}.
  In Appendix~\ref{appendix:fuzzyAndZeroOne} we show that this generalization does not add expressivity. In
  fact we prove a more general result there, parametrizing $[0,1]$
  into a suitable semiring 
  $\mathbb{K}$.

\section{Near-Optimal Scheduling for $\LTLd{\D,\Fcalmc}$ }
\label{sec:nearOptimalSchedulerSynth}
In~\cite{AlmagorBK14,AlmagorBK13ExtendedPreprint} the threshold
model-checking problem for the logic $\LTLd{\D,\Fcal}$
is studied. 
In this paper, instead, we are interested in the following problem:  what
path of a given Kripke structure $\K$ is the best for a given
$\LTLd{\D,\Fcal}$ formula $\varphi$.

\begin{wrapfigure}[2]{r}{0pt}%
\begin{math}
\entrymodifiers={+[Fo]}
\def\labelstyle{\textstyle}
  \vcenter{\xymatrix@1@C-1em{
  { s_{0} } 
    \ar[r]
	\ar@(ur,ul) 
\save[]+<0cm,-.45cm>*{\lnot p}
 \restore
 &
   { s_{1} } %
     \ar[r]
\save[]+<0cm,-.45cm>*{p}\restore
 &
   { s_{2} } %
     \ar@(ur,ul)
\save[]+<0cm,-.45cm>*{\lnot p}\restore
}}
\end{math}
\end{wrapfigure}
 In general, however, there does not exist an optimal path $\xi_{0}$
 of $\K$, i.e.\ one that achieves $\sem{
 \xi_{0}, \varphi } = \sup_{\xi \in \pathrm(\K)} \sem{
 \xi, \varphi }$. 

\begin{exam}[optimality not achievable]\label{exam:optimalityNotAchievable}
Take a formula $\varphi = \Gdisc{\eta} \F p$
 and the Kripke structure $\K$ shown in the above.  
This example illustrates that the existence of an optimal path is 
not guaranteed in general: indeed, whereas $\sup_{\xi' \in \pathrm(\K)} \sem{
 \xi', \varphi } = 1$ in this example, there is no path $\xi$  that achieves $\sem{
 \xi, \varphi }=1$.

 More specifically: we first
 note that, in each 
 path $\xi$ of the Kripke structure,  $p$ is true
 at most once. The later the state $s_{1}$ occurs in a path $\xi$, the bigger the
 truth value $\sem{
 \xi, \varphi }$ is;  moreover the value $\sem{
 \xi, \varphi }$ tends to $1$ (since $\eta$ tends to $0$). 
 However there is no path $\xi$ that achieves exactly $\sem{
 \xi, \varphi }=1$: if $p$ is postponed indefinitely, no state in
 $\xi$ satisfies $p$, in which case $\F p$ is 
 everywhere false and hence $\sem{
 \xi, \varphi }=0$.
\end{exam}

 We thus strive for \emph{near}-optimality, allowing
a prescribed margin $\varepsilon$.


\begin{defi}
\label{def:nearOptimalPathSynthesis}
 The \emph{near-optimal scheduling} problem for $\LTLd{\D,\Fcal}$ is:
 given a Kripke structure $\K = (W,R,\lambda)$, an $\LTLd{\D,\Fcal}$ formula $\varphi$ and a positive real number $\varepsilon \in (0,1)$, to
find a path $\xi_{0} \in \pathrm(\K)$ such that $\sem{ \xi_{0}, \varphi } \geq \sup_{\xi \in \pathrm(\K)} \sem{ \xi, \varphi} - \varepsilon$. 
\end{defi}
Ultimately we will show that the problem in the above is decidable (Thm.~\ref{thm:main}), when all
the propositional quality operators are monotone and continuous
($\Fcal\subseteq\Fcalmc$).

We first note that, in the special case for 
$\LTLd{\D,\emptyset}$ (i.e.\ no propositional quality operators), there is a straightforward binary search algorithm
that 
relies on the (threshold) model-checking algorithm
in~\cite{AlmagorBK14} (Thm.~\ref{thm:thresholdModelCheckingDecidable}). Specifically, the binary search algorithm
repeatedly conducts threshold model-checking for:
 the threshold $v=\frac{1}{2}$ in the first round;  $v=\frac{1}{4}$ or
 $\frac{3}{4}$ in the second round,
       depending on the outcome of the first round;  then
       for $v=\frac{1}{8},\dotsc,\frac{6}{8}$ or $
       \frac{7}{8}$, depending on the outcome of the second round; and so on.  
Given a margin
       $\varepsilon\in(0,1)$,  this way, we need  $-\log
       \varepsilon$ rounds. 
This binary search algorithm is rather effective (see~\S{}\ref{sec:experiments}).

However the binary search algorithm does not work in presence of the
average operator $\oplus$, simply because the threshold model-checking
problem is undecidable (Thm.~\ref{thm:averageUndecidable}). Our main
contribution is a novel algorithm for near-optimal scheduling
that works even in this case (and more generally for the logic
$\LTLd{\D,\Fcalmc}$). 
Our algorithm  first translates a formula $\varphi$ and a margin
$\varepsilon\in (0,1)$ to an
alternating
$[0,1]$-acceptance automaton $\A_{\varphi,\varepsilon}$, which is further
turned into a $[0,1]$-acceptance automaton 
(Prop.~\ref{prop:ABAtoNBA}). The resulting automaton---after  taking the
product  
with $\K$---is amenable to optimal value search
(Lem.~\ref{lem:lassoOptimalityForQuantitativeAcceptAutom}), yielding
a solution to the original problem.



In the rest of the section we describe our algorithm.
We shall however  first
  restrict to the logic 
$\LTLd{\D,\emptyset}$ for the sake of presentation (although this basic fragment allows binary
search). After describing
the basic algorithm for $\LTLd{\D,\emptyset}$
in~\S{}\ref{subsec:algorithmWithoutPropositional},
in~\S{}\ref{subsec:algorithmWithPropositional} we explain how it can be
modified to accommodate propositional quality operators.

\subsection{Our Algorithm, When Restricted to
  $\LTLd{\D,\emptyset}$}\label{subsec:algorithmWithoutPropositional}
Our translation of  $\varphi$ and
$\varepsilon\in (0,1)$ to an automaton 
$\A_{\varphi,\varepsilon}$ is  an extension of the standard
translation from LTL formulas to alternating B\"{u}chi automata (e.g.\
in~\cite{Vardi96anautomata-theoretic}), with: 
\begin{itemize}
 \item  incorporation of
quantities---accumulation of discount factors, more specifically---by
	means of what we call \emph{discount sequences}; and
 \item cutting off those events which are  far in the future---the idea of \emph{event horizon} from~\cite{AlmagorBK14}.
\end{itemize}
The extension is not complicated on the conceptual level. Its
details need care, however, especially in handling negations and alternation of
greatest and least fixed points. 

As preparation, we recall some definitions and notations from~\cite{AlmagorBK14}. 
\begin{defi}[$\eta^{+k}$, $\mathit{xcl} (\varphi)$~\cite{AlmagorBK14}]
 Let $\eta : \N
 \rightarrow [0,1]$ be a discounting function. We define a discounting
 function $\eta^{+k} : \N \rightarrow [0,1]$ by
 \begin{math}
   \eta^{+k} (i) =
  \eta(i+k)
 \end{math} 
for each $k\in\N$.

For an $\LTLdD$ formula $\varphi$, the
 \emph{extended closure} $\mathit{xcl} (\varphi)$ of
 $\varphi$~\cite{AlmagorBK14} is defined by 
\begin{equation}\label{eq:defXcl}
 \mathit{xcl}(\varphi) \;=\; \Sub(\varphi) \cup \{ \varphi_{1} \U_{\eta^{+k}}
 \varphi_{2} \mid k \in \N, \varphi_{1} \U_{\eta} \varphi_{2} \in
 \Sub(\varphi) \}\enspace,
\end{equation}
where $\Sub(\varphi)$ denotes the set of subformulas of $\varphi$.
\end{defi}

\subsubsection{Discounting Sequences}\label{subsubsec:discountSeq}
 We go on to technical details.
 In the alternating $[0,1]$-acceptance automaton
 $\A_{\varphi,\varepsilon}$ that we shall construct, a state is a pair 
 $(\psi,\vec{d})$ of a formula $\psi$ and a \emph{discount sequence} 
 $\vec{d}\in[0,1]^{+}$.
\begin{defi}[discount sequence]\label{def:discountSequence}
 A \emph{discount sequence} is a sequence $\vec{d}=d_{1}d_{2}\dotsc d_{n}\in
 [0,1]^{+}$ of real numbers with a nonzero length ($d_{i}\in [0,1]$ for
 each $i$). 
\end{defi}
The notion of discount sequence is a quantitative extension of that of
\emph{priority} in parity automata. Specifically, the length $n$ of a
discount sequence $\vec{d}=d_{1}d_{2}\dotsc d_{n}$ corresponds to a
priority---i.e.\ the alternation depth of greatest and least fixed
points.
Each real number $d_{i}$ in the sequence, in turn, stands for the accumulated 
discount factor in each level of fixed-point alternation. 
For example,  the formula 
 $
\Fdisc{\expo_{\frac{1}{2}}}
\Gdisc{\expo_{\frac{2}{3}}}
\Fdisc{\expo_{\frac{3}{4}}}
p
$  will induce a discount sequence
$
(\frac{1}{2})^{n_{1}},
(\frac{2}{3})^{n_{2}},
(\frac{3}{4})^{n_{3}}
$
 of
 length 3---where $n_{1},n_{2}$ and $n_{3}$ are the numbers of steps for
 which 
the three discounting temporal operators
$\Fdisc{\eta_{1}}$, 
$\Gdisc{\eta_{2}}$ and 
$\Fdisc{\eta_{3}}$
 ``have waited,'' respectively.

We use three operators $\odot, \concatseq, \boxtimes$ that act on discount
sequences; the 
intuitions are as follows. The first two are for accumulating discount
factors: we use $\odot$
in case there is no alternation of greatest and least fixed points; and
we use
$\concatseq$ in case there is. Examples are:
\begin{displaymath}\textstyle
\begin{array}{lcl}
 \bigl(\,
  (\frac{1}{2})^{2},
 (\frac{2}{3})^{3},
 \frac{3}{4}
 \,\bigr)
 \;\odot\; 
 \color{blue}
\frac{4}{5}
 \color{black}
 &=&
 \bigl(\,
 (\frac{1}{2})^{2},
 (\frac{2}{3})^{3},
 \frac{3}{4}\cdot
 \color{blue}
 \frac{4}{5}
 \color{black}
 \,\bigr)
 \;=\;
 \bigl(\,
 (\frac{1}{2})^{2},
 (\frac{2}{3})^{3},
 \frac{3}{5}
 \,\bigr)
 \quad\text{and}
 \\
 \bigl(\,(\frac{1}{2})^{2},
 (\frac{2}{3})^{3},
 \frac{3}{4}
 \,\bigr)\;\concatseq\; 
 \color{blue}
\frac{4}{5}
 \color{black}
 &=&
\bigl(\,
 (\frac{1}{2})^{2},
 (\frac{2}{3})^{3},
 \frac{3}{4},\,
 \color{blue}
 \frac{4}{5}
 \color{black}
 \,\bigr)\enspace.
\end{array}
\end{displaymath}
Note that in the former the length is preserved, while in the latter the
sequence gets longer by one.
\begin{defi}[$\vec{d}\odot d'$,  $\vec{d}\concatseq d'$]\label{def:operationsOnDiscountSeqOdotAndCdot}
 The  operator $\odot$ 
takes a discount sequence $\vec{d}$
 and a discount factor $d'\in[0,1]$ as arguments, and multiplies
 the last element of  $\vec{d}$  by $d'$. That is,
\begin{displaymath}
 (d_{1}d_{2}\dotsc d_{n})\odot d'
 \;=\;
 d_{1}d_{2}\dotsc d_{n-1} (d_{n}\cdot d')
\quad\in [0,1]^{+}\enspace.
\end{displaymath}
The operator $\concatseq$ is simply the concatenation operator: given 
$\vec{d}=d_{1}d_{2}\dotsc d_{n}$
 and  $d'\in[0,1]$, the sequence $\vec{d}\concatseq d'$ is $d_{1}d_{2}\dotsc
 d_{n} d'$ of length $n+1$.
\end{defi}

We use the operator $\boxtimes$ in $\vec{d}\boxtimes v$ to let a
discount sequence $\vec{d}$ act on a truth value $v\in[0,1]$. 
\begin{defi}[$\vec{d}\boxtimes v$]\label{def:operationsOnDiscountSeq}
 The operator $\boxtimes$ takes $\vec{d}\in[0,1]^{+}$
 and $v\in[0,1]$ as
 arguments. The value $\vec{d}\boxtimes v\in[0,1]$ is defined inductively by:
 \begin{align}\label{eq:inductiveDefOfAction}
  &d\boxtimes v
  \;=\; d v\enspace, 
  \qquad
  \vec{d}d'\boxtimes v
  \;=\;
  \vec{d}\boxtimes (1-d' v)\enspace. \quad\text{Explicitly:}
 \\
\label{eq:actionExplicitly}
&\footnotesize
\begin{array}{l}
   (d_{1}d_{2}\dotsc d_{n})\boxtimes v
 \;=\;
 d_{1}\,- d_{1}d_{2} \,+ d_{1}d_{2}d_{3} \,- \cdots 
 \,+ (-1)^{n} d_{1}d_{2}\dotsc d_{n-1}
 \,+ (-1)^{n+1} d_{1}d_{2}\dotsc d_{n} v\,.
\end{array}
 \end{align}
\end{defi}

The intuition behind the action $\vec{d}\boxtimes v$ is most visible
in~(\ref{eq:inductiveDefOfAction}), where $dv$ and $d'v$ denote
multiplication of real numbers. Given a discount sequence
$\vec{d}d'$:  1) we apply the final discount factor $d'$ to 
the truth value $v$, obtaining $d'v$; 2) the alternation between 
greatest and least fixed points is taken into account, by taking
the negation $1-d' v$ (cf.\ Def.~\ref{def:LTLdDSemantics}); and
 3)
we apply the remaining sequence $\vec{d}$ inductively and obtain $\vec{d}\boxtimes
(1-d' v)$. An example is
\begin{math}\textstyle
\bigl(
\frac{3}{4},\,
 \frac{1}{3},\,
 \frac{2}{5}
\bigr)\boxtimes 1
=
\bigl(
\frac{3}{4},\,
 \frac{1}{3}
\bigr)\boxtimes 
\bigl(1-\frac{2}{5}\cdot 1\bigr)
=
\bigl(
\frac{3}{4},\,
 \frac{1}{3}
\bigr)\boxtimes \frac{3}{5}
=
\bigl(
\frac{3}{4}
\bigr)\boxtimes
\bigl(
 1-\frac{1}{3}\cdot\frac{3}{5}
\bigr)
=
\bigl(
\frac{3}{4}
\bigr)\boxtimes \frac{4}{5}
=
\frac{3}{5}
\end{math}.

 The following  relationship between $\odot$
and $\boxtimes$ is easily seen to hold:
\begin{equation}\label{eq:odotAndBoxtimes}
 (\vec{d}\odot d')\boxtimes v
 \;=\; \vec{d}\boxtimes (d'\cdot v)\enspace.
\end{equation}

The three operators $\odot, \concatseq, \boxtimes$ defined in the above 
will be used shortly, in the construction of the automaton
$\A_{\varphi,\varepsilon}$. Their roles are briefly discussed after Def.~\ref{def:ABAforLTL}.

\subsubsection{Construction of $\A_{\varphi,\varepsilon}$}
 We  describe the construction of $\A_{\varphi,\varepsilon}$,
for a formula $\varphi$ of $\LTLd{\D,\emptyset}$ and a margin
$\varepsilon$.  We subsequently discuss  ideas
behind it, comparing the definition with other known constructions.

We first define 
$\Ap_{\varphi,\varepsilon}$ that is infinite-state, and obtain 
$\A_{\varphi,\varepsilon}$ as the reachable part. The latter will be shown to
be finite-state (Lem.~\ref{lem:AphiEpsIsFiniteState}).

\begin{defi}[the automata $\Ap_{\varphi,\varepsilon}, \A_{\varphi,\varepsilon}$]\label{def:ABAforLTL}
  Let $\varphi$ be an $\LTLd{\D,\emptyset}$  formula and $\varepsilon\in (0,1)$. We define an alternating {$[0,1]$}-acceptance automaton
 $\Ap_{\varphi, \varepsilon} = (\mathcal{P}(\mathit{AP}),Q,I,\delta,F)$ as
 follows. 
 Its state space $Q$ is $\mathit{xcl}(\varphi) \times
 [0,1]^{+}$; hence a state is a pair $(\psi,\vec{d})$ of a formula and a discount sequence.
 The transition function $\delta\colon Q\times
 \mathcal{P}(\mathit{AP})\to \mathcal{B}^{+}(Q\cup [0,1])$ is defined as 
in Table~\ref{table:transitionFuncOfAp}, where we let
 $\vec{d} = d_{1} d_{2} \ldots d_{n}\in [0,1]^{+}$
 and  $\sigma \in \mathcal{P}(\mathit{AP})$.
\begin{table}[tbp]\centering
\small
  \begin{align}
   \delta \bigl((\true,\vec{d}),\sigma\bigr) &=
	  \vec{d}\boxtimes 1 
   \nonumber
   \\
\delta \bigl((p,\vec{d}),\sigma\bigr) &= 
      \begin{cases}
        \vec{d}\boxtimes 1
       & \mbox{if } p \in \sigma, \\
        \vec{d}\boxtimes 0
& \mbox{otherwise.}
      \end{cases}
   \nonumber
   \\
\delta \bigl((\lnot \psi,\vec{d}),\sigma\bigr) &= 
	      \delta \bigl((\psi,\vec{d}\concatseq 1),\sigma\bigr)
   \label{eq:201507281451}
\\
\delta \bigl((\psi_{1} \land \psi_{2},\vec{d}),\sigma\bigr) &= 
      \begin{cases}
        {\delta \bigl((\psi_{1},\vec{d}),\sigma\bigr)} \land {\delta
       \bigl((\psi_{2},\vec{d}),\sigma\bigr)} & \mbox{if $|\vec{d}|$ is odd},\\
        {\delta \bigl((\psi_{1},\vec{d}),\sigma\bigr)} \lor {\delta \bigl((\psi_{2},\vec{d}),\sigma\bigr)} & \mbox{otherwise.}
      \end{cases}
   \nonumber
   \\
   \delta \bigl((\X \psi,\vec{d}),\sigma\bigr) &= (\psi,\vec{d})
   \nonumber
   \\
\delta \bigl((\psi_{1} \U \psi_{2},\vec{d}),\sigma\bigr) &= 
      \begin{cases}
        {\delta \bigl((\psi_{2},\vec{d}),\sigma\bigr)} \lor
       \Bigl(\,{\delta \bigl((\psi_{1},\vec{d}),\sigma\bigr)} \land
       {(\psi_{1} \U \psi_{2},\vec{d})}\,\Bigr) & 
\mbox{if $|\vec{d}|$ is odd},\\
        {\delta \bigl((\psi_{2},\vec{d}),\sigma\bigr)} \land \Bigl(\,{\delta \bigl((\psi_{1},\vec{d}),\sigma\bigr)} \lor {(\psi_{1} \U \psi_{2},\vec{d})}\,\Bigr) & \mbox{otherwise.}
      \end{cases}
   \nonumber
  \end{align}
\begin{minipage}{\textwidth}\small
 For $\delta\bigl((\psi_{1} \U_{\eta} \psi_{2},
	 \vec{d}),\sigma\bigr)$ we make cases. Let 
	 $\vec{d}=d_{1}\dotsc d_{n}$. If
	 $\eta(0)\cdot \prod_{i = 1}^{n} d_{i} \le \varepsilon$:
	  \begin{equation}\label{eq:defDeltaBeyondEventHorizon}
	   \small
	    \delta \bigl((\psi_{1} \U_{\eta} \psi_{2},\vec{d}),\sigma\bigr) = 
	    \begin{cases}
	      \vec{d} \boxtimes 0 & 
 \mbox{if $|\vec{d}|$ is odd},\\
		 \vec{d}\boxtimes \eta(0) & \mbox{otherwise;}
	    \end{cases} 
	  \end{equation}
	  otherwise, i.e.\ if 	 $\eta(0)\cdot \prod_{i = 1}^{n} d_{i} > \varepsilon$:
 \begin{equation}\label{eq:defDeltaWithinEventHorizon}
 \small
 	  \begin{aligned}
	   &\delta \bigl((\psi_{1} \U_{\eta} \psi_{2},\vec{d}),\sigma\bigr) =
        \begin{cases}
          {\delta \bigl((\psi_{2},\vec{d}\odot \eta(0)),\sigma\bigr)}
	 \lor \Bigl({\delta \bigl((\psi_{1},\vec{d}\odot
	 \eta(0)),\sigma\bigr)} \land {(\psi_{1} \U_{\eta^{+1}}
	 \psi_{2},\vec{d})}\Bigr) & 
 \mbox{if $|\vec{d}|$ is odd},\\
		  {\delta \bigl((\psi_{2},\vec{d}\odot \eta(0)),\sigma\bigr)} \land \Bigl({\delta \bigl((\psi_{1},\vec{d}\odot \eta(0)),\sigma\bigr)} \lor {(\psi_{1} \U_{\eta^{+1}} \psi_{2},\vec{d})}\Bigr) & \mbox{otherwise.}
        \end{cases}
	  \end{aligned}
\end{equation}  
\end{minipage}
 \caption{Transition function $\delta$ of $\Ap_{\varphi, \varepsilon}$}
 \label{table:transitionFuncOfAp}
\end{table}

  The set $I$ of the initial states of $\Ap_{\varphi, \varepsilon}$ is $\{ (\varphi,1) \}$. The acceptance function $F$ is
  \begin{equation}\label{eq:acceptanceFunc}
  \small	F (\psi,\vec{d}) = 
    \begin{cases}
      1 & \mbox{if } {\psi = \psi_{1} \U \psi_{2}} \mbox{ and $|\vec{d}|$ is even}
\\
	  0 & \mbox{otherwise.}
    \end{cases}
  \end{equation}

 The alternating $[0,1]$-acceptance automaton $\A_{\varphi,\varepsilon}$
 is defined to be the restriction of  $\Ap_{\varphi,\varepsilon}$ to the states that
 are
 reachable from the initial state $(\varphi, 1)$. 
\end{defi}
\auxproof{To slightly optimize the automaton one can take
 $\delta \bigl((\lnot \psi,\vec{d}),\sigma\bigr) = 
      \begin{cases}
        \delta \bigl((\psi,\vec{d}'),\sigma\bigr) & \mbox{if } {\vec{d} = \vec{d}'1} \land {\vec{d}' \in [0,1]^{+}} \\
        \delta ((\psi,\vec{d}1),\sigma) & \mbox{otherwise.}
      \end{cases}$.
}
 Examples of $\A_{\varphi,\varepsilon}$ are
in Fig.~\ref{fig:exampleTranslation}--\ref{fig:exampleTranslation2},
where
$(\varphi,\varepsilon)=
(
\Gdisc{\expo_{\frac{1}{2}}}
\Fdisc{\expo_{\frac{2}{5}}}
p
,
\frac{1}{3}
)
$
and
$(
\Fdisc{\expo_{\frac{1}{2}}}
\G
p
,
\frac{1}{3}
)$.
There a discount sequence
$d_{1}\dotsc d_{n}$ is denoted by $\langle d_{1},\dotsc, d_{n}
\rangle$ for readability.

\begin{figure}[tbp]\centering
\begin{minipage}[]{.5\textwidth}
 \includegraphics[width=\textwidth]{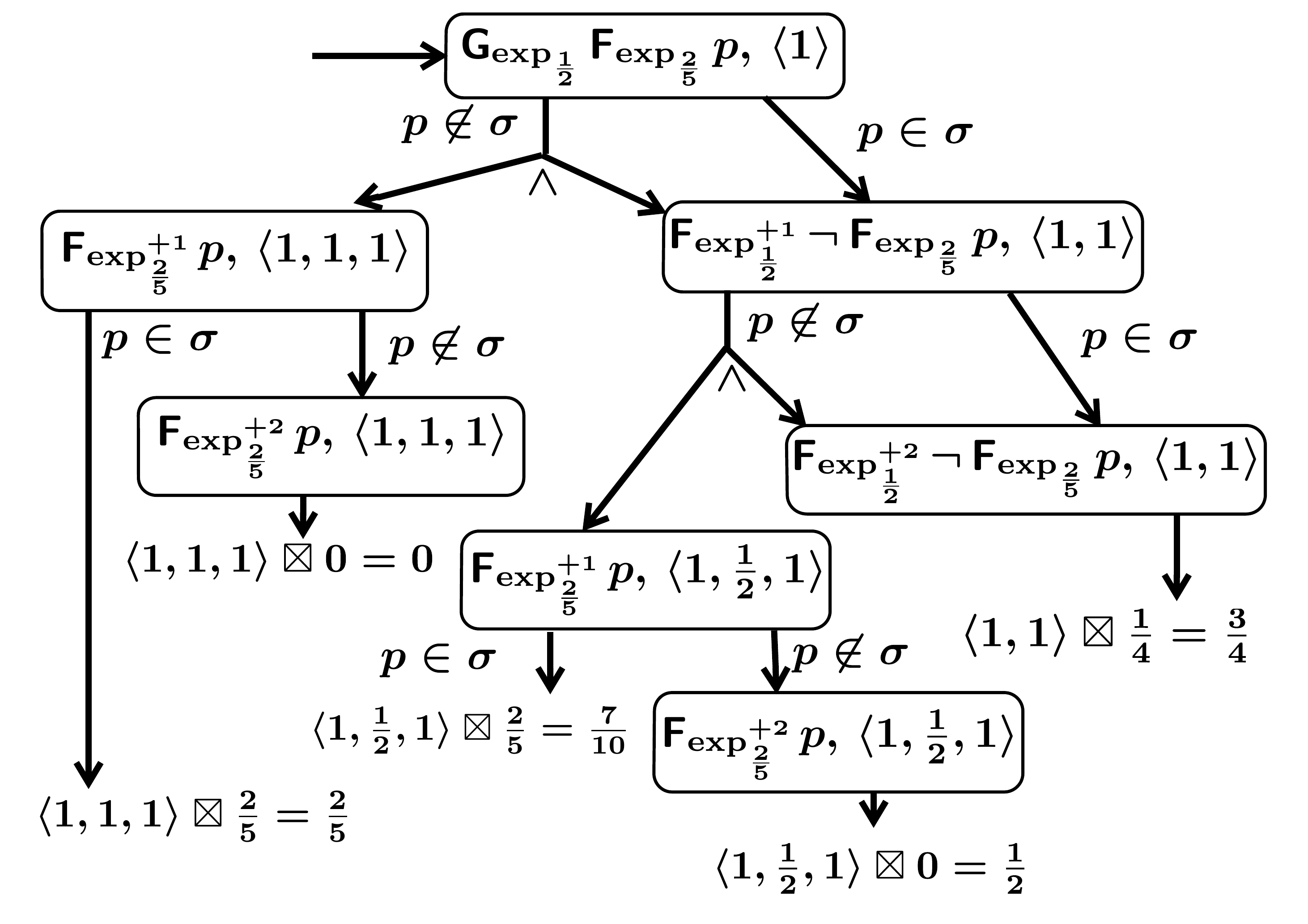}
 \caption{The automaton 
 $\A_{\varphi,\varepsilon}$ for\\ $\varphi
 =
 \Gdisc{\expo_{\frac{1}{2}}}
 \Fdisc{\expo_{\frac{2}{5}}}
 p
 $ and $\varepsilon = \frac{1}{3}$
 }
 \label{fig:exampleTranslation}
\end{minipage}
\begin{minipage}[]{.48\textwidth}
\includegraphics[width=\textwidth,clip,trim=0cm 4.5cm 3cm 0cm]{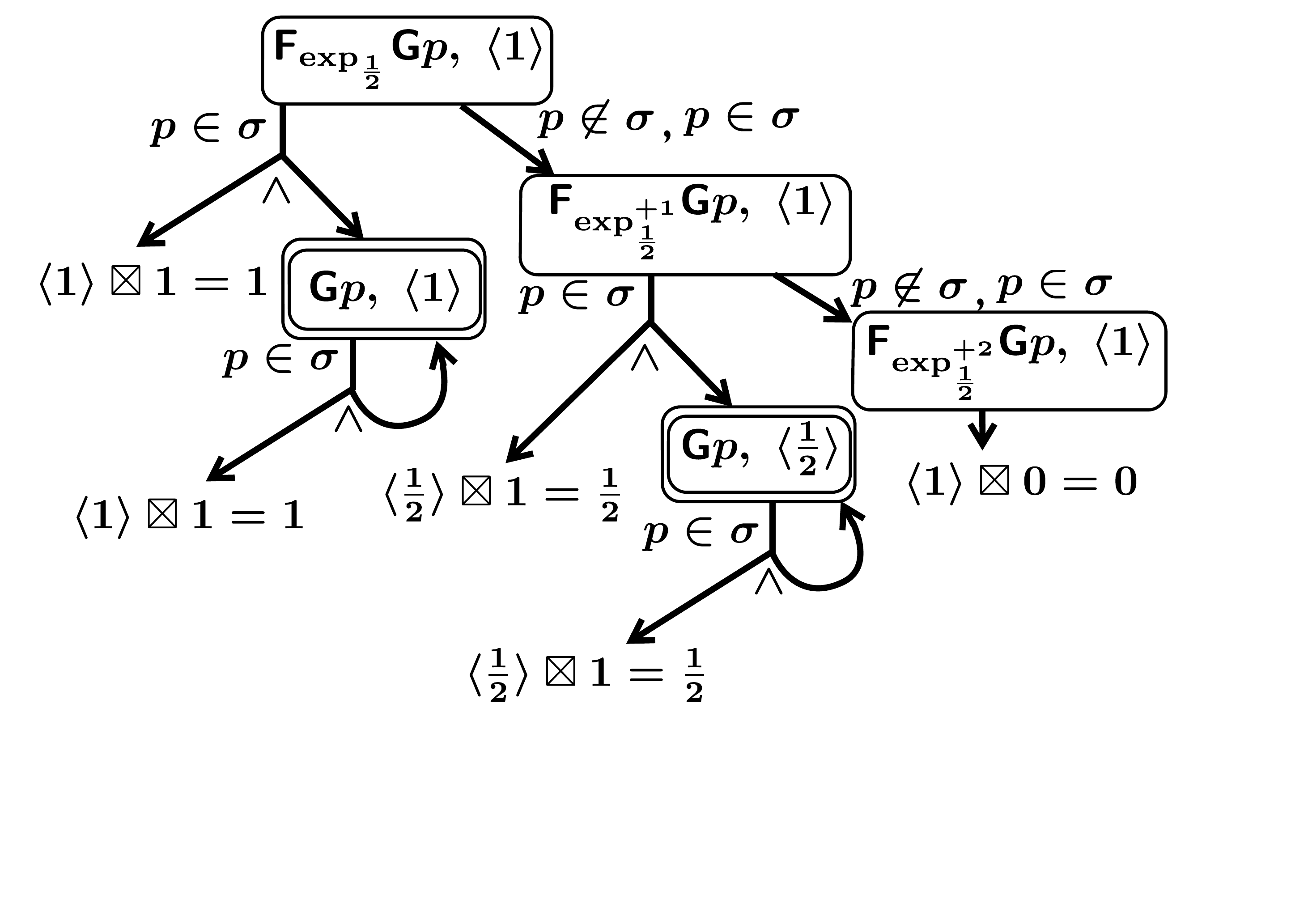}
\caption{The automaton 
$\A_{\varphi,\varepsilon}$ for\\ $\varphi
=
\Fdisc{\expo_{\frac{1}{2}}}
\G
p
$ and $\varepsilon = \frac{1}{3}$. The double-lined nodes  have
 the acceptance value $1$.
}
\label{fig:exampleTranslation2}
\end{minipage}
\end{figure}

Some remarks on Def.~\ref{def:ABAforLTL} are in order.

\noindent
\textbf{In Absence of Discounting (Sanity Check)}\quad 
If the formula $\varphi$ contains no discounting operator $\U_{\eta}$,
then the construction essentially coincides the usual one
in~\cite{Vardi96anautomata-theoretic} that translates a (usual) LTL
formula to an alternating B\"uchi automaton. To see it, 
recall that the length $|\vec{d}|$ of a discount sequence plays the
role of a priority in parity automata (\S{}\ref{subsubsec:discountSeq}).
Therefore in the first case of~(\ref{eq:acceptanceFunc}),  $|\vec{d}|$ being
even means that we are in fact dealing with a greatest fixed point. This
makes the state accepting (in the B\"{u}chi sense), much like in~\cite{Vardi96anautomata-theoretic}.

\noindent 
\textbf{$\A_{\varphi,\varepsilon}$ is Quantitative} \quad
The acceptance values of the states of $\A_{\varphi,\varepsilon}$
       are Boolean (see~(\ref{eq:acceptanceFunc})). Nevertheless the automaton
       is quantitative, in that non-Boolean values from $[0,1]$
       appear
       as atomic propositions in the range $\mathcal{B}^{+}(Q\cup
       [0,1])$
       of the transition $\delta$ (they occur at the leaves in Fig.~\ref{fig:exampleTranslation}--\ref{fig:exampleTranslation2}). Once we transform
       $\A_{\varphi,\varepsilon}$ to a
       non-alternating
       automaton (Prop.~\ref{prop:ABAtoNBA}), these non-Boolean
       propositional values
       give rise to non-Boolean acceptance values.

\noindent 
\textbf{Event Horizon} \quad 
A fundamental idea from~\cite{AlmagorBK14} is the following. A discounting operator, in presence of
a threshold (in~\cite{AlmagorBK14}) or a nonzero margin (here), allows an exact
representation by a (finitary) formula without a fixed point
operator. The latter means, for example:
\begin{align}\label{eq:finReprOfDiscountOpr1}
\textstyle  \sem{\pi,\Fdisc{\expo_{\frac{1}{2}}}\varphi}\ge \frac{1}{4}
 &\quad\Longleftrightarrow
 \quad\pi\models \varphi\lor \X\varphi\lor \X\X\varphi\enspace,
  \quad\text{and}
\\
\label{eq:finReprOfDiscountOpr2}
\textstyle
  \sem{\pi,\Gdisc{\expo_{\frac{1}{2}}}\varphi}\ge \frac{3}{4}
 &\quad\Longleftrightarrow\quad
\pi\models \varphi\land \X\varphi\land \X\X\varphi\enspace,
\end{align}
and so on. Note that in~(\ref{eq:finReprOfDiscountOpr1}), whatever happens after two time units
has contributions less than $(\frac{1}{2})^{2}=\frac{1}{4}$ and
therefore never enough to make up the threshold. The example~(\ref{eq:finReprOfDiscountOpr2})
is similar, with events in the future having only negligible negative
contributions. In other words:  fixed point operators with discounting have
an \emph{event horizon}---in the above examples~(\ref{eq:finReprOfDiscountOpr1}--\ref{eq:finReprOfDiscountOpr2}) it lies between $t=2$  and
$3$---nothing beyond which  matters.

This idea of event horizon is used in the distinction
between~(\ref{eq:defDeltaBeyondEventHorizon})
and~(\ref{eq:defDeltaWithinEventHorizon}). The value $\eta(0)\cdot
\prod_{i = 1}^{n} d_{i}$ is, as we shall see, the greatest contribution
to a truth value that the events henceforth potentially have.  In case it is smaller
than the margin $\varepsilon$ we can safely ignore the \emph{positive}
contribution henceforth and take
 the smallest possible truth value
$0$---much like the disjunct
$\X^{3}\varphi\lor\X^{4}\varphi\lor\cdots$ is truncated
in~(\ref{eq:finReprOfDiscountOpr1}). This is what is done in the first case
in~(\ref{eq:defDeltaBeyondEventHorizon}). 
The second case 
in~(\ref{eq:defDeltaBeyondEventHorizon}) is about a greatest fixed point
and
we truncate the \emph{negative} contributions of the events beyond the event
horizon---this is much like the obligation
$\X^{3}\varphi\land\X^{4}\varphi\land\cdots$ is lifted
in~(\ref{eq:finReprOfDiscountOpr2}). In this case we use the greatest truth
value possible, namely $\eta(0)$. This is what is done
in~(\ref{eq:defDeltaBeyondEventHorizon}).


\noindent 
\textbf{Use of Discount Sequences} \quad
Discount sequences $\vec{d}$ are used for two purposes. Firstly, as
we already described,
its length
$|\vec{d}|$ indicates the alternation between positive and negative views on a
formula---observe that a discount sequence gets longer in~(\ref{eq:201507281451}).  Consequently many clauses in the definition of $\delta$
distinguish
cases according to the parity of $|\vec{d}|$. Secondly it records all
the discount factors that have been
encountered. See~(\ref{eq:defDeltaWithinEventHorizon}), where the last
element of $\vec{d}$ is multiplied by the newly encountered factor
$\eta(0)$ and updated to $\vec{d}\odot \eta(0)$. 
Such accumulation $\vec{d}$ of 
discount factors acts on a truth value via the $\boxtimes$  operator, like
in~(\ref{eq:defDeltaBeyondEventHorizon}) and in the definition of $\delta
\bigl((\true,\vec{d}),\sigma\bigr)$.

\begin{lem}\label{lem:AphiEpsIsFiniteState}
 The automaton $\A_{\varphi,\varepsilon}$ has only finitely many states.
 \qed
\end{lem}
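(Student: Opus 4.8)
The plan is to show the reachable part of $\Ap_{\varphi,\varepsilon}$ is finite by bounding, separately, the three ingredients of a reachable state $(\psi,\vec d)\in\mathit{xcl}(\varphi)\times[0,1]^{+}$: the formula $\psi$, the entries of the discount sequence $\vec d$, and its length $|\vec d|$. Since $\A_{\varphi,\varepsilon}$ is the reachable restriction of $\Ap_{\varphi,\varepsilon}$, establishing these three bounds places its state set inside a finite product, which is all we need.

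The first two bounds are routine. A reachable $\psi$ is either a subformula of $\varphi$ (of which there are finitely many) or of the form $\psi_{1}\U_{\eta^{+k}}\psi_{2}$ with $\psi_{1}\U_{\eta}\psi_{2}\in\Sub(\varphi)$, so it suffices to bound the shift $k$. By Table~\ref{table:transitionFuncOfAp}, the shift is increased only by clause~(\ref{eq:defDeltaWithinEventHorizon}), which applies only when $\eta^{+k}(0)\cdot\prod_{i}d_{i}>\varepsilon$; since every entry of a discount sequence is $\le 1$, this forces $\eta(k)>\varepsilon$, which holds for only finitely many $k$ because $\eta$ is a discounting function. As $\varphi$ mentions finitely many discounting functions, the set $\Psi$ of reachable formula parts is finite. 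For the entries of $\vec d$, the key invariant is that every reachable discount sequence satisfies $\prod_{i}d_{i}>\varepsilon$: this holds for the initial $\langle 1\rangle$ and is preserved by the only two rules that alter a discount sequence, clause~(\ref{eq:201507281451}) (which appends a $1$) and clause~(\ref{eq:defDeltaWithinEventHorizon}) (where $\vec d\mapsto\vec d\odot\eta(0)$ and the side condition is exactly that the new product exceeds $\varepsilon$). Hence $d_{i}\ge\prod_{j}d_{j}>\varepsilon$ for each $i$, and each $d_{i}$ is a product of discount values $\eta^{+k}(0)=\eta(k)$ introduced by $\odot$; so every factor of $d_{i}$ lies in the finite set of values $>\varepsilon$ attained by the discounting functions of $\varphi$, and only boundedly many non-unit factors can be combined while the product stays above $\varepsilon$. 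Thus each $d_{i}$ ranges over a finite set of reals.

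It remains to bound $|\vec d|$, and this is the step needing the most care. The length is increased only by clause~(\ref{eq:201507281451}), which passes from a state whose formula is $\lnot\chi$ to transitions on $\chi$. The crucial observation is that the formula parts reachable from any state on a formula $\chi$ are exactly subformulas of $\chi$ together with $\U_{\eta^{+k}}$-shifts of such subformulas: the $\U$-rules merely loop on $\psi_{1}\U\psi_{2}$, the $\U_{\eta}$-rules descend along the chain $\psi_{1}\U_{\eta}\psi_{2}\to\psi_{1}\U_{\eta^{+1}}\psi_{2}\to\cdots$ (finite by the shift bound above), and the remaining rules stay within subformulas; in particular $\lnot\chi$ is never reachable from a state on $\lnot\chi$. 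Consequently, along any path in $\A_{\varphi,\varepsilon}$ the negated formulas at which clause~(\ref{eq:201507281451}) is applied are pairwise distinct, so this clause is used at most $|\Psi|$ times on any path, whence $|\vec d|\le 1+|\Psi|$. Combining the three bounds shows $\A_{\varphi,\varepsilon}$ has only finitely many states. The delicate point is precisely this last argument — pinning down which formula parts are reachable from a given formula, and thereby the absence of cycles through negated formulas — whereas the finiteness of $\Psi$ and the invariant $\prod_{i}d_{i}>\varepsilon$ are comparatively immediate.
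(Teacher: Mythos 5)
Your proof is correct and follows essentially the same route as the paper's: it bounds the same three sources of infinity (the shift $k$ in $\U_{\eta^{+k}}$, the possible values of the entries $d_{i}$, and the length $|\vec d|$), using the side condition $\eta(0)\cdot\prod_{i}d_{i}>\varepsilon$ of clause~(\ref{eq:defDeltaWithinEventHorizon}) for the first two and the fact that only the negation clause~(\ref{eq:201507281451}) lengthens $\vec d$ for the third. Your explicit invariant $\prod_{i}d_{i}>\varepsilon$ and the subformula-reachability argument showing that each negated formula triggers clause~(\ref{eq:201507281451}) at most once along a path are welcome elaborations of steps the paper only asserts, but they do not change the underlying argument.
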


The following ``correctness lemma'' claims that $\A_{\varphi,\varepsilon}$ 
conducts the expected task. 
\begin{lem}\label{lem:correctnessOfA}
  Let $\varphi$ be an $\LTLd{\D,\emptyset}$ formula and $\varepsilon\in(0,1)$ be a
 positive real number. 
For each computation $\pi \in (\mathcal{P}(\mathit{AP}))^{\omega}$, we
 have
\begin{math}
     \sem{ \pi, \varphi } - \varepsilon \;\leq\; \Lang(\A_{\varphi,
   \varepsilon})(\pi)
 \; \leq\; \sem{ \pi, \varphi }
\end{math}\qed
\end{lem}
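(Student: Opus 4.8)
The plan is to prove a stronger ``per-state'' statement by induction on the structure of $\psi$ and then obtain Lemma~\ref{lem:correctnessOfA} by instantiation. For a state $q=(\psi,\vec{d})$ of $\Ap_{\varphi,\varepsilon}$ write $\A^{q}$ for the alternating $[0,1]$-acceptance automaton obtained by taking $\{q\}$ as the set of initial states; since the language \eqref{eq:langOfAFBA} depends only on the part reachable from $q$, $\A^{(\varphi,\langle 1\rangle)}$ has the same language as $\A_{\varphi,\varepsilon}$. The claim to establish is: for every computation $\pi$ and every reachable state $(\psi,\vec{d})$,
\[
 \vec{d}\boxtimes\sem{\pi,\psi}-\varepsilon
 \;\le\;
 \Lang\bigl(\A^{(\psi,\vec{d})}\bigr)(\pi)
 \;\le\;
 \vec{d}\boxtimes\sem{\pi,\psi}\enspace.
\]
Since $\langle 1\rangle\boxtimes v=v$, the case $(\psi,\vec{d})=(\varphi,\langle 1\rangle)$ is exactly the lemma. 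Throughout I would use three elementary facts about $\boxtimes$ readable off \eqref{eq:inductiveDefOfAction}--\eqref{eq:actionExplicitly}: the map $\vec{d}\boxtimes(\place)$ is affine with slope $(-1)^{|\vec{d}|+1}\prod_{i}d_{i}$, hence monotone when $|\vec{d}|$ is odd and antitone when $|\vec{d}|$ is even; $|\vec{d}\boxtimes v-\vec{d}\boxtimes v'|=(\prod_{i}d_{i})\,|v-v'|\le|v-v'|$; and $(\vec{d}\odot d')\boxtimes v=\vec{d}\boxtimes(d'v)$ by \eqref{eq:odotAndBoxtimes}, while $(\vec{d}\concatseq 1)\boxtimes v=\vec{d}\boxtimes(1-v)$.

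The non-fixed-point cases are then immediate from Table~\ref{table:transitionFuncOfAp} and the induction hypothesis. The cases $\true$ and $p$ are equalities. For $\lnot\psi$, equation~\eqref{eq:201507281451} makes $\A^{(\lnot\psi,\vec{d})}$ have the same transitions as $\A^{(\psi,\vec{d}\concatseq 1)}$ (the extra root has $F=0$ and occurs once, so it is irrelevant to \eqref{eq:langOfAFBA1}), and $(\vec{d}\concatseq 1)\boxtimes\sem{\pi,\psi}=\vec{d}\boxtimes(1-\sem{\pi,\psi})=\vec{d}\boxtimes\sem{\pi,\lnot\psi}$, so the claim transfers verbatim (the parity flips, but the statement is parity-independent). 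For $\psi_{1}\land\psi_{2}$ the transition is an $\land$ (a universal branching, hence a $\min$ in \eqref{eq:langOfAFBA}) when $|\vec{d}|$ is odd and a $\lor$ (a $\max$) when $|\vec{d}|$ is even; since $\vec{d}\boxtimes(\place)$ turns $\min$ into $\max$ exactly when $|\vec{d}|$ is even, in both parities $\Lang(\A^{(\psi_{1}\land\psi_{2},\vec{d})})(\pi)$ matches $\vec{d}\boxtimes\min\{\sem{\pi,\psi_{1}},\sem{\pi,\psi_{2}}\}$, and a $\max$/$\min$ of two quantities each in an interval of width $\varepsilon$ lies in the interval between the $\max$/$\min$ of the endpoints. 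The case $\X\psi$ is a one-step shift. The only genuinely new, still elementary, ingredient is the \emph{event-horizon} clause \eqref{eq:defDeltaBeyondEventHorizon}: when $\eta(0)\prod_{i}d_{i}\le\varepsilon$ the automaton outputs the leaf $\vec{d}\boxtimes 0$ (odd $|\vec{d}|$) or $\vec{d}\boxtimes\eta(0)$ (even), whereas by Prop.~\ref{lem:maxTruthValOfDiscountedUntil} we have $\sem{\pi,\psi_{1}\U_{\eta}\psi_{2}}\in[0,\eta(0)]$; the monotonicity/antitonicity of $\vec{d}\boxtimes(\place)$ makes the leaf value $\le\vec{d}\boxtimes\sem{\pi,\psi_{1}\U_{\eta}\psi_{2}}$, and it falls short by at most $(\prod_{i}d_{i})\eta(0)\le\varepsilon$. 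This is the single place where the $\varepsilon$-slack is spent.

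The heart of the argument is the fixed-point cases $\psi_{1}\U\psi_{2}$ and the within-horizon clause \eqref{eq:defDeltaWithinEventHorizon} of $\psi_{1}\U_{\eta}\psi_{2}$, which I would treat by proving the two inequalities separately, adapting the standard LTL-to-alternating-B\"uchi correctness proof of~\cite{Vardi96anautomata-theoretic}. For the upper bound $\Lang(\A^{(\psi,\vec{d})})(\pi)\le\vec{d}\boxtimes\sem{\pi,\psi}$, given an arbitrary run $\tau$ I would exhibit one path $\rho$ with $F^{\infty}(\rho)\le\vec{d}\boxtimes\sem{\pi,\psi}$ by a greedy descent: at a node arising from a $\lor$ in $\delta$ follow the disjunct the run chose, at a node arising from an $\land$ follow a conjunct whose local value $\vec{d}'\boxtimes\sem{\pi^{j},\psi'}$ is smallest. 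Along such a path the local value is non-increasing; a finite path ends in a leaf whose value is at most the local value (by the horizon estimate above, with equality at the $\true$/$p$ leaves); and an infinite path is caught by the acceptance assignment \eqref{eq:acceptanceFunc}, where an infinitely-unfolded $\psi_{1}\U\psi_{2}$ with $|\vec{d}'|$ odd (a genuine ``until'', a least fixed point) has $F=0$, forcing $F^{\infty}(\rho)=0$. For the lower bound I would instead \emph{construct} a run $\tau_{0}$ from the semantic witnesses — for each $\U$/$\U_{\eta}$ subformula resolve its $\sup$ by a least realizing index (or by passing the horizon), and at every $\lor$-node take a maximizing disjunct — and check that every path of $\tau_{0}$ has value $\ge\vec{d}\boxtimes\sem{\pi,\psi}-\varepsilon$. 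The key point that keeps the error from compounding is that any single path truncates at most one $\U_{\eta}$ (a truncation yields a leaf), so it incurs the $\varepsilon$-loss at most once; and since the language of an alternating automaton is a $\min_{\rho}$ over paths, a family of paths each within $\varepsilon$ on the correct side yields a $\min$ within $\varepsilon$.

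The step I expect to be the main obstacle is making this fixed-point bookkeeping fully rigorous: matching the parity of $|\vec{d}|$ with whether an occurrence of $\psi_{1}\U\psi_{2}$ is being evaluated as a least or a greatest fixed point (the polarity introduced by \eqref{eq:201507281451} and carried by the length of $\vec{d}$), verifying that the local value indeed solves the $\delta$-unfolding equation so that the greedy descent and the witness-driven construction are legitimate, and confirming termination of the within-horizon $\U_{\eta}$ unfolding — each unfolding in \eqref{eq:defDeltaWithinEventHorizon} replaces $\eta$ by $\eta^{+1}$ and $\eta(k)\to 0$, so after finitely many steps clause \eqref{eq:defDeltaBeyondEventHorizon} fires, which is also what underlies Lemma~\ref{lem:AphiEpsIsFiniteState}. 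Once the $\boxtimes$-arithmetic of the first paragraph is in place, the remaining verifications are routine.
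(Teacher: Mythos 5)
Your plan coincides with the paper's proof in all essentials: the same strengthened per-state statement $\vec{d}\boxtimes\sem{\pi,\psi}-\varepsilon\le\Lang(\A^{(\psi,\vec{d})})(\pi)\le\vec{d}\boxtimes\sem{\pi,\psi}$ proved by structural induction, the same $\boxtimes$-arithmetic (affine, parity-dependent monotonicity, the $\odot$/$\concatseq$ identities), the same backward induction on $k$ for $\U_{\eta^{+k}}$ with Prop.~\ref{lem:maxTruthValOfDiscountedUntil} supplying the $\varepsilon$-bound at the event horizon, and the same observation that the margin is spent only at truncation leaves. The one place where your organization departs from the paper's is the fixed-point cases: you split them into two inequalities handled by a greedy descent (upper bound) and a witness-driven run construction (lower bound), whereas the paper classifies the possible run-tree shapes, writes $\Lang(\A^{(\psi_{1}\U\psi_{2},\vec{d})})(\pi)$ exactly as a supremum over their utilities, and then — for the even-parity case, where $\vec{d}\boxtimes(\place)$ is antitone and turns the semantic $\sup\min$ into an $\inf\max$ — proves an explicit quantitative Until/Release duality, namely $\inf_{i}\bigl(b_{i}\lor\bigl(a_{0}\lor\cdots\lor a_{i-1}\bigr)\bigr)=\bigl(\sup_{j}\bigl(a_{j}\land b_{0}\land\cdots\land b_{j}\bigr)\bigr)\lor\inf_{i}b_{i}$ (Sublem.~\ref{sublem:quantitativeUntilAndRelease}), to match that $\inf\max$ against the two run-tree shapes. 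Your plan does not name this identity, but your two directions are precisely its two halves — the greedy descent proves ``$\le$'' and the witness construction proves ``$\ge$'' — so you would end up establishing it anyway; be aware that this combinatorial fact, not the $\boxtimes$-bookkeeping, is where the bulk of the work lies, and that your ``local value is non-increasing along the greedy path'' claim needs a small limiting argument when the semantic $\inf$ over unfolding depths is not attained (the infinite greedy path then has $F^{\infty}=1$, and one must instead exhibit, for every $\epsilon'>0$, a finite deviation achieving $v_{0}+\epsilon'$).
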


\subsubsection{The Algorithm}\label{subsubsec:theAlgorithmWithoutPropositional}
 After the construction of
 $\A_{\varphi,\varepsilon}$, 
the algorithm proceeds in the following manner.
We first translate $\A_{\varphi,\varepsilon}$ to a (non-alternating)
{$[0,1]$}-acceptance automaton (relying on Prop.~\ref{prop:ABAtoNBA}).

\begin{cor}\label{cor:NBAforLTL}
  Let $\varphi$ be an $\LTLd{\D,\emptyset}$  formula and $\varepsilon \in (0,1)$ be a
 positive real number. There exists a (non-alternating) {$[0,1]$}-acceptance automaton $\Ana_{\varphi,\varepsilon}$ such that 
  \begin{math}
    \sem{ \pi, \varphi } - \varepsilon
 \leq \Lang(\Ana_{\varphi,\varepsilon})(\pi) 
\leq \sem{ \pi, \varphi }
  \end{math}
  for each computation $\pi \in (\mathcal{P}(\mathit{AP}))^{\omega}$.
\qed
\end{cor}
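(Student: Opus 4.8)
The plan is to obtain $\Ana_{\varphi,\varepsilon}$ directly by removing alternation from the automaton $\A_{\varphi,\varepsilon}$ already constructed in Def.~\ref{def:ABAforLTL}, and then to transfer the sandwich bound verbatim. First I would record that, by Lem.~\ref{lem:AphiEpsIsFiniteState}, $\A_{\varphi,\varepsilon}$ is a \emph{finite-state} alternating $[0,1]$-acceptance automaton over the alphabet $\mathcal{P}(\mathit{AP})$, so that Prop.~\ref{prop:ABAtoNBA} is applicable to it. Applying that proposition produces a (non-alternating) $[0,1]$-acceptance automaton $\Ana_{\varphi,\varepsilon}$ over the same alphabet with $\Lang(\Ana_{\varphi,\varepsilon}) = \Lang(\A_{\varphi,\varepsilon})$ as functions $(\mathcal{P}(\mathit{AP}))^{\omega}\to[0,1]$.

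The claimed inequalities are then immediate. For each computation $\pi\in(\mathcal{P}(\mathit{AP}))^{\omega}$ we have $\Lang(\Ana_{\varphi,\varepsilon})(\pi) = \Lang(\A_{\varphi,\varepsilon})(\pi)$ by the language equality just obtained, and Lem.~\ref{lem:correctnessOfA} gives $\sem{\pi,\varphi}-\varepsilon \le \Lang(\A_{\varphi,\varepsilon})(\pi)\le\sem{\pi,\varphi}$. Chaining these two facts yields $\sem{\pi,\varphi}-\varepsilon \le \Lang(\Ana_{\varphi,\varepsilon})(\pi)\le\sem{\pi,\varphi}$, which is exactly the statement.

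I do not expect a genuine obstacle here: the corollary is a straightforward composition of Lem.~\ref{lem:AphiEpsIsFiniteState}, Lem.~\ref{lem:correctnessOfA} and Prop.~\ref{prop:ABAtoNBA}. The only points worth double-checking are bookkeeping: that the alphabet in Prop.~\ref{prop:ABAtoNBA} is instantiated to $\mathcal{P}(\mathit{AP})$ so that it matches the computations $\pi$ appearing in Lem.~\ref{lem:correctnessOfA}, and that the finiteness hypothesis of Prop.~\ref{prop:ABAtoNBA} is met, which is precisely Lem.~\ref{lem:AphiEpsIsFiniteState}. If desired one could additionally note the size cost of this step—the powerset-style construction behind Prop.~\ref{prop:ABAtoNBA} makes $\Ana_{\varphi,\varepsilon}$ exponentially larger than $\A_{\varphi,\varepsilon}$, and the $[0,1]$-valued leaves occurring in the transition function of $\A_{\varphi,\varepsilon}$ resurface as the acceptance values of $\Ana_{\varphi,\varepsilon}$—but this quantitative remark is not needed for the statement as phrased and would be deferred to the complexity discussion.
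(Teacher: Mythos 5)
Your proposal is correct and matches the paper's intended argument exactly: the corollary is obtained by applying Prop.~\ref{prop:ABAtoNBA} to the finite-state automaton $\A_{\varphi,\varepsilon}$ (finiteness from Lem.~\ref{lem:AphiEpsIsFiniteState}) and chaining the resulting language equality with the sandwich bound of Lem.~\ref{lem:correctnessOfA}. The bookkeeping points you flag (alphabet instantiation, finiteness hypothesis) are the right ones and are all satisfied.
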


Towards the solution of the near-optimal 
scheduling problem (Def.~\ref{def:nearOptimalPathSynthesis}), we
construct the \emph{product} of
$\Ana_{\varphi,\varepsilon}$
in
Cor.~\ref{cor:NBAforLTL}
and the given Kripke structure $\K$. Since transitions of
{$[0,1]$}-acceptance automata are nondeterministic, this product can be
defined just as usual.
\begin{defi}\label{def:productAutomata}
  Let $\A = (\mathcal{P}(\mathit{AP}),Q,I,\delta,F)$ be a
 {$[0,1]$}-acceptance automaton and $\K = (W,R,\lambda)$ be a Kripke
 structure.  Their \emph{product} $\A \times \K$ 
is a 
 {$[0,1]$}-acceptance automaton
 $(1,Q',I',\delta',F')$---over a singleton alphabet
 $1=\{\bullet\}$---defined by:
 $Q'= Q$; $I'= I \times W$; $\delta' \bigl(\,(q,s),\,\bullet\,\bigr)
=     \bigl\{\,
   (q',s')
  \;\bigl|\bigr.\;
   q'\in \delta(q,\lambda(s)),
   (s,s') \in R
   \,\bigr\}$; and $F'(q,s) = F(q)$.
\end{defi}
%
\begin{lem}\label{lem:fromOptimalInATimesKToOptimalInA}
 Let 
\begin{math}
 (q_{0}, s_{0})\,\bullet\,
 (q_{1}, s_{1})\,\bullet\,
 \dotsc
\end{math}
be an optimal run of the automaton $\A\times\K$ (that necessarily
 exists by Lem.~\ref{lem:lassoOptimalityForQuantitativeAcceptAutom}).
The path $s_{0}s_{1}\dotsc\in\pathrm(\K)$ realizes the optimal
       value of $\A$, that is,
\begin{math}
        \Lang(\A)\bigl(\lambda(s_{0})\lambda(s_{1})\dotsc\bigr) =
 \max_{\xi\in\pathrm(\K)}\Lang(\A)\bigl(\lambda(\xi)\bigr)
\end{math}.
\qed
\end{lem}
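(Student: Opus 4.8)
The plan is to exploit the tight correspondence between runs of the product automaton $\A\times\K$ over the unique word $\bullet^{\omega}$ and pairs consisting of a path $\xi\in\pathrm(\K)$ together with a run of $\A$ over the induced computation $\lambda(\xi)$. First I would unpack Def.~\ref{def:productAutomata}: a run $(q_{0},s_{0})\,\bullet\,(q_{1},s_{1})\,\bullet\,\dotsc$ of $\A\times\K$ satisfies $(q_{0},s_{0})\in I\times W$ and $(q_{i+1},s_{i+1})\in\delta'\bigl((q_{i},s_{i}),\bullet\bigr)$, which by the definition of $\delta'$ means exactly that $q_{i+1}\in\delta(q_{i},\lambda(s_{i}))$ and $(s_{i},s_{i+1})\in R$ for all $i$. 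Hence such a run decomposes into (i) a path $\xi=s_{0}s_{1}\dotsc\in\pathrm(\K)$, and (ii) a run $\rho=q_{0}\,\lambda(s_{0})\,q_{1}\,\lambda(s_{1})\,\dotsc$ of $\A$ over $\lambda(\xi)$; conversely any such pair reassembles into a run of $\A\times\K$, so the decomposition is a bijection between $\run_{\A\times\K}(\bullet^{\omega})$ and $\{(\xi,\rho)\mid \xi\in\pathrm(\K),\ \rho\in\run_{\A}(\lambda(\xi))\}$.

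Next I would compare acceptance values along corresponding runs. Since $F'(q,s)=F(q)$, it suffices to relate the $Q$-projection of $\mathrm{Inf}(\rho')$ with $\mathrm{Inf}(\rho)$, where $\rho'$ is a run of $\A\times\K$ and $\rho$ is its $\A$-component. The projection is obviously contained in $\mathrm{Inf}(\rho)$; for the reverse inclusion, if a state $q\in Q$ occurs infinitely often in $\rho$, then, since $W$ is finite, the pigeonhole principle yields some pair $(q,s)$ occurring infinitely often in $\rho'$. Therefore $\{F'(q,s)\mid (q,s)\in\mathrm{Inf}(\rho')\}=\{F(q)\mid q\in\mathrm{Inf}(\rho)\}$, and in particular the two maxima coincide. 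Using that the alphabet of $\A\times\K$ is the singleton $\{\bullet\}$, and recalling~(\ref{eq:acceptanceZeroOneAutom}), I would then compute, changing variables along the bijection established above,
\begin{align*}
\Lang(\A\times\K)(\bullet^{\omega})
&=\max_{\rho'\in\run_{\A\times\K}(\bullet^{\omega})}\max\{F'(q,s)\mid (q,s)\in\mathrm{Inf}(\rho')\}\\
&=\max_{\xi\in\pathrm(\K)}\max_{\rho\in\run_{\A}(\lambda(\xi))}\max\{F(q)\mid q\in\mathrm{Inf}(\rho)\}
=\max_{\xi\in\pathrm(\K)}\Lang(\A)\bigl(\lambda(\xi)\bigr)\enspace.
\end{align*}

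Finally, given an optimal run $(q_{0},s_{0})\,\bullet\,(q_{1},s_{1})\,\bullet\,\dotsc$ of $\A\times\K$ (which exists by Lem.~\ref{lem:lassoOptimalityForQuantitativeAcceptAutom}, $\A\times\K$ being a genuine finite-state $[0,1]$-acceptance automaton since $W$ is finite and $R$ is left-total), let $\xi_{0}=s_{0}s_{1}\dotsc\in\pathrm(\K)$ and $\rho_{0}=q_{0}\lambda(s_{0})q_{1}\lambda(s_{1})\dotsc$ be its two components, so $\rho_{0}\in\run_{\A}(\lambda(\xi_{0}))$. Optimality of the run means its acceptance value equals $\Lang(\A\times\K)(\bullet^{\omega})$; by the value comparison above this value is $\max\{F(q)\mid q\in\mathrm{Inf}(\rho_{0})\}$, which is at most $\Lang(\A)(\lambda(\xi_{0}))$. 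Combined with the displayed identity this gives $\Lang(\A)(\lambda(\xi_{0}))\ge\max_{\xi\in\pathrm(\K)}\Lang(\A)(\lambda(\xi))$, while the reverse inequality is immediate from $\xi_{0}\in\pathrm(\K)$; hence equality, as claimed. The only step that goes beyond routine unpacking of Def.~\ref{def:productAutomata} and~(\ref{eq:acceptanceZeroOneAutom}) is the pigeonhole argument ensuring that no acceptance value is lost when restricting a product run to its $\A$-component, where the finiteness of $W$ is essential.
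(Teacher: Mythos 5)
Your proposal is correct and follows essentially the same route as the paper's proof: a bijective correspondence between runs of $\A\times\K$ over $\bullet^{\omega}$ and pairs $(\xi,\rho)$ of a path of $\K$ and a run of $\A$ over $\lambda(\xi)$, together with the observation that corresponding runs have equal acceptance values. You merely make explicit (via the pigeonhole argument on the finite $W$) the step the paper leaves implicit, namely that the $Q$-projection of $\mathrm{Inf}$ of a product run coincides with $\mathrm{Inf}$ of its $\A$-component.
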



\begin{thm}[optimal scheduling for $\LTLd{\D,\emptyset}$]\label{thm:mainWithoutPropositional}
 Assume the setting of  Def.~\ref{def:nearOptimalPathSynthesis}, 
 and that $\Fcal=\emptyset$ (i.e.\ the formula $\varphi$ contains no
 propositional quality operators).
Let 
\begin{math}
 (q_{0}, s_{0})\,\bullet\,
 (q_{1}, s_{1})\,\bullet\,
 \dotsc
\end{math}
 be an optimal run (computed by
 Lem.~\ref{lem:lassoOptimalityForQuantitativeAcceptAutom}) for the
 $[0,1]$-acceptance automaton $\Ana_{\varphi,\varepsilon}\times\K$
 constructed as in Def.~\ref{def:ABAforLTL}, Cor.~\ref{cor:NBAforLTL}
 and Def.~\ref{def:productAutomata}. Then the path
 $s_{0}s_{1}\dotsc\in\pathrm(\K)$ is a solution to the near-optimal scheduling problem (Def.~\ref{def:nearOptimalPathSynthesis}). 

Moreover,
 the solution  $s_{0}s_{1}\dotsc$ can be chosen to be ultimately
 periodic.
\qed
\end{thm}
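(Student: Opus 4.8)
The plan is to combine two facts we already have in hand: the two-sided approximation of Cor.~\ref{cor:NBAforLTL} and the optimality transfer of Lem.~\ref{lem:fromOptimalInATimesKToOptimalInA}. Write $\A=\Ana_{\varphi,\varepsilon}$ for brevity. First I would invoke Lem.~\ref{lem:lassoOptimalityForQuantitativeAcceptAutom} on the product automaton $\A\times\K$ (which lives over the singleton alphabet $1=\{\bullet\}$, so there is essentially only one input word); this yields an optimal run $(q_{0},s_{0})\,\bullet\,(q_{1},s_{1})\,\bullet\,\dotsc$. By Lem.~\ref{lem:fromOptimalInATimesKToOptimalInA} its $\K$-projection $\xi_{0}=s_{0}s_{1}\dotsc$ is a path of $\K$ with $\Lang(\A)(\lambda(\xi_{0}))=\max_{\xi\in\pathrm(\K)}\Lang(\A)(\lambda(\xi))$; in particular this maximum is attained, so the quantity is well defined.

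Next I would chain the inequalities. Fix an arbitrary competitor $\xi\in\pathrm(\K)$. Applying the upper bound of Cor.~\ref{cor:NBAforLTL} to the computation $\lambda(\xi_{0})$ gives $\Lang(\A)(\lambda(\xi_{0}))\le\sem{\xi_{0},\varphi}$; applying its lower bound to $\lambda(\xi)$ gives $\sem{\xi,\varphi}-\varepsilon\le\Lang(\A)(\lambda(\xi))$; and optimality of $\xi_{0}$ gives $\Lang(\A)(\lambda(\xi))\le\Lang(\A)(\lambda(\xi_{0}))$. Putting these together,
\begin{math}
\sem{\xi,\varphi}-\varepsilon\;\le\;\Lang(\A)(\lambda(\xi))\;\le\;\Lang(\A)(\lambda(\xi_{0}))\;\le\;\sem{\xi_{0},\varphi}\enspace.
\end{math}
Since $\xi$ was arbitrary, taking the supremum over $\xi\in\pathrm(\K)$ yields $\sup_{\xi\in\pathrm(\K)}\sem{\xi,\varphi}-\varepsilon\le\sem{\xi_{0},\varphi}$, which is exactly the condition in Def.~\ref{def:nearOptimalPathSynthesis}. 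It is essential here that the two inequalities of Cor.~\ref{cor:NBAforLTL} are applied at different arguments---the attained optimum $\xi_{0}$ for the upper one and a competitor $\xi$ for the lower one---so that only a single copy of $\varepsilon$ is lost rather than two.

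For the ``moreover'' part I would recall that the algorithm underlying Lem.~\ref{lem:lassoOptimalityForQuantitativeAcceptAutom} returns the optimal run in lasso form, i.e.\ $\rho=u\,v^{\omega}$ for finite words $u,v$ over $1\times Q'$; projecting to the $\K$-component then shows that $\xi_{0}$ can be taken ultimately periodic.

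The bulk of the real work is not in this theorem but in the results it cites---the correctness estimate of Cor.~\ref{cor:NBAforLTL} (ultimately Lem.~\ref{lem:correctnessOfA} together with the analysis of discount sequences and the event-horizon cut-off) and the optimality transfer of Lem.~\ref{lem:fromOptimalInATimesKToOptimalInA}. Within the present proof the only genuinely delicate point is the $\varepsilon$-gap bookkeeping highlighted above; the remaining obligations are mild sanity checks, namely that $\pathrm(\K)\ne\emptyset$ (so that the product has an initial state and an optimal run exists) and that $\sup_{\xi}\sem{\xi,\varphi}$, although possibly not a maximum (Example~\ref{exam:optimalityNotAchievable}), still behaves correctly under the chain of inequalities since every step of that chain is an honest pointwise inequality before the supremum is taken.
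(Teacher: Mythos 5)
Your proposal is correct and follows essentially the same route as the paper: the paper's proof is exactly the chain $\sem{\xi_{0},\varphi}\ge\Lang(\Ana_{\varphi,\varepsilon})(\lambda(\xi_{0}))=\max_{\xi}\Lang(\Ana_{\varphi,\varepsilon})(\lambda(\xi))\ge\sup_{\xi}\sem{\xi,\varphi}-\varepsilon$, using Cor.~\ref{cor:NBAforLTL} on both ends and Lem.~\ref{lem:fromOptimalInATimesKToOptimalInA} in the middle, with ultimate periodicity coming from the lasso computation of Lem.~\ref{lem:lassoOptimalityForQuantitativeAcceptAutom}. Your explicit remark about applying the two bounds of Cor.~\ref{cor:NBAforLTL} at different arguments so that only one $\varepsilon$ is lost is a fair articulation of the same bookkeeping.
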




\auxproof{ We can search for a near-worst scheduler, too, by seeking for an optimal
 path for $\lnot \varphi$.
}




\subsection{Our General Algorithm for
  $\LTLd{\D,\Fcalmc}$}\label{subsec:algorithmWithPropositional}
Our general algorithm works in the setting of $\LTLd{\D,\Fcalmc}$---i.e.\ in
the presence of monotone and continuous propositional quality operators like $\oplus$---where 
threshold model checking is potentially undecidable~\cite{AlmagorBK14}
and therefore the binary-search algorithm (described after
Def.~\ref{def:nearOptimalPathSynthesis}) may not work. 

The general algorithm is a (rather straightforward)
adaptation of the one we described for
$\LTLd{\D,\emptyset}$
(\S{}\ref{subsec:algorithmWithoutPropositional}). Here
 we construct the 
  alternating $[0,1]$-acceptance automaton $\A_{\varphi,\varepsilon}$
  \emph{inductively} on the construction on the formula $\varphi$:
       \begin{itemize}
	\item When the outermost connective is other than a
	      propositional quality operator, the construction is 
	      much like in Def.~\ref{def:ABAforLTL}.
	\item When the outermost connective is  a
	      propositional quality operator, we rely on
	      Prop.~\ref{prop:ClosedUnderIncOperator}. 
       \end{itemize}
 The rest of the algorithm (i.e.\ the part described
 in~\S{}\ref{subsubsec:theAlgorithmWithoutPropositional}) remains
 unchanged. 
 An extensive description of the details of the construction
 is deferred to Appendix~\ref{appendix:generalAlgorithmDetails}.

\begin{thm}[main theorem, optimal scheduling for $\LTLd{\D,\Fcalmc}$]\label{thm:main}
 In the setting of Def.~\ref{def:nearOptimalPathSynthesis}, 
 assume that $\Fcal\subseteq\Fcalmc$ (i.e.\ all the propositional
 quality operators in $\varphi$ are monotone and continuous).
 Then the near-optimal scheduling problem is decidable.
\qed
\end{thm}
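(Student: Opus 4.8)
The plan is to reduce the problem to the algorithmic pipeline already developed for $\LTLd{\D,\emptyset}$ in Theorem~\ref{thm:mainWithoutPropositional}, by constructing an alternating $[0,1]$-acceptance automaton $\A_{\varphi,\varepsilon}$ \emph{inductively on the structure of $\varphi$} and proving an analogue of the correctness lemma, Lemma~\ref{lem:correctnessOfA}. First I would split the induction on the outermost connective of $\varphi$. If this connective is one of $\true$, $p$, $\lnot$, $\land$, $\lor$, $\X$, $\U$, $\U_{\eta}$, I reuse the ``temporal shell'' of Definition~\ref{def:ABAforLTL} essentially verbatim: states are pairs $(\psi,\vec d)$ of a subformula and a discount sequence, the transition function is as in Table~\ref{table:transitionFuncOfAp}, the event horizon is triggered by $\eta(0)\cdot\prod_{i}d_{i}\le\varepsilon$, and the parity of $|\vec d|$ separates least from greatest fixed points. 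The one change is that a state $(\psi,\vec d)$ whose formula $\psi=f(\psi_{1},\dotsc,\psi_{k})$ has a genuine propositional quality operator at its head is treated as a ``generalized atom'': instead of unfolding $\psi$, we recursively build non-alternating $[0,1]$-acceptance automata $\A_{\psi_{i},\varepsilon_{i}}$, combine them via Proposition~\ref{prop:ClosedUnderIncOperator} into $f(\A_{\psi_{1},\varepsilon_{1}},\dotsc,\A_{\psi_{k},\varepsilon_{k}})$, graft this sub-automaton onto the state $(\psi,\vec d)$, and post-compose its acceptance values with the affine map $v\mapsto\vec d\boxtimes v$ so that the accumulated discounting is applied correctly. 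Negations sitting above such an $f$ are absorbed by exactly the same parity-of-$|\vec d|$ discipline as in Definition~\ref{def:ABAforLTL}, using that $\Fcalmc$ is closed under the De~Morgan-type dual $f\mapsto 1-f(1-\place)$, which keeps the affine map monotone increasing and hence compatible with the supremum in the definition of $\Lang$. When the outermost connective of $\varphi$ is itself such an $f$, we skip the shell and take $\A_{\varphi,\varepsilon}=f(\A_{\varphi_{1},\varepsilon_{1}},\dotsc,\A_{\varphi_{k},\varepsilon_{k}})$ directly.

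Finiteness of $\A_{\varphi,\varepsilon}$ is then routine: the recursion runs over the finitely many subformulas of $\varphi$, each temporal shell is finite-state by the proof of Lemma~\ref{lem:AphiEpsIsFiniteState}, and the operations of Propositions~\ref{prop:ABAtoNBA} and~\ref{prop:ClosedUnderIncOperator} preserve finiteness. Once $\A_{\varphi,\varepsilon}$ is built, the remaining steps are exactly those of \S\ref{subsubsec:theAlgorithmWithoutPropositional}: translate it into a non-alternating $[0,1]$-acceptance automaton $\Ana_{\varphi,\varepsilon}$ by Proposition~\ref{prop:ABAtoNBA}, form the product $\Ana_{\varphi,\varepsilon}\times\K$ as in Definition~\ref{def:productAutomata}, compute an optimal run and the optimal value by the lasso search of Lemma~\ref{lem:lassoOptimalityForQuantitativeAcceptAutom}, and read off an $\varepsilon$-optimal path via Lemma~\ref{lem:fromOptimalInATimesKToOptimalInA}. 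Each of these steps is effective and operates on finite objects, so the whole procedure decides the near-optimal scheduling problem.

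The crux, and the step I expect to be the main obstacle, is the inductive correctness estimate $\sem{\pi,\varphi}-\varepsilon\le\Lang(\A_{\varphi,\varepsilon})(\pi)\le\sem{\pi,\varphi}$ for every computation $\pi$. The upper bound propagates for free: monotonicity of $f$ gives $\Lang\bigl(f(\A_{1},\dotsc,\A_{k})\bigr)(\pi)=f\bigl(\Lang(\A_{1})(\pi),\dotsc,\Lang(\A_{k})(\pi)\bigr)\le f\bigl(\sem{\pi,\varphi_{1}},\dotsc,\sem{\pi,\varphi_{k}}\bigr)=\sem{\pi,\varphi}$ from the inductive hypotheses, while the affine grafting map $v\mapsto\vec d\boxtimes v$ has slope $\pm\prod_{i}d_{i}$, hence is nonexpansive and never widens the gap. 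The lower bound is where continuity is essential: writing $v_{i}=\sem{\pi,\varphi_{i}}$ and $\tilde v_{i}=\Lang(\A_{\varphi_{i},\varepsilon_{i}})(\pi)\in[v_{i}-\varepsilon_{i},v_{i}]$, monotonicity yields $f(\tilde v_{1},\dotsc,\tilde v_{k})\ge f(v_{1}-\varepsilon_{1},\dotsc,v_{k}-\varepsilon_{k})$, so it suffices to bound $f(v_{1},\dotsc,v_{k})-f(v_{1}-\varepsilon_{1},\dotsc,v_{k}-\varepsilon_{k})$ by a prescribed share of $\varepsilon$ \emph{uniformly in} $\vec v\in[0,1]^{k}$. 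Since $[0,1]^{k}$ is compact, $f$ is uniformly continuous there, so for any target $\delta>0$ there is $\delta'>0$ with $\|\vec x-\vec y\|_{\infty}\le\delta'\Rightarrow|f(\vec x)-f(\vec y)|\le\delta$; choosing each child margin $\varepsilon_{i}\le\delta'$ and distributing the global $\varepsilon$ over the formula tree — so that every temporal shell still retains a strictly positive slice of margin to trigger its event horizon, while the cumulative error injected by the finitely many quality-operator nodes along each branch stays within budget — makes the total approximation error at most $\varepsilon$. This bookkeeping is the only genuinely delicate point, and it is precisely here that continuity of the operators in $\Fcalmc$, not merely monotonicity, is indispensable; with the estimate in hand, correctness of the computed path follows from Lemmas~\ref{lem:lassoOptimalityForQuantitativeAcceptAutom} and~\ref{lem:fromOptimalInATimesKToOptimalInA} exactly as in the proof of Theorem~\ref{thm:mainWithoutPropositional}.
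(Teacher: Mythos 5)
Your proposal is correct and follows essentially the same route as the paper's proof (Lemma~\ref{lem:correctnessWithPropositional} in Appendix~\ref{appendix:generalAlgorithmDetails}): an inductive construction of $\A_{\varphi,\varepsilon}$ that reuses the temporal shell of Def.~\ref{def:ABAforLTL} and handles a head operator $f\in\Fcalmc$ via Prop.~\ref{prop:ClosedUnderIncOperator}, with the lower bound of the correctness estimate obtained from uniform continuity of $f$ on the compact cube $[0,1]^{k}$ (Heine--Cantor), the even-parity case reduced to the odd one by the dual $1-d_{n}\cdot f(1-\place)$, and the rest of the pipeline (Prop.~\ref{prop:ABAtoNBA}, the product with $\K$, and Lemmas~\ref{lem:lassoOptimalityForQuantitativeAcceptAutom} and~\ref{lem:fromOptimalInATimesKToOptimalInA}) unchanged. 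The only cosmetic difference is that the paper parametrizes the induction by the accumulated discount sequence $\vec d$ (proving $\vec d\boxtimes\sem{\pi,\varphi}-\varepsilon\le\Lang(\A^{\vec d}_{\varphi,\varepsilon})(\pi)\le\vec d\boxtimes\sem{\pi,\varphi}$ and choosing the children's margin $\varepsilon'$ against $\varepsilon/\prod_{i}d_{i}$), whereas you fold the factor $\vec d\boxtimes(\place)$ into a nonexpansive post-composition, which amounts to the same bookkeeping.
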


\subsection{Complexity}
\label{subsec:complexity}
The two parameters $\D$ and $\Fcal$ in $\LTLd{\D,\Fcal}$---i.e.\ discounting functions
(Def.~\ref{def:discountFunction}) and propositional quality operators
(Def.~\ref{def:propositionalQualityOperator})---are both relevant to
 the  complexity of our algorithm. Formulating a
complexity result is hard when these parameters are left open.  We therefore
restrict to:
\begin{itemize}
 \item exponential discounting functions
(Def.~\ref{def:discountFunction}), i.e.\ $\D=\Dexp = \{ \expo_{\lambda} \mid \lambda
\in (0,1) \cap \Q \}$, as is
       done
in~\cite{AlmagorBK14}; and
 \item the average operator $\oplus$, i.e.\ $\Fcal=\{\oplus\}$.
\end{itemize} 
%
We use the definition
 $|\langle \varphi \rangle|$ 
of the size of a formula $\varphi$, which is from~\cite{AlmagorBK14}: it
reflects the description length of $\lambda\in\Q$ that appears in
 discounting
functions, as well as the length of $\varphi$
as an expression.


\begin{prop}[size of  $\A_{\varphi,\varepsilon}$]\label{prop:sizeOfAltAutom}
  Let $\varphi$ be an $\LTLd{\Dexp,\{\oplus\}}$ formula and $\varepsilon\in(0,1)\cap\Q$
 be a positive rational number.  The size of the state space of
 the alternating $[0,1]$-acceptance automaton
 $\A_{\varphi,\varepsilon}$ is singly exponential in $|\langle \varphi
 \rangle|$ and in the length of the description of $\varepsilon$. \qed
\end{prop}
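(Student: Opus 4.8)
The plan is to bound, separately, the number of reachable formula-components and the number of reachable discount sequences that can occur in the states $(\psi,\vec d)$ of $\A_{\varphi,\varepsilon}$, to show each of the two counts is singly exponential, and to conclude by taking their product. I would first treat the $\oplus$-free case ($\varphi\in\LTLd{\Dexp,\emptyset}$, i.e.\ the automaton of Def.~\ref{def:ABAforLTL}) and then reduce the general case to it. The first step is to pin down the event horizon quantitatively: let $\Lambda$ be the set of bases $\lambda$ with $\expo_{\lambda}$ occurring in $\varphi$ (so $|\Lambda|\le|\varphi|$) and let $n_{\varepsilon}$ be the description length of $\varepsilon$; writing $\lambda=a/b$ in lowest terms one has $\log(1/\lambda)\ge 2^{-O(|\langle\varphi\rangle|)}$ while $\log(1/\varepsilon)\le n_{\varepsilon}$, so the least $K$ with $\lambda^{K}\le\varepsilon$ for all $\lambda\in\Lambda$ satisfies $\log K=O(|\langle\varphi\rangle|+n_{\varepsilon})$, i.e.\ $K$ is singly exponential. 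The case split in Table~\ref{table:transitionFuncOfAp} then shows that $(\psi_{1}\U_{\eta^{+k}}\psi_{2},\vec d)$ is reachable only via repeated use of \eqref{eq:defDeltaWithinEventHorizon}, which at stage $k-1$ requires $\lambda^{k-1}\prod_{i}d_{i}>\varepsilon$ and hence forces $k\le K$ (as $\prod_{i}d_{i}\le1$). Therefore the reachable formula-components all lie in $\Sub(\varphi)\cup\{\psi_{1}\U_{\eta^{+k}}\psi_{2}\mid\psi_{1}\U_{\eta}\psi_{2}\in\Sub(\varphi),\,k\le K\}$, a set of size at most $|\varphi|\cdot(K+1)$; this is the quantitative form of Lem.~\ref{lem:AphiEpsIsFiniteState}.

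For the discount sequences I would establish two facts. First, along any path of $\A_{\varphi,\varepsilon}$ from $(\varphi,\langle1\rangle)$ the length $|\vec d|$ grows only through the negation clause \eqref{eq:201507281451}, $\delta((\lnot\psi,\vec d),\sigma)=\delta((\psi,\vec d\concatseq1),\sigma)$, whereas every other clause---including the $\U$- and $\U_{\eta}$-``loops''---leaves $|\vec d|$ unchanged; since each such $\concatseq$ strips one occurrence of $\lnot$ and the formula-component never re-acquires a superformula, we get $|\vec d|\le|\varphi|+1$. Second, each entry of a reachable $\vec d$ is a product of factors of the form $\eta^{+k}(0)=\lambda^{k}$ with $\lambda\in\Lambda$ and $k\le K$ (these being exactly the factors $\odot$ introduces in \eqref{eq:defDeltaWithinEventHorizon}, again capped by the horizon), and the number of factors in a single entry is at most $|\varphi|$, because between two negation-crossings a factor is added precisely when the run descends into the body of some $\U_{\eta}$-subformula occurrence, and such descents form a strictly decreasing chain in the subformula order (once a body is entered, that occurrence is never revisited). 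Hence each entry takes one of at most $(2|\varphi|K)^{|\varphi|}$ values, the number of reachable $\vec d$ is at most $(|\varphi|+2)\,(2|\varphi|K)^{|\varphi|(|\varphi|+1)}$, and, multiplying by the formula-component bound, $|Q|\le 2^{\,p(|\langle\varphi\rangle|,\,n_{\varepsilon})}$ for a polynomial $p$---singly exponential.

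It remains to handle $\oplus$. Recall from \S\ref{subsec:algorithmWithPropositional} that for $\varphi\in\LTLd{\Dexp,\{\oplus\}}$ the automaton $\A_{\varphi,\varepsilon}$ is assembled by induction on $\varphi$: non-$\oplus$ connectives are treated by the closure construction (Def.~\ref{def:ABAforLTL}, with the discount-sequence bookkeeping above) and $\oplus$ by the product construction of Prop.~\ref{prop:ClosedUnderIncOperator}, and the margin can be kept at $\varepsilon$ throughout since $v_{1}\oplus v_{2}$ is $1$-Lipschitz in each coordinate, so $\varepsilon_{1}=\varepsilon_{2}=\varepsilon$ still yields margin $\varepsilon$. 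Writing $\chi_{1},\dots,\chi_{\ell}$ for the maximal $\oplus$-free subformulas of $\varphi$ (so $\ell\le|\varphi|$), the state space of $\A_{\varphi,\varepsilon}$ embeds, up to the discount-sequence bookkeeping already counted above, into $\prod_{j=1}^{\ell}Q_{j}$ with each $Q_{j}$ singly exponential by the previous paragraph; a product of at most $|\varphi|$ such factors is again singly exponential, which finishes the proof. The hard part is the second fact in the paragraph above: a naive argument would only say that an entry of $\vec d$ is ``some product of discount factors'', permitting products of exponentially many factors and hence a doubly-exponential count; it is precisely the combination of the event-horizon cap $k\le K$ on each factor's exponent and the strictly descending chain of $\U_{\eta}$-occurrences of length $\le|\varphi|$ feeding any one entry that keeps the bound singly exponential.
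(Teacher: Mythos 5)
Your proof is correct, but it follows a different route from the paper's proof of this proposition. The paper proves Prop.~\ref{prop:sizeOfAltAutom} by structural induction on $\varphi$ over the inductive construction of $\A^{\vec d}_{\varphi,\varepsilon}$ from Lem.~\ref{lem:correctnessWithPropositional}: the only quantitative input is that $k_{\max}\sim\log_{\lambda}\varepsilon$ is small, each $\U_{\expo_{\lambda}}$-node multiplies the size by $\mathcal{O}(k_{\max})$ (via the backward induction on $k$), and each $\oplus$-node takes a product of the constituent automata, the margin being preserved exactly because $(v_{1}-\varepsilon)\oplus(v_{2}-\varepsilon)=(v_{1}\oplus v_{2})-\varepsilon$ --- the same observation you make. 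Your treatment of the $\oplus$-free core is instead a direct count of the reachable states of the automaton of Def.~\ref{def:ABAforLTL}, bounding separately the formula components (via the event horizon) and the discount sequences (via their length, the number of factors per entry, and the factor alphabet); this is essentially the argument the paper gives elsewhere, in Sublem.~\ref{sublem:sizeOfAltAutomWithoutPropositional}, as an ``alternative proof'' for the $\oplus$-free fragment. One genuine difference of detail: the paper bounds the number of base-factors in an entry $d_{i}$ by $\log_{\lambda_{0}}\varepsilon$ using the semantic constraint $d_{i}>\varepsilon$, whereas you bound the number of factors by $|\varphi|$ via the syntactic descent argument (each factor comes from a distinct $\U_{\eta}$-ancestor on a chain in the subformula order); both are valid, and your explicit justification of why accumulation does not continue across transitions is a point the paper leaves implicit. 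Two harmless discrepancies: you only establish $K=2^{O(|\langle\varphi\rangle|+n_{\varepsilon})}$ where the paper (with its convention for the description length of $\lambda=a/b$) asserts $k_{\max}$ is polynomial --- your weaker bound still yields a singly exponential state count since $K$ enters your final bound only polynomially inside the exponent; and your claim that the state space ``embeds into $\prod_{j}Q_{j}$'' over the maximal $\oplus$-free subformulas is slightly loose when $\oplus$ is nested under temporal operators (the outer skeleton and the dualization step for even $|\vec d|$ contribute extra states), but these are absorbed by the polynomial slack and do not affect the conclusion.
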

\begin{thm}[complexity for $\LTLd{\Dexp,\{\oplus\}}$] \label{thm:mainComplexity}
The near-optimal scheduling problem for $\LTLd{\Dexp,\{\oplus\}}$ is: in
 EXPSPACE in $|\langle\varphi\rangle|$ and in the description length of
 $\varepsilon$; and in NLOGSPACE in the size of $\K$. 
\qed
\end{thm}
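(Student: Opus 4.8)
The statement is essentially a bookkeeping exercise: we chain the constructions of \S\ref{subsec:algorithmWithoutPropositional}--\S\ref{subsec:algorithmWithPropositional}, track the sizes of the automata involved, and then charge the cost of the optimal-value search of Lem.~\ref{lem:lassoOptimalityForQuantitativeAcceptAutom}, being careful never to \emph{materialize} the large intermediate automata but to explore them on the fly. For $\Fcal=\{\oplus\}$ the binary-search route is unavailable (threshold model checking is undecidable, Thm.~\ref{thm:averageUndecidable}), so the algorithm to be analysed is the automata-theoretic one: given $\varphi$ and $\varepsilon$ we (i) build the alternating $[0,1]$-acceptance automaton $\A_{\varphi,\varepsilon}$, inductively on $\varphi$, using Table~\ref{table:transitionFuncOfAp} (Def.~\ref{def:ABAforLTL}) for the temporal and Boolean connectives and the product construction of Prop.~\ref{prop:ClosedUnderIncOperator} at each occurrence of $\oplus$; (ii) turn it into a non-alternating $[0,1]$-acceptance automaton $\Ana_{\varphi,\varepsilon}$ by the exposition-flag adaptation of Miyano--Hayashi (Prop.~\ref{prop:ABAtoNBA}, Cor.~\ref{cor:NBAforLTL}); (iii) form the product $\Ana_{\varphi,\varepsilon}\times\K$ (Def.~\ref{def:productAutomata}); and (iv) run the lasso search of Lem.~\ref{lem:lassoOptimalityForQuantitativeAcceptAutom}. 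Correctness of the path produced is inherited from Lem.~\ref{lem:correctnessOfA}, Cor.~\ref{cor:NBAforLTL}, Lem.~\ref{lem:fromOptimalInATimesKToOptimalInA} and Thm.~\ref{thm:main}; only the resource usage remains to be argued.

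\textbf{Size bounds.} By Prop.~\ref{prop:sizeOfAltAutom} the state space of $\A_{\varphi,\varepsilon}$ is singly exponential in $|\langle\varphi\rangle|$ and in the description length of $\varepsilon$ (this already absorbs the product blow-ups caused by the $\oplus$'s, as that proposition is stated for $\LTLd{\Dexp,\{\oplus\}}$). The powerset-style construction of Prop.~\ref{prop:ABAtoNBA} then costs one further exponential, so $\Ana_{\varphi,\varepsilon}$ has doubly-exponentially many states in $|\langle\varphi\rangle|$; crucially, however, a \emph{single} state of $\Ana_{\varphi,\varepsilon}$ --- a pair of subsets of the state space of $\A_{\varphi,\varepsilon}$ together with an exposition flag --- is describable in space singly exponential in $|\langle\varphi\rangle|$ and in the description length of $\varepsilon$, and its acceptance value is a dyadic rational built from the rationals $\lambda$ occurring in the exponential discounting functions and from $\varepsilon$, via the $\boxtimes$ operator and averaging, hence of polynomially bounded bit-length. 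Forming the product with $\K$ multiplies the state count by $|W|$ and enlarges the description of a state by only $\log|W|$ further bits.

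\textbf{Running the search in the claimed space.} The algorithm of Lem.~\ref{lem:lassoOptimalityForQuantitativeAcceptAutom} is, as its proof shows, an emptiness-check-style lasso search: it ranges over candidate ``knot'' states, checks reachability from an initial state and self-reachability, and retains the knot with the largest acceptance value. Each such reachability query is solvable nondeterministically in space logarithmic in the number of states of the product automaton; since that number is the product of a quantity doubly exponential in $|\langle\varphi\rangle|$ and in the description length of $\varepsilon$ with $|\K|$, the working space used is singly exponential in $|\langle\varphi\rangle|$ and in the description length of $\varepsilon$, plus an additive $O(\log|\K|)$ term --- i.e.\ EXPSPACE in $|\langle\varphi\rangle|$ and in the description length of $\varepsilon$, and NLOGSPACE in the size of $\K$. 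Everything here is done implicitly: a transition of $\A_{\varphi,\varepsilon}$ is computed directly from Table~\ref{table:transitionFuncOfAp} and Prop.~\ref{prop:ClosedUnderIncOperator}; the event-horizon test $\eta(0)\cdot\prod_{i}d_{i}\le\varepsilon$ is a comparison of rationals of polynomial bit-length; and a Miyano--Hayashi successor set is generated element by element. A successor or predecessor step in $\Ana_{\varphi,\varepsilon}\times\K$ therefore costs exponential space in $|\langle\varphi\rangle|$ and logarithmic space in $|\K|$, which is exactly what the nondeterministic reachability procedure needs to hold its single ``current vertex''.

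\textbf{Expected obstacle.} The delicate part is making ``a state of $\A_{\varphi,\varepsilon}$ fits in polynomial space'' precise: one must know that the discount sequences $\vec d$ arising as second components of reachable states have length bounded by the fixed-point alternation depth of $\varphi$ and entries drawn from a set of rationals of polynomially bounded bit-length, so that subsets of such states --- the Miyano--Hayashi states --- really fit in exponential space, and likewise that the acceptance values stay of polynomial bit-length under the $\boxtimes$ and averaging operations. This is precisely the content underlying Prop.~\ref{prop:sizeOfAltAutom}, which we may invoke; the residual care is to verify that the two parameters genuinely separate --- that $|\K|$ enters the product only as a multiplicative factor on the state count, hence enters the space only through an additive $\log|\K|$ term --- which is immediate from Def.~\ref{def:productAutomata}.
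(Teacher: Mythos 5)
Your proposal is correct and follows essentially the same route as the paper's proof: bound $\A_{\varphi,\varepsilon}$ by Prop.~\ref{prop:sizeOfAltAutom}, accept one further exponential from Prop.~\ref{prop:ABAtoNBA}, take the product with $\K$, and observe that the lasso search of Lem.~\ref{lem:lassoOptimalityForQuantitativeAcceptAutom} is NLOGSPACE in the (doubly exponential) number of states, hence EXPSPACE in $|\langle\varphi\rangle|$ and the description of $\varepsilon$ and NLOGSPACE in $|\K|$. You merely make explicit the on-the-fly exploration and bit-length bookkeeping that the paper's terse proof leaves implicit (one small quibble: the acceptance values are rationals of polynomial bit-length, not dyadic rationals, but this does not affect the bound).
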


In case of absence of propositional quality operators
(i.e.\ $\LTLd{\Dexp,\emptyset}$), we can further 
optimize the complexity by using a heuristic and  avoiding the exponential blowup  from 
$\A_{\varphi,\varepsilon}$ to  $\Ana_{\varphi,\varepsilon}$. This yields 
the following complexity result, which is also achievable by the
binary-search
algorithm.

\begin{thm}[complexity for $\LTLd{\Dexp,\emptyset}$] \label{thm:mainComplexityWithoutPropositional}
The near-optimal scheduling problem for $\LTLd{\Dexp,\emptyset}$ is: in
 PSPACE in $|\langle\varphi\rangle|$ and in the description length of
 $\varepsilon$; and in NLOGSPACE in the size of $\K$. 
\qed
\end{thm}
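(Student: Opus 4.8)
\quad
The plan is to run the automata-theoretic algorithm of \S\ref{subsec:algorithmWithoutPropositional} (Def.~\ref{def:ABAforLTL}, Cor.~\ref{cor:NBAforLTL}, Def.~\ref{def:productAutomata} and Thm.~\ref{thm:mainWithoutPropositional}) \emph{on the fly}, so that it uses only polynomial work space in $|\langle\varphi\rangle|$ and in the description length of $\varepsilon$, and only logarithmic work space in the size of $\K$. The decisive point is to never materialise the non-alternating automaton $\Ana_{\varphi,\varepsilon}$: its state space is a Miyano--Hayashi-style powerset of the (already exponential-size) state space of $\A_{\varphi,\varepsilon}$, which is what produces the doubly-exponential blowup behind the EXPSPACE bound of Thm.~\ref{thm:mainComplexity}.

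First I would isolate the ``small description'' facts about $\A_{\varphi,\varepsilon}$ for $\D=\Dexp$ and $\Fcal=\emptyset$, refining Prop.~\ref{prop:sizeOfAltAutom}. A state is a pair $(\psi,\vec d)$ with $\psi\in\mathit{xcl}(\varphi)$ and $\vec d\in[0,1]^{+}$. The event-horizon case split of Table~\ref{table:transitionFuncOfAp} (on whether $\eta(0)\cdot\prod_i d_i\le\varepsilon$) has three effects: (i) in a reachable state $\U_{\eta^{+k}}$ can occur only for $k$ below the event horizon of the corresponding $\expo_\lambda$, which is polynomial in $|\langle\varphi\rangle|$ and in the description length of $\varepsilon$, so the reachable part of $\mathit{xcl}(\varphi)$ has only polynomially many distinct formula components; (ii) each entry of a reachable $\vec d$ is a product of powers $\lambda^{k}$ of rationals $\lambda$ occurring in $\varphi$ with such bounded exponents $k$, hence a rational of polynomial description length; (iii) $|\vec d|$ never exceeds the negation-nesting depth of $\varphi$, since $\vec d$ lengthens only via~(\ref{eq:201507281451}). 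Consequently every reachable state has a polynomial-length description, membership ``$(\psi,\vec d)$ is reachable'' and the value $\delta((\psi,\vec d),\sigma)$ are computable in polynomial space, and in fact $\A_{\varphi,\varepsilon}$ admits a \emph{succinct} representation of polynomial size: polynomially many formula components, with discount sequences carried as derived annotations of polynomial size.

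On this succinct representation I would perform the determinisation of Prop.~\ref{prop:ABAtoNBA} and the optimal-lasso search of Lem.~\ref{lem:lassoOptimalityForQuantitativeAcceptAutom} on the fly, over configurations of the product with $\K$. A configuration stores: a state of $\K$ (using $O(\log|\K|)$ bits), a macrostate of $\Ana_{\varphi,\varepsilon}$ represented compactly as a set over the polynomially-many formula components together with a polynomial-size record of accumulated discount factors (this is exactly the ``heuristic'' alluded to before the statement, and it is here that one avoids writing a macrostate out as an exponential set of $\A_{\varphi,\varepsilon}$-states), a guessed target acceptance value --- only $0$ and $1$ occur, by~(\ref{eq:acceptanceFunc}) --- and the lasso knot under construction. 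Successor configurations are computed using the polynomial-space evaluation of $\delta$ and the Miyano--Hayashi successor rule; reading off the near-optimal ultimately periodic path then follows Lem.~\ref{lem:lassoOptimalityForQuantitativeAcceptAutom} and Lem.~\ref{lem:fromOptimalInATimesKToOptimalInA}. Since the configurations have size $\mathrm{poly}(|\langle\varphi\rangle|,\ \text{description length of }\varepsilon)+O(\log|\K|)$, guessing such a lasso is a nondeterministic reachability computation in that space; the procedure is thus in NPSPACE and, for fixed $\varphi,\varepsilon$, in NLOGSPACE in $|\K|$, so by Savitch's theorem it is in PSPACE in $|\langle\varphi\rangle|$ and the description length of $\varepsilon$, with the $O(\log|\K|)$ dependence on $\K$ unaffected. (Alternatively, the same bound follows from the binary-search algorithm described after Def.~\ref{def:nearOptimalPathSynthesis}, each of whose $O(\text{description length of }\varepsilon)$ rounds is a threshold model-checking query that is in PSPACE in $|\langle\varphi\rangle|$ and in NLOGSPACE in $|\K|$ by Thm.~\ref{thm:thresholdModelCheckingDecidable}.)

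\noindent\textbf{Main obstacle.}\quad
The one genuinely delicate step is keeping the on-the-fly macrostates polynomial: unlike in the textbook LTL-to-alternating-B\"uchi translation, where the alternating automaton already has only linearly many states, $\A_{\varphi,\varepsilon}$ has exponentially many states because of the discount sequences, so the naive Miyano--Hayashi macrostate is of exponential size. Making precise that the discount data of all copies of $\A_{\varphi,\varepsilon}$ that are simultaneously alive on a given input is redundant given the word read so far --- and checking that the resulting compact representation is preserved by the successor rule, by the negation unfolding~(\ref{eq:201507281451}) and by the event-horizon cut-off --- is where the real combinatorial work lies. Everything else is routine bookkeeping on top of \S\ref{subsec:algorithmWithoutPropositional} and the event-horizon size bounds.
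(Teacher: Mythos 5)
Your plan correctly identifies both the target (never materialise the Miyano--Hayashi powerset of the exponential-size $\A_{\varphi,\varepsilon}$) and the exact place where the argument must do real work, but it stops precisely there: the assertion that a macrostate can be stored as ``a set over the polynomially-many formula components together with a polynomial-size record of accumulated discount factors'' is the statement that needs proof, and the justification you sketch --- that the discount data of all simultaneously alive copies is \emph{redundant given the word read so far} --- is not right. The set of copies alive at a given level of the run DAG depends on the nondeterministic choices made along the way (which disjunct of each until-unfolding was taken), not only on the input prefix; so a single formula component $\psi$ can in general be paired with many distinct discount sequences $\vec{d}_{1},\dotsc,\vec{d}_{m}$ inside one macrostate, and nothing in your proposal bounds $m$ or compresses this set. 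Flagging this as ``where the real combinatorial work lies'' is accurate, but it leaves the proof with a genuine gap at its only nontrivial step.

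The paper closes that gap with a specific observation you are missing: the only role of the set $\{(\psi,\vec{d}_{1}),\dotsc,(\psi,\vec{d}_{m})\}$ is to act on a truth value by $v\mapsto\min_{j}\bigl(\vec{d}_{j}\boxtimes v\bigr)$, which is a piecewise linear function on $[0,1]$ with at most $|Q|$ pieces, each piece being some $\vec{d}_{j}\boxtimes(\place)$ restricted to an interval whose endpoints are crossing points of two such linear maps. Since the choice of the active piece can be externalised into the nondeterminism of the non-alternating automaton, a macrostate need only record, per formula component, one linear piece and its two endpoints --- an element of $(Q\times Q^{2}\times Q^{2})\cup\{\bullet\}$ --- yielding the state space~(\ref{eq:QfiveTimes}), which is singly exponential with polynomially describable states; your on-the-fly NPSPACE/NLOGSPACE reachability argument then goes through. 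Your parenthetical fallback via binary search is indeed the route the paper also credits with the same bound, but as stated it only resolves threshold queries; since the problem asks for a path, you would still need to argue that the final threshold check yields an ultimately periodic witness (e.g.\ a lasso in the model-checking automaton), which your sketch does not address.
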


\section{Experiments}\label{sec:experiments}
We implemented our algorithm
in~\S{}\ref{sec:nearOptimalSchedulerSynth} that solves the near-optimal
scheduling for $\LTLd{\Dexp,\{\oplus\}}$. The implementation is
in OCaml.
The following  experiments were on
a MacBook Pro laptop
 with a Core i5 processor (2.7 GHz) and 16 GB RAM. 
\begin{table}[tb]
\scriptsize
\begingroup
\renewcommand{\arraystretch}{1.5}
\begin{tabular}{c|c|c|c|c|c|c}
 & \multicolumn{2}{c|}{$\varepsilon = \frac{1}{10}$} & \multicolumn{2}{c|}{$\varepsilon = \frac{1}{50}$} & \multicolumn{2}{c}{$\varepsilon = \frac{1}{100}$} \\ \cline{2-7}
formula $\varphi$ \textbackslash\; \#(states) & \multicolumn{1}{c|}{$\A_{\varphi,\varepsilon}$} & \multicolumn{1}{c|}{$\Ana_{\varphi,\varepsilon}$} & \multicolumn{1}{c|}{$\A_{\varphi,\varepsilon}$} & \multicolumn{1}{c|}{$\Ana_{\varphi,\varepsilon}$} & \multicolumn{1}{c|}{$\A_{\varphi,\varepsilon}$} & \multicolumn{1}{c}{$\Ana_{\varphi,\varepsilon}$} 
 \\ \hhline{=|=|=|=|=|=|=}
${\F_{\expo_{\frac{1}{2}}} p_{1}} $
 & 5 & 10 & 7 & 14 & 8 & 16 \\ \hline
${\F_{\expo_{\frac{99}{100}}} p_{1}} $
 & 231 & 462 & 391 & 782 & 460 & 920 \\ \hline
${\F_{\expo_{\frac{1}{2}}}\G_{\expo_{\frac{1}{2}}} p_{1}}$ & 15 & 36 & 28 & 85 & 36 & 121
\\ \hline\hline
${\F_{\expo_{\frac{1}{2}}} p_{1}} \oplus 
 {\F_{\expo_{\frac{1}{2}}} p_{2}}$
 & 33 & 128 & 61 & 1859 & 78 & 7421 \\ \hline
${\F_{\expo_{\frac{1}{2}}} p_{1}} \oplus 
 {\G_{\expo_{\frac{1}{2}}} p_{2}}$
 & 29 & 272 & 55 & 6659 & 71 & 32703 \\ \hline
${\F_{\expo_{\frac{3}{5}}} p_{1}} \oplus 
 {\F_{\expo_{\frac{3}{5}}} p_{2}}$
 & 46 & 477 & 97 & 29655 & 141 & timeout (2 min.) \\ \hline\hline
${\F(\G p_{1}\oplus\F_{\expo_{\frac{1}{2}}} p_{2})}$ & 14 & 19 & 20 & 27 & 23 & 31
\end{tabular}
\endgroup
\caption{Size of the alternating $[0,1]$-acceptance automaton $\A_{\varphi,\varepsilon}$, and $[0,1]$-acceptance automaton $\A_{\varphi,\varepsilon}$}
\label{table:numberOfStates}

\begingroup
\renewcommand{\arraystretch}{1.5}
\begin{tabular}{c|c|c|c|c}
 margin $\varepsilon$ & \#(states of $\K$) & max.\ outgoing degree of $\K$ & 
 time (sec)  &  space (MB) \\ \hhline{=====}
 $\frac{1}{10}$ & 100 & 3 & 0.085508 & 5.861 \\ \cline{3-5}
  &  & 10 & 0.114427 & 9.368 \\ \cline{2-5}
  & 200 & 3 & 0.186989 & 10.586 \\ \cline{3-5}
  &  & 10 & 0.249392 & 18.216 \\ \cline{1-5}
 $\frac{1}{50}$ & 100 & 3 & 5.928842 & 199.782 \\ \cline{3-5}
  & & 10 & 8.108335 & 405.884 \\ \cline{2-5}
 & 200 & 3 & 10.750703 & 405.313 \\ \cline{3-5}
 &  & 10 & 18.250345 & 851.255 
\end{tabular}
\endgroup
\caption{Time and space consumption of our algorithm for near-optimal scheduling, for the formula ${\Gdisc{\expo_{\frac{1}{2}}} p_{1}} \oplus
 {\Gdisc{\expo_{\frac{1}{2}}} p_{2}}$ and a randomly generated
 Kripke structure $\K$. For each choice of the number of states ($100$ or $200$) and of
 the maximum outgoing degree ($3$ or $10$), we randomly generated $100$
 instances of $\K$ and the above shows the average}
\label{table:timeSpaceConsumption}
 
\begingroup
\renewcommand{\arraystretch}{1.5}
\begin{tabular}{c|c|c|c|c}
 & \multicolumn{2}{c|}{$\G_{\expo_{\frac{1}{2}}}\F p$} & \multicolumn{2}{c}{$\F_{\expo_{\frac{1}{2}}} \G p$} \\ \cline{2-5}
 &  time (sec)  &  space (MB) &  time (sec)  &  space (MB) 
 \\ \hhline{=====}
our algorithm (\S{}\ref{subsec:algorithmWithoutPropositional}) & 18.918600 & 897.111 & 0.019800 & 4.629 \\ \hline
binary search & 0.047200 & 5.140390 & 0.069500 & 5.567
\end{tabular}
\endgroup
\caption{(Comparison with binary search, in absence of $\oplus$) Time and space consumption for near-optimal scheduling, for the margin $\varepsilon=\frac{1}{100}$ and a randomly
 generated Kripke structure $\K$ ($500$ states, max.\ outgoing degree $10$, average over
 $100$ instances)}
\label{table:comparisonWithBinarySearch}
\end{table}
In Table~\ref{table:numberOfStates}, for each choice of $\varphi$ and
$\varepsilon$,  we show the size of the alternating automaton
$\A_{\varphi,\varepsilon}$,
and the non-alternating $\Ana_{\varphi,\varepsilon}$ that results
from $\A_{\varphi,\varepsilon}$. 
The first three rows have no $\oplus$, in which case the implementation
scales well for bigger bases (i.e.\ discount functions that decrease more
slowly). We observe that presence of $\oplus$ incurs substantial
computational costs: the small increase of bases from $\frac{1}{2}$ (the
fourth row) to
 $\frac{3}{5}$ (the sixth row) makes $\A_{\varphi,\varepsilon}$ much
 bigger, resulting in  one timeout. This is as expected, however: 
$\oplus$ makes other problems  harder too, such as model checking (undecidable).

In Table~\ref{table:timeSpaceConsumption} we fix a formula $\varphi={\Gdisc{\expo_{\frac{1}{2}}} p_{1}} \oplus
  {\Gdisc{\expo_{\frac{1}{2}}} p_{2}}$ and measure time and space
 consumption, for various choices of a margin $\varepsilon$ and a Kripke
 structure $\K$. 
 Kripke structures $\K$ were randomly generated: 
we first set the number of states (100 or 200) and the maximum outgoing
  degree of $\K$ (3 or 10); for each state we fixed its outgoing degree,
  from the uniform distribution from $1$ to the maximum (that we had
  already fixed); and then, for each outgoing edge, its target state is chosen
  from the uniform distribution over the set of states.
We observe that time and space consumption grows significantly as the
 problem becomes more difficult. However, for problem instances of a
 considerable size we still see manageable costs: a
 margin $\varepsilon=\frac{1}{50}$ (2\%) is fairly small, and 
 a Kripke structure $\K$ with $200$ states is likely to be capable of
 modeling many communication protocols.

In Table~\ref{table:comparisonWithBinarySearch}, for reference, we
compare our algorithm in~\S{}\ref{subsec:algorithmWithoutPropositional}
with the binary-search algorithm that exploits the model-checking
algorithm in~\cite{AlmagorBK14} (we also implemented the latter). We
emphasize again that the latter does not work in presence of
$\oplus$. Our experience shows that the binary-search algorithm can in
some
cases be faster by a magnitude (e.g.\ for the first formula here), but not
always (for the second formula our algorithm is a few times faster). 

 Those experimental results indicate that, although presence of the
 average operator $\oplus$ incurs significant computational cost (as
 expected), 
 automata-based optimal scheduling for
 $\LTLd{\Dexp,\{\oplus\}}$ is potentially a
 viable approach. It is not that
 our algorithm scales up to huge problem instances, but systems of
 hundreds of states can be handled without difficulties.
Identification of concrete real-world challenges,
 and enhancement of the tool's efficiency to match up to them, is an important
 direction of future work.

\section{Conclusions and Future Work}\label{sec:conclFutureWork}
For the quantitative logic $\LTLd{\Dexp,\Fcal}$ with future
discounting~\cite{AlmagorBK14}, we formulated a natural problem of
synthesizing near-optimal schedulers,
 and presented an algorithm. The latter relies on: the existing 
idea of \emph{event horizon}  exploited in~\cite{AlmagorBK14}  for the
threshold model checking problem, as well as a supposedly widely-applicable 
technique of translation to $[0,1]$-acceptance automata and a
lasso-style optimal value algorithm for them.

Here are several directions of future work.

\noindent
\textbf{Controller Synthesis for Open Systems}
We note that the current results are focused on \emph{closed} systems.
For \emph{open} or \emph{reactive} systems (like a server that responds
to requests that come from the environment) we would wish to synthesize a \emph{controller}---formally  a
\emph{strategy} or a \emph{transducer}---that achieves a near-optimal
performance. 

An envisaged workflow,
following the one in~\cite{Vardi96anautomata-theoretic}, is as follows. 
We will use the same automaton $\A_{\varphi,\varepsilon}$
(Def.~\ref{def:ABAforLTL}). It is then: 1) determinized,
2) transformed into a tree automaton that accepts the desired strategies,
and 3) the optimal value of the tree automaton is checked, much like 
in
Lem.~\ref{lem:lassoOptimalityForQuantitativeAcceptAutom}. While the
step 2) will be straightforward, the steps 1) and 3)
(namely: determinization of $[0,1]$-acceptance automata, and the optimal
value
problem for ``$[0,1]$-acceptance Rabin automata'') are yet to be
investigated.
Another possible workflow is by an adaptation of the Safraless algorithm~\cite{KupfermanPV06}.

\noindent
\textbf{Probabilistic Systems and $\LTLd{\Dexp,\Fcal}$} 
Here and in~\cite{AlmagorBK14} the system model is a Kripke structure
that is nondeterministic. Adding probabilistic branching will gives us 
a set of new problems to be solved: for Markov chains
the threshold
model-checking problem can be formulated; for Markov decision processes, 
we have both the threshold
model-checking problem and the near-optimal scheduling problem.
Furthermore, another axis of variation is given by
whether we consider the expected value or the worst-case value. In the
latter case we would wish to exclude truth values that arise
with probability $0$.
 All these variations have important
applications in various areas.

\paragraph*{Acknowledgments}
Thanks are due to
Shaull Almagor,
  Shuichi Hirahara, and
the anonymous referees,
 for useful discussions and comments.
The authors are supported by
   Grants-in-Aid No.\
24680001, 15KT0012 and 15K11984, JSPS.

\bibliographystyle{plain}  
\bibliography{myref}

\newpage
\appendix

\section{Our General Algorithm for $\LTLd{\D,\Fcalmc}$, Further Details}
\label{appendix:generalAlgorithmDetails}
In this section, we extend~\S{}\ref{subsec:algorithmWithPropositional}
and describe details of the construction of $\A_{\varphi,\varepsilon}$
for a formula $\varphi$ of $\LTLd{\D,\Fcalmc}$. 
We inductively construct an alternating $[0,1]$-acceptance automaton 
$\A_{\varphi,\varepsilon}^{\vec{d}}$---that is also parametrized by 
a discount sequence $\vec{d}$. Then the automaton 
 $\A_{\varphi,\varepsilon}$ is defined by 
$\A_{\varphi,\varepsilon}^{\langle 1 \rangle}$ (for the sequence
$\langle 1 \rangle$ of length one).
\begin{lem}\label{lem:correctnessWithPropositional}
  Let $\varphi$ be an $\LTLd{\D,\Fcalmc}$ formula, $\varepsilon\in(0,1)$ be a
 positive real number, and $\vec{d}=d_{1}d_{2}\dotsc d_{n}$ be a discount sequence (Def.~\ref{def:discountSequence}). There exists an
alternating
$[0,1]$-acceptance automaton $\A_{\varphi,\varepsilon}^{\vec{d}}$ such that,
for each computation $\pi \in (\mathcal{P}(\AP))^{\omega}$, 
  \begin{equation}
    \bigl(\vec{d}\boxtimes\sem{ \pi, \varphi }\bigr) - \varepsilon \;\leq\; \Lang(\A^{\vec{d}}_{\varphi,
   \varepsilon})(\pi)
 \; \leq\; \vec{d}\boxtimes\sem{ \pi, \varphi }\enspace.
  \end{equation}
\end{lem}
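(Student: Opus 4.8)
The plan is to construct $\A^{\vec d}_{\varphi,\varepsilon}$ by induction on $\varphi$ and to establish the two inequalities along the way. As a preparation I would isolate the elementary properties of the action $\boxtimes$ that drive everything: from the explicit form~(\ref{eq:actionExplicitly}), the map $v\mapsto\vec d\boxtimes v$ is affine, $\vec d\boxtimes v = c_{\vec d} + (-1)^{n+1}(d_{1}\cdots d_{n})\,v$ with $c_{\vec d}$ independent of $v$; hence it has slope of absolute value $d_{1}\cdots d_{n}\le 1$ (so it is $1$-Lipschitz), it is nondecreasing in $v$ exactly when $|\vec d|$ is odd and nonincreasing exactly when $|\vec d|$ is even, and consequently $\vec d\boxtimes\min\{a,b\}$ equals $\min\{\vec d\boxtimes a,\vec d\boxtimes b\}$ for $|\vec d|$ odd and $\max\{\vec d\boxtimes a,\vec d\boxtimes b\}$ for $|\vec d|$ even. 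I would also use the identities $\vec d d'\boxtimes v = \vec d\boxtimes(1-d'v)$, $(\vec d\odot d')\boxtimes v=\vec d\boxtimes(d'v)$ (namely~(\ref{eq:odotAndBoxtimes})), $\langle 1\rangle\boxtimes v = v$ and $\langle 1,1\rangle\boxtimes v = 1-v$.

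For the cases where the outermost connective is not a propositional quality operator, the construction is the one already encoded in Table~\ref{table:transitionFuncOfAp}, read with the ambient $\vec d$ in place of $\langle 1\rangle$: a single state for $\true$ and $p$ whose transition is $\vec d\boxtimes 1$ (resp.\ $\vec d\boxtimes 0$), $\A^{\vec d}_{\lnot\psi,\varepsilon}:=\A^{\vec d\concatseq 1}_{\psi,\varepsilon}$ (correct since $\vec d\boxtimes\sem{\pi,\lnot\psi}=\vec d\boxtimes(1-\sem{\pi,\psi})=(\vec d\concatseq 1)\boxtimes\sem{\pi,\psi}$), a one-step delay for $\X\psi$, and the parity-indexed $\land/\lor$ combinations for $\psi_{1}\land\psi_{2}$ and for the fixed points $\psi_{1}\U\psi_{2}$, $\psi_{1}\U_{\eta}\psi_{2}$. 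For $\U_{\eta}$ the event-horizon cutoff is retained: Prop.~\ref{lem:maxTruthValOfDiscountedUntil} bounds the value of such a subformula by $\eta(0)$, this value enters $\vec d\boxtimes(-)$ with weight $\prod_{i}d_{i}$, so once $\eta(0)\prod_{i}d_{i}\le\varepsilon$ replacing it by the fixed value $0$ (odd case) or $\eta(0)$ (even case) costs at most $\varepsilon$; within the horizon one unfolds recursively, folding the freshly seen factor into the sequence via $\vec d\odot\eta(0)$ and using $(\vec d\odot\eta(0))\boxtimes v=\vec d\boxtimes(\eta(0)v)$. The verification of the two inequalities for this block is the argument already carried out for Lemma~\ref{lem:correctnessOfA} (an inner induction on the unfolding depth inside the structural induction), with every occurrence of $\langle 1\rangle$ replaced by $\vec d$ and the slope-$\le 1$ property of $\boxtimes$ controlling the truncation error; I would either invoke that proof or replay it.

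The genuinely new case is $\varphi=f(\psi_{1},\dots,\psi_{k})$ with $f\in\Fcalmc$. Since $f$ is continuous on the compact cube $[0,1]^{k}$ it is uniformly continuous, so I may fix $\varepsilon'>0$ with $\|\vec w-\vec w'\|_{\infty}\le\varepsilon'\Rightarrow|f(\vec w)-f(\vec w')|\le\varepsilon$. Now split on the parity of $|\vec d|$. If $|\vec d|$ is odd, the map $g(\vec w)=\vec d\boxtimes f(\vec w)$ is continuous and monotone, so I take $\A^{\vec d}_{f(\psi_{1},\dots,\psi_{k}),\varepsilon}:=g\bigl(\A^{\langle 1\rangle}_{\psi_{1},\varepsilon'},\dots,\A^{\langle 1\rangle}_{\psi_{k},\varepsilon'}\bigr)$, built by Prop.~\ref{prop:ClosedUnderIncOperator} after turning the sub-automata into non-alternating ones by Prop.~\ref{prop:ABAtoNBA} (a non-alternating $[0,1]$-acceptance automaton is trivially also an alternating one, so the induction proceeds). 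If $|\vec d|$ is even, $g$ is antitone, so I re-monotonize: $g'(\vec w)=\vec d\boxtimes f(1-w_{1},\dots,1-w_{k})$ is monotone and continuous, and evaluating $g'$ at $w_{i}=1-\sem{\pi,\psi_{i}}=\langle 1,1\rangle\boxtimes\sem{\pi,\psi_{i}}$ gives back $\vec d\boxtimes\sem{\pi,f(\psi_{1},\dots,\psi_{k})}$; accordingly I set $\A^{\vec d}_{f(\psi_{1},\dots,\psi_{k}),\varepsilon}:=g'\bigl(\A^{\langle 1,1\rangle}_{\psi_{1},\varepsilon'},\dots,\A^{\langle 1,1\rangle}_{\psi_{k},\varepsilon'}\bigr)$. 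In both subcases the built automaton exactly computes $g$ (resp.\ $g'$) on the numbers $\Lang(\A^{\langle 1\rangle}_{\psi_{i},\varepsilon'})(\pi)$ (resp.\ $\Lang(\A^{\langle 1,1\rangle}_{\psi_{i},\varepsilon'})(\pi)$), which by the induction hypothesis are within $\varepsilon'$ of $\sem{\pi,\psi_{i}}$ (resp.\ of $1-\sem{\pi,\psi_{i}}$) and lie on the correct side; the desired upper bound follows from monotonicity/antitonicity of $f$ and of $\vec d\boxtimes$, and the lower bound from the choice of $\varepsilon'$ together with the slope-$\le 1$ bound on $\vec d\boxtimes$. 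The sub-automata are built only for proper subformulas $\psi_{i}$, so the induction is well-founded.

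The main obstacle is exactly this propositional-operator case. One has to decouple the margin $\varepsilon$ demanded of $f(\psi_{1},\dots,\psi_{k})$ from the margin $\varepsilon'$ used for its arguments, which is possible only because $f$ is \emph{uniformly} continuous — this is where the continuity hypothesis defining $\Fcalmc$ is used essentially, and it is why the statement would fail for a badly oscillating $f$. One also has to keep the two inequalities pointing the right way, which forces the parity-dependent choice between feeding $\psi_{i}$ into the operator (via $\langle 1\rangle$) and feeding $\lnot\psi_{i}$ into it (via $\langle 1,1\rangle$), so that the function actually composed with the sub-automata is genuinely monotone and Prop.~\ref{prop:ClosedUnderIncOperator} is applicable. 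The remaining cases are routine, modulo the bookkeeping of the parity of $|\vec d|$ that already pervades Table~\ref{table:transitionFuncOfAp} and the proof of Lemma~\ref{lem:correctnessOfA}.
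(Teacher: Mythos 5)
Your proposal is correct and follows essentially the same route as the paper's proof: a structural induction reusing the Table~\ref{table:transitionFuncOfAp} constructions (with $\vec d$ in place of $\langle 1\rangle$) for the non-propositional connectives, and, for $f\in\Fcalmc$, the combination of uniform continuity (to decouple $\varepsilon'$ from $\varepsilon$), monotonicity, and Prop.~\ref{prop:ClosedUnderIncOperator} via Prop.~\ref{prop:ABAtoNBA}. The only cosmetic difference is in the even-parity case, where the paper rewrites the formula as $(d_{n}\cdot f)^{*}(\lnot\varphi_{1},\dotsc,\lnot\varphi_{k})$ with the shortened sequence $\vec{d'}$ and reduces to the odd case, while you apply the closure proposition directly to the re-monotonized $g'$ fed with the $\langle 1,1\rangle$-indexed sub-automata --- the same trick in different bookkeeping.
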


\begin{proof}
  The proof is inductive on the construction of $\varphi$.\footnote{To
 be precise, we have two nested induction: the outer one is with respect
 to the number of propositional quality operators occurring in
 $\varphi$; and the inner one is with respect to the size of a formula $\varphi$.} In this proof
 we assume without loss of generality that an alternating
 $[0,1]$-acceptance automaton has exactly one initial state, and
 consequently, the initial state of $\A_{\varphi,\varepsilon}^{\vec{d}}$
 shall be denoted by $q_{\varphi,\varepsilon}^{\vec{d}}$. For the case
 where the outermost connective of $\varphi$ is other than a
 propositional quality operator, we only describe the construction of
 $\A_{\varphi,\varepsilon}^{\vec{d}}$. The correctness of this automaton
 can be proved in a similar way to the proof of
 Lem.~\ref{lem:correctnessOfA}:
recall that Lem.~\ref{lem:correctnessOfA} is also proved by induction on the construction of a formula. 
  
  Suppose that $\varphi = \true$. We define $\A_{\true,\varepsilon}^{\vec{d}} = (\mathcal{P}(\AP),\{q_{\true,\varepsilon}^{\vec{d}} \},\{q_{\true,\varepsilon}^{\vec{d}} \},\delta,F)$ 
  where $\delta (q_{\true,\varepsilon}^{\vec{d}},\sigma) = \vec{d}\boxtimes 1$ and 
  $F(q_{\true,\varepsilon}^{\vec{d}}) = 0$. 
  
  Suppose that $\varphi = p \in \AP$. We define $\A_{p,\varepsilon}^{\vec{d}} = (\mathcal{P}(\AP),\{q_{p,\varepsilon}^{\vec{d}} \},\{q_{p,\varepsilon}^{\vec{d}} \},\delta,F)$ where 
\begin{displaymath}
  \delta (q_{p,\varepsilon}^{\vec{d}},\sigma) = 
  \begin{cases}
    \vec{d}\boxtimes 1 & \mbox{if } p \in \sigma \\
	\vec{d}\boxtimes 0 & \mbox{otherwise}
  \end{cases}
  \quad\mbox{ and }\quad
  F(q_{p,\varepsilon}^{\vec{d}}) = 0\enspace.
\end{displaymath}
 
 Suppose that $\varphi = \varphi_{1} \land \varphi_{2}$ and that
 $|\vec{d}|$ is odd. By the induction hypothesis, for each of $i \in \{1,2\}$, there exists $\A_{\varphi_{i},\varepsilon}^{\vec{d}} = (\mathcal{P}(\AP),Q_{i},\{q_{\varphi_{i},\varepsilon}^{\vec{d}} \},\delta_{i},F_{i})$ that satisfies the postulated condition. Then we define $\A_{\varphi,\varepsilon}^{\vec{d}} = (\mathcal{P}(\AP),Q,\{q_{\varphi,\varepsilon}^{\vec{d}} \},\delta,F)$ as follows. Its state space $Q$ is $\{q_{\varphi,\varepsilon}^{\vec{d}} \}\cup Q_{1}\cup Q_{2}$. The transition function $\delta$ is
\begin{displaymath}
  \delta (q,\sigma) = 
  \begin{cases}
    \delta_{1} (q_{\varphi_{1},\varepsilon}^{\vec{d}},\sigma) \land \delta_{2} (q_{\varphi_{2},\varepsilon}^{\vec{d}},\sigma) & \mbox{if } q = q_{\varphi,\varepsilon}^{\vec{d}} \\
    \delta_{i} (q,\sigma) & \mbox{if } q \in Q_{i}.
  \end{cases}
\end{displaymath}
  The acceptance function $F$ is
\begin{displaymath}
  F (q) = 
  \begin{cases}
    0 & \mbox{if } q = q_{\varphi,\varepsilon}^{\vec{d}} \\
    F_{i} (q) & \mbox{if } q \in Q_{i}.
  \end{cases}
\end{displaymath}

 Suppose that $\varphi = \varphi_{1} \land \varphi_{2}$ and that
 $|\vec{d}|$ is even. By the induction hypothesis, for each of $i \in \{1,2\}$, there exists $\A_{\varphi_{i},\varepsilon}^{\vec{d}} = (\mathcal{P}(\AP),Q_{i},\{q_{\varphi_{i},\varepsilon}^{\vec{d}} \},\delta_{i},F_{i})$ that satisfies the postulated condition. Then we define $\A_{\varphi,\varepsilon}^{\vec{d}} = (\mathcal{P}(\AP),Q,\{q_{\varphi,\varepsilon}^{\vec{d}} \},\delta,F)$ as follows. Its state space $Q$ is $\{q_{\varphi,\varepsilon}^{\vec{d}} \}\cup Q_{1}\cup Q_{2}$. The transition function $\delta$ is
\begin{displaymath}
  \delta (q,\sigma) = 
  \begin{cases}
    \delta_{1} (q_{\varphi_{1},\varepsilon}^{\vec{d}},\sigma) \lor \delta_{2} (q_{\varphi_{2},\varepsilon}^{\vec{d}},\sigma) & \mbox{if } q = q_{\varphi,\varepsilon}^{\vec{d}} \\
    \delta_{i} (q,\sigma) & \mbox{if } q \in Q_{i}.
  \end{cases}
\end{displaymath}
  The acceptance function $F$ is
\begin{displaymath}
  F (q) = 
  \begin{cases}
    0 & \mbox{if } q = q_{\varphi,\varepsilon}^{\vec{d}} \\
    F_{i} (q) & \mbox{if } q \in Q_{i}.
  \end{cases}
\end{displaymath}

Suppose that $\varphi = \lnot\varphi'$. By the induction hypothesis, there exists $\A_{\varphi',\varepsilon}^{\vec{d}\concatseq 1}$ that satisfies the postulated condition. Let $\A_{\varphi,\varepsilon}^{\vec{d}} = \A_{\varphi',\varepsilon}^{\vec{d}\concatseq 1}$.

 Suppose that $\varphi = \X\varphi'$. By the induction hypothesis, there exists $\A_{\varphi',\varepsilon}^{\vec{d}} = (\mathcal{P}(\AP),Q',\{q_{\varphi',\varepsilon}^{\vec{d}} \},\delta',F')$ that satisfies the postulated condition. Then we define $\A_{\varphi,\varepsilon}^{\vec{d}} = (\mathcal{P}(\AP),Q,\{q_{\varphi,\varepsilon}^{\vec{d}} \},\delta,F)$ as follows. Its state space $Q$ is $\{q_{\varphi,\varepsilon}^{\vec{d}} \}\cup Q'$. The transition function $\delta$ is
\begin{displaymath}
  \delta (q,\sigma) = 
  \begin{cases}
    q_{\varphi',\varepsilon}^{\vec{d}} & \mbox{if } q = q_{\varphi,\varepsilon}^{\vec{d}} \\
    \delta' (q,\sigma) & \mbox{otherwise. }
  \end{cases}
\end{displaymath}
  The acceptance function $F$ is
\begin{displaymath}
  F (q) = 
  \begin{cases}
    0 & \mbox{if } q = q_{\varphi,\varepsilon}^{\vec{d}} \\
    F' (q) & \mbox{otherwise. }
  \end{cases}
\end{displaymath}

 Suppose that $\varphi = \varphi_{1} \U \varphi_{2}$ and that
 $|\vec{d}|$ is odd. By the induction hypothesis, for each of $i \in \{1,2\}$, there exists $\A_{\varphi_{i},\varepsilon}^{\vec{d}} = (\mathcal{P}(\AP),Q_{i},\{q_{\varphi_{i},\varepsilon}^{\vec{d}} \},\delta_{i},F_{i})$ that satisfies the postulated condition. Then we define $\A_{\varphi,\varepsilon}^{\vec{d}} = (\mathcal{P}(\AP),Q,\{q_{\varphi,\varepsilon}^{\vec{d}} \},\delta,F)$ as follows. Its state space $Q$ is $\{q_{\varphi,\varepsilon}^{\vec{d}} \}\cup Q_{1}\cup Q_{2}$. The transition function $\delta$ is
\begin{displaymath}
  \delta (q,\sigma) = 
  \begin{cases}
    \delta_{2} (q_{\varphi_{2},\varepsilon}^{\vec{d}},\sigma) \lor (\delta_{1} (q_{\varphi_{1},\varepsilon}^{\vec{d}},\sigma) \land q_{\varphi,\varepsilon}^{\vec{d}}) & \mbox{if } q = q_{\varphi,\varepsilon}^{\vec{d}} \\
    \delta_{i} (q,\sigma) & \mbox{if } q \in Q_{i}.
  \end{cases}
\end{displaymath}
  The acceptance function $F$ is
\begin{displaymath}
  F (q) = 
  \begin{cases}
    0 & \mbox{if } q = q_{\varphi,\varepsilon}^{\vec{d}} \\
    F_{i} (q) & \mbox{if } q \in Q_{i}.
  \end{cases}
\end{displaymath}

 Suppose that $\varphi = \varphi_{1} \U \varphi_{2}$ and that
 $|\vec{d}|$ is even. By the induction hypothesis, for each of $i \in \{1,2\}$, there exists $\A_{\varphi_{i},\varepsilon}^{\vec{d}} = (\mathcal{P}(\AP),Q_{i},\{q_{\varphi_{i},\varepsilon}^{\vec{d}} \},\delta_{i},F_{i})$ that satisfies the postulated condition. Then we define $\A_{\varphi,\varepsilon}^{\vec{d}} = (\mathcal{P}(\AP),Q,\{q_{\varphi,\varepsilon}^{\vec{d}} \},\delta,F)$ as follows. Its state space $Q$ is $\{q_{\varphi,\varepsilon}^{\vec{d}} \}\cup Q_{1}\cup Q_{2}$. The transition function $\delta$ is
\begin{displaymath}
  \delta (q,\sigma) = 
  \begin{cases}
    \delta_{2} (q_{\varphi_{2},\varepsilon}^{\vec{d}},\sigma) \land (\delta_{1} (q_{\varphi_{1},\varepsilon}^{\vec{d}},\sigma) \lor q_{\varphi,\varepsilon}^{\vec{d}}) & \mbox{if } q = q_{\varphi,\varepsilon}^{\vec{d}} \\
    \delta_{i} (q,\sigma) & \mbox{if } q \in Q_{i}.
  \end{cases}
\end{displaymath}
  The acceptance function $F$ is
\begin{displaymath}
  F (q) = 
  \begin{cases}
    1 & \mbox{if } q = q_{\varphi,\varepsilon}^{\vec{d}} \\
    F_{i} (q) & \mbox{if } q \in Q_{i}.
  \end{cases}
\end{displaymath}

  Suppose that $\varphi = \varphi_{1} \U_{\eta^{+k}} \varphi_{2}$ and
 that $|\vec{d}|$ is odd. Since $\lim_{i \to \infty}\eta(i) = 0$, there
 exists a natural number $k_{\max} \in \N$ such that
 $\eta(k_{\max})\cdot \prod_{i = 1}^{n} d_{i} \leq \varepsilon$ (i.e.\
 $k_{\max}$ is beyond the event
horizon). We construct $\A_{\varphi,\varepsilon}^{\vec{d}}$ by induction
 on $k$ backwards, that is, starting from $k = k_{\max}$ and
 decrementing $k$ one by one until $k = 0$.
 If $\eta(k)\cdot \prod_{i = 1}^{n} d_{i} \leq \varepsilon$, we define $\A_{\varphi,\varepsilon}^{\vec{d}} = (\mathcal{P}(\AP),\{q_{\varphi,\varepsilon}^{\vec{d}} \},\{q_{\varphi,\varepsilon}^{\vec{d}} \},\delta,F)$ 
  where $\delta (q_{\varphi,\varepsilon}^{\vec{d}},\sigma) = \vec{d}\boxtimes 0$ and 
  $F(q_{\varphi,\varepsilon}^{\vec{d}}) = 0$.
  Otherwise, we define $\A_{\varphi,\varepsilon}^{\vec{d}} =
 (\mathcal{P}(\AP),Q,\{q_{\varphi,\varepsilon}^{\vec{d}} \},\delta,F)$
 as follows. By the induction hypothesis, for each of $i \in \{1,2\}$, there exists $\A_{\varphi_{i},\varepsilon}^{\vec{d}\odot\eta(k)} = (\mathcal{P}(\AP),Q_{i},\{q_{\varphi_{i},\varepsilon}^{\vec{d}\odot\eta(k)} \},\delta_{i},F_{i})$ that satisfies the postulated condition. Moreover, there exists $\A_{\varphi_{1} \U_{\eta^{k+1}} \varphi_{2},\varepsilon}^{\vec{d}} = (\mathcal{P}(\AP),Q_{3},\{q_{\varphi_{1} \U_{\eta^{k+1}} \varphi_{2},\varepsilon}^{\vec{d}} \},\delta_{3},F_{3})$. We define the state space $Q$ of $\A_{\varphi,\varepsilon}^{\vec{d}}$ by $\{q_{\varphi,\varepsilon}^{\vec{d}} \}\cup Q_{1}\cup Q_{2}\cup Q_{3}$. The transition function $\delta$ is
\begin{displaymath}
  \delta (q,\sigma) = 
  \begin{cases}
    \delta_{2} (q_{\varphi_{2},\varepsilon}^{\vec{d}\odot\eta(k)},\sigma) \lor (\delta_{1} (q_{\varphi_{1},\varepsilon}^{\vec{d}\odot\eta(k)},\sigma) \land q_{\varphi_{1} \U_{\eta^{k+1}} \varphi_{2},\varepsilon}^{\vec{d}}) & \mbox{if } q = q_{\varphi,\varepsilon}^{\vec{d}} \\
    \delta_{i} (q,\sigma) & \mbox{if } q \in Q_{i}.
  \end{cases}
\end{displaymath}
  The acceptance function $F$ is
\begin{displaymath}
  F (q) = 
  \begin{cases}
    0 & \mbox{if } q = q_{\varphi,\varepsilon}^{\vec{d}} \\
    F_{i} (q) & \mbox{if } q \in Q_{i}.
  \end{cases}
\end{displaymath}

  Suppose that $\varphi = \varphi_{1} \U_{\eta^{+k}} \varphi_{2}$ and that $|\vec{d}|$ is even. Similarly to the case where $|\vec{d}|$ is odd, we construct $\A_{\varphi,\varepsilon}^{\vec{d}}$ by induction on $k$ backwards.
 If $\eta(k)\cdot \prod_{i = 1}^{n} d_{i} \leq \varepsilon$, we define $\A_{\varphi,\varepsilon}^{\vec{d}} = (\mathcal{P}(\AP),\{q_{\varphi,\varepsilon}^{\vec{d}} \},\{q_{\varphi,\varepsilon}^{\vec{d}} \},\delta,F)$ 
  where $\delta (q_{\varphi,\varepsilon}^{\vec{d}},\sigma) = \vec{d}\boxtimes \eta(k)$ and 
  $F(q_{\varphi,\varepsilon}^{\vec{d}}) = 0$.
  Otherwise, we define $\A_{\varphi,\varepsilon}^{\vec{d}} =
 (\mathcal{P}(\AP),Q,\{q_{\varphi,\varepsilon}^{\vec{d}} \},\delta,F)$
 as follows. By the induction hypothesis, for each of $i \in \{1,2\}$, there exists $\A_{\varphi_{i},\varepsilon}^{\vec{d}\odot\eta(k)} = (\mathcal{P}(\AP),Q_{i},\{q_{\varphi_{i},\varepsilon}^{\vec{d}\odot\eta(k)} \},\delta_{i},F_{i})$ that satisfies the postulated condition. Moreover, there exists $\A_{\varphi_{1} \U_{\eta^{k+1}} \varphi_{2},\varepsilon}^{\vec{d}} = (\mathcal{P}(\AP),Q_{3},\{q_{\varphi_{1} \U_{\eta^{k+1}} \varphi_{2},\varepsilon}^{\vec{d}} \},\delta_{3},F_{3})$. We define the state space $Q$ of $\A_{\varphi,\varepsilon}^{\vec{d}}$ by $\{q_{\varphi,\varepsilon}^{\vec{d}} \}\cup Q_{1}\cup Q_{2}\cup Q_{3}$. The transition function $\delta$ is
\begin{displaymath}
  \delta (q,\sigma) = 
  \begin{cases}
    \delta_{2} (q_{\varphi_{2},\varepsilon}^{\vec{d}\odot\eta(k)},\sigma) \land (\delta_{1} (q_{\varphi_{1},\varepsilon}^{\vec{d}\odot\eta(k)},\sigma) \lor q_{\varphi_{1} \U_{\eta^{k+1}} \varphi_{2},\varepsilon}^{\vec{d}}) & \mbox{if } q = q_{\varphi,\varepsilon}^{\vec{d}} \\
    \delta_{i} (q,\sigma) & \mbox{if } q \in Q_{i}.
  \end{cases}
\end{displaymath}
  The acceptance function $F$ is
\begin{displaymath}
  F (q) = 
  \begin{cases}
    0 & \mbox{if } q = q_{\varphi,\varepsilon}^{\vec{d}} \\
    F_{i} (q) & \mbox{if } q \in Q_{i}.
  \end{cases}
\end{displaymath}

  Suppose that $\varphi = f(\varphi_{1},\ldots, \varphi_{k})$ where $f
 \in \Fcalmc$ and that $|\vec{d}|$ is odd. Since $f$ is continuous and
 its domain $[0,1]^{k}$ is bounded and closed in the Euclidean space $\mathbb{R}^{k}$, this function $f$ is uniformly continuous by the Heine–Cantor theorem. By the monotonicity and the uniform continuity, there exists $\varepsilon' \in (0,1)$ such that, for each $\mathbf{x} = (x_{1},\ldots,x_{k}) \in [0,1]^{k}$, 
  \begin{equation}\label{eq:newEpsilon}
    f(x_{1} \dotminus \varepsilon',\ldots,x_{k} \dotminus \varepsilon') \geq f(\mathbf{x}) - {\varepsilon / {(d_{1} \cdot d_{2} \cdot \cdots \cdot d_{n}}})\enspace, 
  \end{equation}
  where $a \dotminus b$ is defined by $\max \{ a-b, 0 \}$. By the  induction hypothesis, there exist $\A_{\varphi_{1}, \varepsilon'}^{\langle 1 \rangle},\ldots,\A_{\varphi_{k}, \varepsilon'}^{\langle 1 \rangle}$ such that, for $i \in \{1,\ldots,k\}$, 
  \begin{displaymath}
    {\sem{ \pi, \varphi_{i} }} - \varepsilon'
\; \leq\; \Lang(\A_{\varphi_{i}, \varepsilon'}^{\langle 1 \rangle})(\pi) 
\;\leq\; {\sem{ \pi, \varphi_{i} }}
  \end{displaymath}
  for each $\pi \in (\mathcal{P}(\AP))^{\omega}$. Since $|\vec{d}|$ is
 odd, the function $g\colon [0,1]^{k}\to [0,1]$ defined by $g(\mathbf{x}) = \vec{d} \boxtimes f(\mathbf{x})$ is
 monotone in $\mathbf{x}$. Since the class of languages of alternating
 $[0,1]$-acceptance automata and that of $[0,1]$-acceptance automata are
 the same by Prop.~\ref{prop:ABAtoNBA}, the closure property in
 Prop.~\ref{prop:ClosedUnderIncOperator} remains true even if $[0,1]$-acceptance automata are replaced by alternating $[0,1]$-acceptance automata. Hence, there exists $g(\A_{\varphi_{1}, \varepsilon'}^{\langle 1 \rangle},\ldots,\A_{\varphi_{k}, \varepsilon'}^{\langle 1 \rangle})$ defined in Prop.~\ref{prop:ClosedUnderIncOperator}. 
 By~(\ref{eq:newEpsilon}) and the definition~(\ref{eq:actionExplicitly}) of the operator $\boxtimes$, we have
  \begin{displaymath}
    g(x_{1} \dotminus \varepsilon',\ldots,x_{1} \dotminus \varepsilon') \geq g(\mathbf{x}) - \varepsilon = \vec{d} \boxtimes f(\mathbf{x}) - \varepsilon\enspace.
  \end{displaymath}
 Hence, if we define $\A_{\varphi, \varepsilon}^{\vec{d}}$ by
 $g(\A_{\varphi_{1}, \varepsilon'}^{\langle 1
 \rangle},\ldots,\A_{\varphi_{k}, \varepsilon'}^{\langle 1 \rangle})$, 
it
satisfies the postulated condition.
  
  Suppose that $\varphi = f(\varphi_{1},\ldots, \varphi_{k})$ where $f
 \in \Fcalmc$ and that $|\vec{d}|$ is even. Let $\vec{d'} =
 d_{1}d_{2}\dotsc d_{n-1}$ be a prefix of $\vec{d}$. Then we have
 $\vec{d} \boxtimes v = \vec{d'} \boxtimes (1 - d_{n} \cdot v)$. We
 define a function $(d_{n}\cdot f)^{*} \colon [0,1]^{k} \to [0,1]$ by
 $(d_{n}\cdot f)^{*} (x_{1},\ldots,x_{k}) = 1 - d_{n}\cdot
 f(1-x_{1},\ldots,1-x_{k})$. Let $\varphi' = (d_{n}\cdot f)^{*}
 (\lnot\varphi_{1},\ldots,\lnot\varphi_{k})$. It is obvious that
 $(d_{n}\cdot f)^{*} \in \Fcalmc$. Moreover, we have $\vec{d} \boxtimes
 \sem{\pi,\varphi} = \vec{d'} \boxtimes \sem{\pi, \varphi'}$ for each
 $\pi \in (\mathcal{P}(\AP))^{\omega}$, and $\vec{d'}$ is odd. Therefore
 there exists $\A_{\varphi', \varepsilon}^{\vec{d'}}$ because of the
 previous case (i.e.\ when $|\vec{d}|$ is odd),\footnote{Recall that
 we are currently running two nested induction, with the outer one being with respect
 to the number of propositional quality operators.} 
and we take this as $\A_{\varphi, \varepsilon}^{\vec{d}}$. Then $\A_{\varphi, \varepsilon}^{\vec{d}}$ satisfies the postulated condition.
\end{proof}

Once $\A_{\varphi,\varepsilon}$ is constructed, the procedure described in \S{}\ref{subsubsec:theAlgorithmWithoutPropositional} works regardless of the presence of propositional quality operators.

\section{Omitted Proofs}\label{appendix:omittedproofs}

\subsection{Proof of Prop.~\ref{prop:ABAtoNBA}
}
\label{pf:lemABAtoNBA}
 \begin{proof}
\auxproof{  (Miyano,Hayashi (1984) Alternating Finite Automata on
  omega-Words.)}
  We first describe the formal construction; intuitions follow shortly.
  
  Without loss of generality, we can assume that a positive Boolean formula $\delta(q,a)$ is a
 disjunctive normal form; 
 therefore the transition function is of the type
 $\delta : Q \times \Sigma \rightarrow
  \mathcal{P}(\mathcal{P}(Q \cup [0,1]))$. 
 More concretely,
 for each $q\in Q$ and $a\in \Sigma$, 
 the formula $\delta(q,a)$ is a disjunction of formulas of the form
\begin{displaymath}
   (q_{1}\land\cdots\land q_{k})
   \land (v_{1}
   \land\cdots\land
   v_{l})
\end{displaymath}
where $q_{j}\in Q$ and $v_{j}\in [0,1]$ are
 atomic propositions (we  changed their order suitably). Moreover, since the conjunction 
\begin{math}
v_{1}
   \land\cdots\land
   v_{l}
\end{math}
 is equivalent to a single atomic proposition $\min\{
v_{1},
   \dotsc,
   v_{l}
\}$, we  assume that any disjunct of the DNF formula $\delta(q,a)$ is of
 the form 
\begin{displaymath}
   (q_{1}\land\cdots\land q_{k})
   \land v\enspace.
\end{displaymath}

Let $V_{Q} = \{ F(q) \mid q
 \in Q \}$ be the set of acceptance values that occur in $\A$, and
 $V_\delta$ be the set of values from $[0,1]$  (i.e.\ atomic
 propositions from $[0,1]$) that
 occur in the transition function $\delta$, that is,
\begin{displaymath}
 V_{\delta}
 \;=\;
 \bigcup_{q\in Q, a\in \Sigma}
 \bigl\{\,
  v
  \,\bigl|\bigr.\,
  \bigl(\,(q_{1}\land\cdots\land q_{k})\land v\,\bigr) \in \delta(q,a)
\,\bigr\}\enspace.
\end{displaymath}
We define $\A' = (\Sigma,Q',I',\delta',F')$ as follows.
  \begin{align*}
 Q' &\;=\; {\mathcal{P} (Q \times V_{Q})} \times V_{\delta} \times \{ \ffalse, \ttrue
   \}\enspace, 
	\\ 
I' &\;=\; \bigl\{\, \bigl(\, \bigl\{ (q_{0}, F(q_{0}))
   \bigr\},1,\ffalse\,\bigr) 
   \;\bigl|\bigr.\;
   q_{0} \in I\, 
   \,\bigr\}\enspace, \\
   F'(Y,v,b) &\;=\; 
    \begin{cases}
      \min \bigl\{\, v, \min \{ v' \mid \exists q \in Q.\, (q,v') \in Y\} \,\bigr\} & \mbox{if } b = \ttrue \\
      0 & \mbox{otherwise.}
    \end{cases}
  \end{align*}
The transition function $\delta'$ is defined as follows.
Let
 $\widetilde{q}=\bigl(\,\bigl\{\,(q^{1},v^{1}),\dotsc,(q^{n},v^{n})\,\bigr\}, v,
 b\,\bigr)$ be a state
in $Q'$, and $a\in \Sigma$. Then $\delta'(\widetilde{q},a)$ is defined,
  in case $b=\ffalse$, by:
\begin{equation}\label{eq:defDeltaPrimeNotExternalizing}
\begin{aligned}
  \left\{\quad
 \left.
  \left(
 \begin{array}{c}
  \left\{
 \begin{array}{c}
  \bigl(\,q^{1}_{1},\, \max\{v^{1},F(q^{1}_{1})\}\,\bigr)\,,\; \dotsc,\;
    \bigl(\,q^{1}_{l_{1}},\, \max\{v^{1},F(q^{1}_{l_{1}})\}\,\bigr),
 \\
  \vdots
 \\
  \bigl(\,q^{n}_{1},\, \max\{v^{n},F(q^{n}_{1})\}\,\bigr)\,,\; \dotsc,\;
    \bigl(\,q^{n}_{l_{n}},\, \max\{v^{n},F(q^{n}_{l_{n}})\}\,\bigr)
 \end{array}
 \right\}\,,
 \\[+.7cm]
 \min\{v,u^{1},\dotsc,u^{n}\}\,,
 \\
 b'
 \end{array}
 \right)
 \;\right.\;
 \right.
 \qquad\qquad
 \\
 \Bigl.
 \Bigl.
 \Bigr|\quad
 \bigl(\,(q^{i}_{1}\land\cdots\land q^{i}_{l_{i}})\land
 u^{i}\,\bigr)\in\delta(q^{i},a)\,,\quad b'\in\{\ttrue,\ffalse\}
 \quad
 \Bigr\}\enspace;
\end{aligned}
\end{equation}
in case $b=\ttrue$, 
\begin{equation}\label{eq:defDeltaPrimeExternalizing}
 \left\{\quad
\left.
  \left(
\begin{array}{c}
  \left\{
 \begin{array}{c}
  \bigl(\,q^{1}_{1},\, F(q^{1}_{1})\,\bigr)\,,\; \dotsc,\;
    \bigl(\,q^{1}_{l_{1}},\, F(q^{1}_{l_{1}})\,\bigr),
 \\
  \vdots
 \\
  \bigl(\,q^{n}_{1},\, F(q^{n}_{1})\,\bigr)\,,\; \dotsc,\;
    \bigl(\,q^{n}_{l_{n}},\, F(q^{n}_{l_{n}})\,\bigr)
 \end{array}
 \right\}\,,
\\[+.7cm]
\min\{v,u^{1},\dotsc,u^{n}\}\,,
\\
 b'
\end{array}
\right)
\;\right|\;
\begin{array}{l}
\bigl(\,(q^{i}_{1}\land\cdots\land q^{i}_{l_{i}})\land
 u^{i}\,\bigr)
 \\
\qquad\qquad\in\delta(q^{i},a)\enspace,
 \\
b'\in\{\ttrue,\ffalse\}
\end{array}
\right\}\enspace.
\end{equation}
  In each case ($b=\ffalse$ or $\ttrue$), different $a$-successors of
  $\widetilde{q}$ arise from: 1) different choices of a disjunct
of a DNF formula $\delta(q^{i},a)$, for $i\in [0,n]$; and 2) different choices of 
$b'$ (it can always be chosen from $\ttrue$ and $\ffalse$).

In the setting of~\cite{MiyanoH84} (that is Boolean instead of
 quantitative), the state space $Q'$ of the nondeterministic automaton obtained as a translation of
  an alternating one is $\mathcal{P}(Q\times \{0,1\})$.
  Its quantitative adaptation $\mathcal{P}(Q\times V_{Q})$ occurs as
  the first component of $Q'$
  in our above
  quantitative construction;
  the rest 
  $V_{\delta}\times\{\ffalse,\ttrue\}$ of $Q'$ is there
  for handling
   quantitative 
acceptance.

It is not hard to see that $\A$ and $\A'$ have the same language.\footnote{ A  more rigorous proof can be given via formulating an
 acceptance game for an alternating $[0,1]$-acceptance automaton.
}
For example, in a state
 $\widetilde{q}=\bigl(\,\bigl\{\,(q^{1},v^{1}),\dotsc,(q^{n},v^{n})\,\bigr\}, v,
 b\,\bigr)$ of $\A'$:
\begin{itemize}
 \item The pair $(q^{i}, v^{i})$ is that of the \emph{current state} and 
  what we call the \emph{internally accumulated acceptance value}.
 \item The set $\bigl\{\,(q^{1},v^{1}),\dotsc,(q^{n},v^{n})\,\bigr\}$  stands for the \emph{conjunction} of these pairs.
 \item The second component $v\in[0,1]$ of $\widetilde{q}$ is for
       keeping track of: the values at the
       leaves
       of the corresponding run tree, more precisely the smallest among such.
 \item The flag $b\in\{\ffalse,\ttrue\}$ is called an \emph{exposition
       flag}: it determines if
       the internally accumulated  acceptance values
       $v^{1},\dotsc,v^{n}$ should be exposed or not.  
       Note the definition of $F'$: the acceptance value of a state of
       $\A'$ 
       is nonzero only if the exposition flag $b$ is $\ttrue$. 
\end{itemize}
Let us comment on the definition of the transition function $\delta'$. 
  Starting from $\widetilde{q}
=\bigl(\,\bigl\{\,(q^{1},v^{1}),\dotsc,(q^{n},v^{n})\,\bigr\}, v,
 b\,\bigr)$---in which the ``current state'' is the
 conjunction $q^{1}\land q^{2}\land\cdots\land q^{n}$---we choose one disjunct 
 $q^{i}_{1}\land\cdots\land q^{i}_{l_{i}}\in \delta(q^{i},a)$ 
for
 each $q^{i}$ and the ``next state'' is
\begin{displaymath}
 (q^{1}_{1}\land\cdots\land q^{1}_{l_{1}})
 \land
  (q^{2}_{1}\land\cdots\land q^{2}_{l_{2}})
\land\cdots\land
(q^{n}_{1}\land\cdots\land q^{n}_{l_{n}})\enspace.
\end{displaymath}
If the exposition flag $b$ is $\ffalse$ then we keep accumulating
the acceptance values that we have seen since the last exposition,
 resulting
in the occurrence of $\max$
 in~(\ref{eq:defDeltaPrimeNotExternalizing}). If the flag is $\ttrue$
then the internally accumulated acceptance values are ``used'' (see the
 definition of $F'$), and these values must be ``forgotten'' so that we
  simulate a B\"uchi-like acceptance condition for $\A$. Therefore
in~(\ref{eq:defDeltaPrimeExternalizing}), there are
 no $v^{1},\dotsc,v^{n}$ occurring and we have a fresh start.
 \end{proof}

The state space $Q'$ of $\A'$ in the previous proof can actually be
smaller: we can identify two states $(Y,v,b)$ and $(Y',v,b)$ if $\min \{
v' \in V_{Q} \mid (q,v') \in Y \} = \min \{ v' \in V_{Q} \mid (q,v') \in
Y' \}$ holds for each $q \in Q$---this is the case for example when
$Y=\{(q,\frac{1}{2}),{(q,1)}\}$ and $Y'=\{(q,\frac{1}{2})\}$.  Therefore
we only need states $(Y,v,b)$ such that $\forall (q,v),(q',v') \in Y.\,
(q = q' \Rightarrow v = v')$, that is, $Y$ can be regarded as a partial
function. Summarizing, we can reduce the state space to $(V_{Q} \cup
\{ * \})^{Q} \times V_{\delta} \times \{ \ffalse, \ttrue \}$. The size
of the first component is $2^{|Q|\times \log |V_{Q}|}$, while it was
$2^{|Q|\times |V_{Q}|}$ before this optimization.

\subsection{Proof of Prop.~\ref{prop:ClosedUnderIncOperator}}
The proof is an adaptation of that of Prop.~\ref{prop:ABAtoNBA}. Here we
combine the usual construction of synchronous products of automata, with
the idea of exposition flags.
\begin{proof}
  Let $\A_{i} = (\Sigma,Q_{i},I_{i},\delta_{i},F_{i})$ for each $i \in
 \{ 1,\ldots,k\}$.  We define $f(\A_{1},\ldots,\A_{k}) =
 (\Sigma,Q,I,\delta,F)$ as follows. Its state space $Q$ is $Q=(\prod_{1
 \leq i \leq k}(Q_{i} \times V_{i})) \times \{ \ffalse,\ttrue \}$ where
 $V_{i} = \{ 0 \} \cup \{ v \in [0,1] \mid \exists q \in
 Q_{i}.\,F_{i}(q) = v \}$. The set $I$ of initial states is $I=\{
 ((q_{1},0),\ldots,(q_{k},0),\ffalse) \mid q_{1} \in I_{1},\ldots, q_{k}
 \in I_{k} \}$. The acceptance function is defined by
  \begin{equation}\label{eq:propClosedUnderIncOperator1}
    F \bigl((q_{1},v_{1}),\ldots,(q_{k},v_{k}),b\bigr) \;= \;
    \begin{cases}
      f(v_{1},\ldots,v_{k}) & \mbox{if } b = \ttrue \\
	  0 & \mbox{otherwise.}
    \end{cases}
  \end{equation}
  The transition function $\delta \colon Q \times \Sigma \to \Pcal(Q)$
 is defined as follows. Let $q = \bigl((q_{1},v_{1}),\ldots,(q_{k},v_{k}),b\bigr)
 \in Q$, and $a\in \Sigma$.
  \begin{equation}\label{eq:propClosedUnderIncOperator2}
    \delta (q,a) = 
    \begin{cases}
      \left(\displaystyle\prod_{1 \leq i \leq k} \bigl\{\, (q'_{i}, F_{i} (q'_{i})) \,\bigl|\bigr.\, q'_{i} \in \delta_{i} (q_{i},a) \,\bigr\} \right) \times \{ \ffalse,\ttrue \} & \mbox{if } b = \ttrue \\
	  \left(\displaystyle\prod_{1 \leq i \leq k} \bigl\{\, (q'_{i}, \max \{ v_{i},F_{i} (q'_{i}) \}) \,\bigl|\bigr.\, q'_{i} \in \delta_{i} (q_{i},a) \,\bigr\} \right) \times \{ \ffalse,\ttrue \} & \mbox{otherwise.}
    \end{cases}
  \end{equation}
 We shall prove that the automaton $f(\A_{1},\ldots,\A_{k})$ indeed
 satisfies the requirement. Recall that, by definition, a
 $[0,1]$-acceptance automaton has no dead ends.  Let $w \in
 \Sigma^{\omega}$ be an infinite word. 

 On the one hand, it follows
 easily from
 the above definition (in
 particular~(\ref{eq:propClosedUnderIncOperator1})) that if $\Lang(f(\A_{1},\ldots,\A_{k})) (w) = \overline{v}$,
 there exist $\overline{v}_{1},\ldots,\overline{v}_{k} \in [0,1]$ such that:
 $f(\overline{v}_{1},\ldots,\overline{v}_{k}) = \overline{v}$, and $\Lang(\A_{i}) (w) \geq \overline{v}_{i}$ for each $i
 \in \{ 1,\ldots,k\}$. Hence the monotonicity of $f$ yields $\Lang\bigl(f(\A_{1},\ldots,\A_{k})\bigr) (w) \leq
 f\bigl(\Lang(\A_{1})(w),\ldots,\Lang(\A_{k})(w)\bigr)$. 

 On the other hand, assuming that
 $\Lang(\A_{i}) (w) = \overline{v}_{i}$ for each $i \in \{ 1,\ldots,k\}$, 
 it is not hard to see that
 $\Lang\bigl(f(\A_{1},\ldots,\A_{k})\bigr) (w) \geq f(\overline{v}_{1},\ldots,\overline{v}_{k}) =
 f\bigl(\Lang(\A_{1})(w),\ldots,\Lang(\A_{k})(w)\bigr)$. 
 Here the intuition about the automaton 
  $f(\A_{1},\ldots,\A_{k})$, and especially its state
 $q = \bigl((q_{1},v_{1}),\ldots,(q_{k},v_{k}),b\bigr)
 \in Q$, is as follows.
 \begin{itemize}
  \item The automaton $f(\A_{1},\ldots,\A_{k})$ is essentially a
	synchronous product of $\A_{1},\ldots,\A_{k}$; the state
	$q_{i}\in Q_{i}$ is the current state of the constituent
	automaton $\A_{i}$.
  \item Each  constituent
	automaton $\A_{i}$ is additionally equipped with a register 
	for storing ``the greatest acceptance value that is recently seen.''
The value $v_{i}$ is the one stored in that register. 
  \item The flag $b\in \{\ffalse,\ttrue\}$ decides if the stored
	acceptance value $v_{i}$ is ``exposed'' or
	not. See~(\ref{eq:propClosedUnderIncOperator1}) where the
	acceptance value of the composed automaton 
	 $f(\A_{1},\ldots,\A_{k})$ is nonzero only if $b=\ttrue$.
	Also observe that, in~(\ref{eq:propClosedUnderIncOperator2}),
	the register $v_{i}$ is reset to the current acceptance value
	$F_{i}(q'_{i})$ when the register is exposed (i.e.\ $b=\ttrue$).
 \end{itemize}
Following this intuition, it is not hard to see that the claimed fact  $\Lang\bigl(f(\A_{1},\ldots,\A_{k})\bigr) (w) \geq
 f(\overline{v}_{1},\ldots,\overline{v}_{k}) $ is witnessed by a run
 such that: it does not expose the register values before
all the
 registers acquire the values $\overline{v}_{1},\ldots,\overline{v}_{k}$; and once they
 have all done so, the register values are exposed by setting $b=\ttrue$.

 From the above two inequalities, we conclude that
 $\Lang\bigl(f(\A_{1},\ldots,\A_{k})\bigr) (w) =
 f\bigl(\Lang(\A_{1})(w),\ldots,\Lang(\A_{k})(w)\bigr)$.
\end{proof}

\subsection{Proof of Lem.~\ref{lem:AphiEpsIsFiniteState}}
\label{pf:lemAphiEpsIsFiniteState}
\begin{proof}
 The state space $Q=\mathit{xcl}(\varphi)\times [0,1]^{+}$ of
 $\Ap_{\varphi,\varepsilon}$ is infinite for three reasons: 1) 
 the extended closure
$\mathit{xcl}(\varphi)$ contains $\varphi_{1}\U_{\eta^{+k}}\varphi_{2}$
 for unbounded $k\in \N$  (see~(\ref{eq:defXcl})); 2)  discount factors 
 occurring in $\vec{d}\in[0,1]^{+}$ are multiples of numbers from an infinite set 
 $\{\eta(0),\eta(1),\dotsc\}$; and 3) the length of a discount sequence
 $\vec{d}\in[0,1]^{+}$ is potentially unbounded. 

 We can easily see that the reason 3) is not a problem for us: in the construction of 
$\Ap_{\varphi,\varepsilon}$ (Def.~\ref{def:ABAforLTL}), the length of
 a discount sequence $\vec{d}$ grows only when we encounter negation
(i.e.\ in the definition of $\delta \bigl((\lnot
 \psi,\vec{d}),\sigma\bigr)$). 
 Therefore in a reachable state $(\psi,\vec{d})$ of $\Ap_{\varphi,\varepsilon}$, the
 length of $\vec{d}$ is bounded by the number of negation operators
 occurring in $\varphi$.

 To see that the reasons 1) and 2) are not problematic either, note that
 we obtain new states for these reasons only in the
 clause~(\ref{eq:defDeltaWithinEventHorizon}) of the definition of $\delta\bigl((\psi_{1} \U_{\eta} \psi_{2},
	 \vec{d}),\sigma\bigr)$. This clause is applied only when 
 	 $\eta(0)\cdot \prod_{i = 1}^{n} d_{i} > \varepsilon$, a
 condition satisfied by
 only finitely many reachable states of $\Ap$:
\begin{itemize}
 \item The discount function $\eta$ here is of the form
       $\eta=(\eta')^{+k}$,
       where $\eta'$ occurs in the original formula $\varphi$ and $k\in
       \N$. Since a discounting function $\eta'$ tends to $0$
       (Def.~\ref{def:discountFunction}),
       $\eta(0)=(\eta')^{+k}(0)=\eta'(k)$
       tends to $0$ as $k\to \infty$, too, making only finitely many $k$
       suitable.
 \item Each discount factor $d_{j}$ in $\vec{d}$ is a multiple
       $\eta_{1}(k_{1})\times \cdots\times \eta_{m}(k_{m})$, where
       $\eta_{i}$ is a discounting function occurring in $\varphi$ and
       $k_{i}\in\N$. They must at least satisfy
       $\eta_{i}(k_{i})>\varepsilon$: since $\eta_{i}$ tends to $0$, this allows only finitely many
       choices of $k_{i}$, for each $\eta_{i}$. Furthermore, the
       (necessary) condition that 
       $d_{j}=\eta_{1}(k_{1})\times \cdots\times
       \eta_{m}(k_{m})>\varepsilon$
       bounds the length $m$ of the multiple, too.   
\end{itemize}
%
%
%
\end{proof}

\subsection{Proof of Lem.~\ref{lem:correctnessOfA}}
\label{pf:lemcorrectnessOfA}
\begin{proof}
 In what follows
  let $Q$ denote the state space of $\A_{\varphi, \varepsilon}$;
 $\delta$ denote its transition function;
and $F$ denote its acceptance function.
For each $(\psi,\vec{d}) \in Q$,  we define an alternation $[0,1]$-acceptance automaton 
$\A_{\varphi, \varepsilon}^{(\psi,\vec{d})}$ by changing the initial
 state to $(\psi,\vec{d})$, that is,
$\A_{\varphi, \varepsilon}^{(\psi,\vec{d})} =
 (\mathcal{P}(\AP),Q,\{ (\psi,\vec{d}) \},\delta,F)$. Suppose
 that $\vec{d} = d_{1} d_{2} \ldots d_{n}$. 
 We prove the following more general statement, inductively on the
 construction of $\psi$:
  \begin{equation}\label{eq:ltlToNBA}
    \vec{d} \boxtimes {\sem{ \pi, \psi }} - \varepsilon
\; \leq\; \Lang(\A_{\varphi, \varepsilon}^{(\psi,\vec{d})})(\pi) 
\;\leq\; \vec{d} \boxtimes {\sem{ \pi, \psi }}
  \end{equation}
  for each $\pi \in (\mathcal{P}(\AP))^{\omega}$.

  The cases where $\psi = \true$, $p$, $\psi_{1} \land \psi_{2}$, $\lnot \psi'$ or $\X
 \psi'$ are straightforward. Here we only prove the case where
 $\psi=\lnot\psi'$. 
 By the definition of the automaton $\A_{\varphi,\varepsilon}$ we have
 \begin{math}
  \Lang(\A_{\varphi, \varepsilon}^{(\lnot\psi',\vec{d})})(\pi)
  =
  \Lang(\A_{\varphi, \varepsilon}^{(\psi',\vec{d}1)})(\pi)
 \end{math}, and the latter value lies in the interval
 \begin{math}
  \bigl[\,
(\vec{d}1)\boxtimes\sem{\pi,\psi'}
 - \varepsilon,\,
(\vec{d}1)\boxtimes\sem{\pi,\psi'}
\,
\bigr]
 \end{math}
 by the induction hypothesis. Now we obtain
\begin{align*}
 (\vec{d}1)\boxtimes \sem{\pi,\psi'}
 &\;=\; 
 \vec{d}\boxtimes (1-\sem{\pi,\psi'})
 \;=\; 
 \vec{d}\boxtimes \sem{\pi,\lnot\psi'}\enspace,
\end{align*}
 as required. Here the former equality is due to the definition of
 $\boxtimes$; the latter is the semantics of $\lnot\psi'$.



  Suppose  $\psi = \psi_{1} \U \psi_{2}$; we first deal with the case
 when $|\vec{d}|$ is odd. 
Let $\pi \in (\mathcal{P}(\AP))^{\omega}$. 
We note that, since  $|\vec{d}|$ is odd, the function
 $\vec{d}\boxtimes(\place)\colon [0,1]\to [0,1]$ is monotone and
 continuous (see~(\ref{eq:actionExplicitly})). This is used in:
\begin{equation}\label{eq:actionIsMonotoneAndConti}
   \begin{aligned}
    \vec{d} \boxtimes \sem{ \pi, \psi_{1} \U \psi_{2} } &= \vec{d} \boxtimes \sup_{i \in \N} \bigl\{\,\min \bigl\{\, \sem{ \pi^{i}, \psi_{2} }, \min_{0 \leq j \leq i-1} \sem{ \pi^{j}, \psi_{1} } \,\bigr\} \,\bigr\} \\
	&= \sup_{i \in \N} \bigl\{\,\min \bigl\{\, \vec{d} \boxtimes \sem{ \pi^{i}, \psi_{2} },\, \min_{0 \leq j \leq i-1} \bigl(\vec{d} \boxtimes \sem{ \pi^{j}, \psi_{1} }\bigr) \,\bigr\} \,\bigr\}\enspace.
  \end{aligned}
\end{equation}  
Now let us take a closer look at how the value $\Lang(\A_{\varphi, \varepsilon}^{(\psi_{1} \U
   \psi_{2},\vec{d})})(\pi) $ is defined for an alternating
 $[0,1]$-acceptance automaton $\A_{\varphi, \varepsilon}^{(\psi_{1} \U
   \psi_{2},\vec{d})}$. As seen in
 Def.~\ref{def:alternative_buchi}, the notions of run tree and path
 are Boolean; a non-Boolean value arises for the first time as the
 ``utility'' $F^{\infty}(\rho)$ of a path $\rho$ of a run tree. 
According to Def.~\ref{def:ABAforLTL} of $\A_{\varphi,\varepsilon}$
  (in particular the definition of $\delta \bigl((\psi_{1} \U
 \psi_{2},\vec{d}),\sigma\bigr)$),  any possible run tree $\tau$
 from the state $(\psi_{1} \U
   \psi_{2},\vec{d})$ is of one of the following forms:
\begin{itemize}
 \item the second disjunct ${\delta \bigl((\psi_{1},\vec{d}),\sigma\bigr)} \land
       {(\psi_{1} \U \psi_{2},\vec{d})}$ is chosen all the way
       (Fig.~\ref{fig:untilRunTree}, left),
       or
 \item the first disjunct $        {\delta
       \bigl((\psi_{2},\vec{d}),\sigma\bigr)} $ is eventually hit
       (Fig.~\ref{fig:untilRunTree}, right).
\end{itemize}
\begin{figure}[tbp] 
\includegraphics[width=\textwidth]{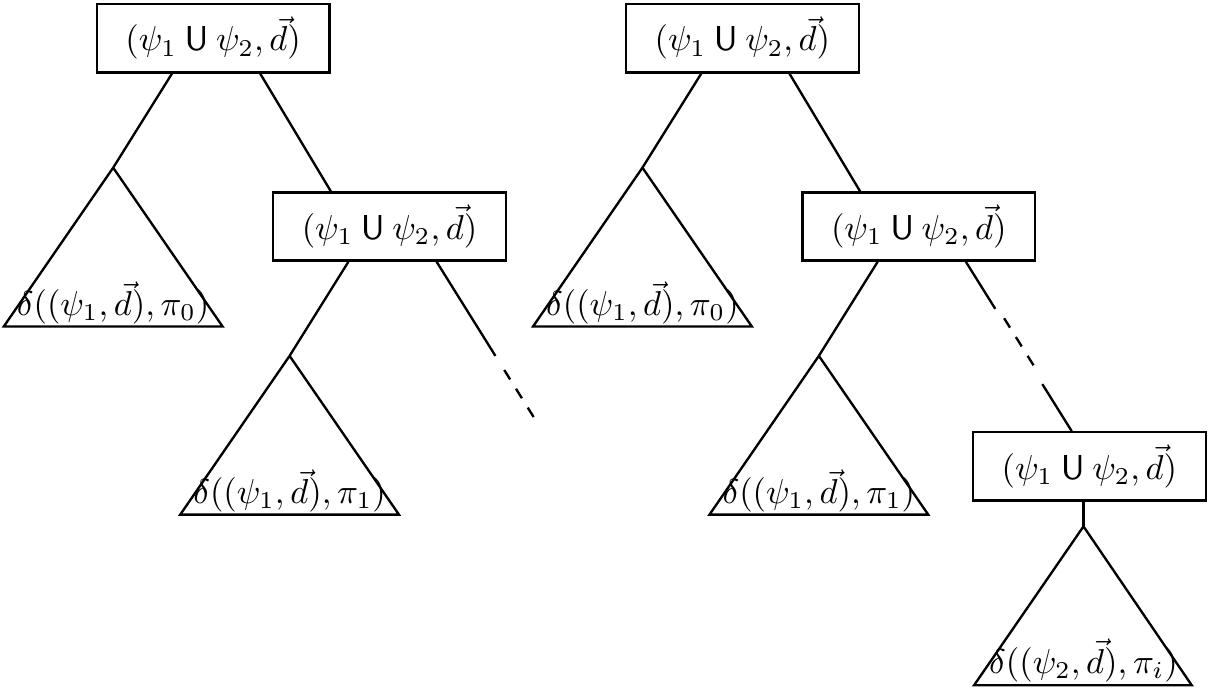}
 \caption{Possible run trees from the state $(\psi_{1} \U
   \psi_{2},\vec{d})$ in $\A_{\varphi,\varepsilon}$, when $|\vec{d}|$ is
 odd}
\label{fig:untilRunTree}
\end{figure}
In the former case, the utility
 $\min_{\rho\in\pathrm(\tau)}F^{\infty}(\rho)$ of such a run tree $\tau$
 is given by
\begin{math}
 \min \bigl\{\, 
 F(\psi_{1}\U \psi_{2},\vec{d}),\,
\inf_{j \in \N} \Lang(\A_{\varphi,
 \varepsilon}^{(\psi_{1},\vec{d})})(\pi^{j}) 
\,\bigr\}
\end{math}, 
where the first value $F(\psi_{1}\U \psi_{2},\vec{d})$ is induced by the
 rightmost path in Fig.~\ref{fig:untilRunTree}, left.  We have
$F(\psi_{1}\U \psi_{2},\vec{d})=0$ by definition
 (see~(\ref{eq:acceptanceFunc})); therefore the utility obtained in this
 case is $0$.

In the latter
 case, assume that the second disjunct 
 $        {\delta
       \bigl((\psi_{2},\vec{d}),\sigma\bigr)} $ 
 is hit at depth $i$. The tree's utility is then given by
\begin{math}
 \min \bigl\{\,
 \Lang(\A_{\varphi, \varepsilon}^{(\psi_{2},\vec{d})})(\pi^{i})
 ,\,
  \min_{0 \leq j \leq i-1} \Lang(\A_{\varphi,
 \varepsilon}^{(\psi_{1},\vec{d})})(\pi^{j})
\, \bigr\}
\end{math} where, again, the first value 
$\Lang(\A_{\varphi, \varepsilon}^{(\psi_{2},\vec{d})})(\pi^{i})$ arises
 from the rightmost path in Fig.~\ref{fig:untilRunTree}, right.

Putting all these together, we have
  \begin{align*}
   & \Lang(\A_{\varphi, \varepsilon}^{(\psi_{1} \U
   \psi_{2},\vec{d})})(\pi) 
\\
 &=
  \sup_{i\in\N}
\bigl(\;
  \min \bigl\{\,  
\Lang(\A_{\varphi,
   \varepsilon}^{(\psi_{2},\vec{d})})(\pi^{i})
,\, 
\min_{0 \leq j \leq i-1} \Lang(\A_{\varphi,
  \varepsilon}^{(\psi_{1},\vec{d})})(\pi^{j})
\, \bigr\}
\;\bigr)
\\
&\in
\left[
\begin{array}{r}
   \sup_{i\in\N}
 \bigl(\;
  \min \bigl\{\,  
 \vec{d}\boxtimes\sem{\pi^{i},\psi_{2}} -\varepsilon
 ,\, 
\min_{0 \leq j \leq i-1} 
 \vec{d}\boxtimes\sem{\pi^{j},\psi_{1}} -\varepsilon
 \, \bigr\}
 \;\bigr)\,,
\phantom{hoge}
\\
   \sup_{i\in\N}
 \bigl(\;
  \min \bigl\{\,  
 \vec{d}\boxtimes\sem{\pi^{i},\psi_{2}} 
 ,\, 
\min_{0 \leq j \leq i-1} 
 \vec{d}\boxtimes\sem{\pi^{j},\psi_{1}} 
 \, \bigr\}
 \;\bigr)
\end{array}
\right]
\\
&
\qquad\qquad\qquad\qquad\qquad\qquad\qquad\qquad\qquad\qquad
\text{by the induction hypothesis}
\\
&=
\bigl[\,
    \vec{d} \boxtimes \sem{ \pi, \psi_{1} \U \psi_{2} }
   -\varepsilon\,,\;
    \vec{d} \boxtimes \sem{ \pi, \psi_{1} \U \psi_{2} } 
\,\bigr]
\qquad\text{by~(\ref{eq:actionIsMonotoneAndConti}),}
  \end{align*}
as required.
  
  Suppose that $\psi = \psi_{1} \U \psi_{2}$ and that $|\vec{d}|$ is
 even. Let $\pi \in (\mathcal{P}(\AP))^{\omega}$. Since $\vec{d} \boxtimes (\place)$ is antitone and continuous, the second
 equality below holds.
  \begin{equation}\label{eq:untilEven}
  \begin{aligned}
    \vec{d} \boxtimes \sem{ \pi, \psi_{1} \U \psi_{2} } &= \vec{d}
   \boxtimes \sup_{i \in \N} \bigl\{\, \min \bigl\{\, \sem{ \pi^{i},
   \psi_{2} }, \min_{0 \leq j \leq i-1} \sem{ \pi^{j}, \psi_{1} }
   \,\bigr\} \,\bigr\} 
  \quad\text{by def.\ of $\sem{ \pi, \psi_{1} \U \psi_{2} }$}
\\
	&= \inf_{i \in \N} \bigl\{\, \max \bigl\{\, \vec{d} \boxtimes \sem{ \pi^{i}, \psi_{2} },\, \max_{0 \leq j \leq i-1} \bigl(\vec{d} \boxtimes \sem{ \pi^{j}, \psi_{1} }\bigr) \,\bigr\} \,\bigr\}\enspace.
  \end{aligned}
  \end{equation}
  We use the following observation. It is a quantitative adaptation of 
  the classic duality between the temporal operators $\U$ and $\R$
 (``release'').
  \begin{sublem}\label{sublem:quantitativeUntilAndRelease}
   Let
$a_{0}, a_{1},\dotsc$ and
 $b_{0}, b_{1},\dotsc$ all be real numbers in $[0,1]$. We have
  \begin{displaymath}
   \inf_{i\in\N}
\bigl(\,
\max\{b_{i}, \max_{0\le j\le i-1} a_{j} \}
\,\bigr)
   \;=\;
\Bigl\{\,
    \sup_{j\in\N}
    \bigl(\,
\min\{
    a_{j},
\min_{0\le i\le j}b_{i}
\}
\,\Bigr)
,
 \inf_{i\in\N} b_{i}
\,\Bigr\}
\enspace,
  \end{displaymath}
that is, denoting binary $\min$ and $\max$ by $\land$ and $\lor$:
  \begin{equation}\label{eq:goalofSublemquantitativeUntilAndRelease}
   \inf_{i\in\N}
\bigl(\,
b_{i}\lor (a_{0}\lor a_{1}\lor \cdots\lor a_{i-1})
\,\bigr)
   \;=\;
\Bigl(\,
    \sup_{j\in\N}
    \bigl(\,
    a_{j}\land 
    (b_{0}\land b_{1}\land\cdots\land b_{j})
\,\bigr)
\,\Bigr)
\lor
 \inf_{i\in\N} b_{i}
\enspace.
  \end{equation}
  \end{sublem}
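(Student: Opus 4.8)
The plan is to prove the two inequalities $L\ge R$ and $L\le R$ separately, where I abbreviate the left-hand side of~(\ref{eq:goalofSublemquantitativeUntilAndRelease}) as $L=\inf_{i\in\N}\bigl(b_{i}\lor\bigvee_{0\le k\le i-1}a_{k}\bigr)$ and split the right-hand side as $R=R_{1}\lor R_{2}$ with $R_{1}=\sup_{j\in\N}\bigl(a_{j}\land\bigwedge_{0\le l\le j}b_{l}\bigr)$ and $R_{2}=\inf_{i\in\N}b_{i}$. The one point to keep in mind throughout is that the inner ``prefix'' operations $\bigvee_{0\le k\le i-1}a_{k}$ and $\bigwedge_{0\le l\le j}b_{l}$ range over \emph{finitely} many indices, so they are genuine maxima/minima and are attained; by convention the empty disjunction (the case $i=0$) equals $0$, the bottom of $[0,1]$.

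For $L\ge R$ I would argue as follows. First $L\ge R_{2}$ is immediate, since $b_{i}\lor\bigvee_{k<i}a_{k}\ge b_{i}\ge\inf_{k}b_{k}=R_{2}$ for every $i$, hence so is the infimum over $i$. Next, $L\ge R_{1}$ follows from the elementary fact that $\inf_{i}x_{i}\ge\sup_{j}y_{j}$ holds whenever $x_{i}\ge y_{j}$ for all $i,j$; I apply this with $x_{i}=b_{i}\lor\bigvee_{k<i}a_{k}$ and $y_{j}=a_{j}\land\bigwedge_{l\le j}b_{l}$, verifying $x_{i}\ge y_{j}$ by a case split: if $j<i$ then $a_{j}$ occurs among the disjuncts of $x_{i}$, so $x_{i}\ge a_{j}\ge y_{j}$; if $j\ge i$ then $b_{i}$ occurs among the conjuncts of $y_{j}$, so $y_{j}\le b_{i}\le x_{i}$. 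Combining, $L\ge R_{1}\lor R_{2}=R$.

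For $L\le R$ I would use contradiction. Suppose $L>R$, so in particular $L>R_{2}=\inf_{i}b_{i}$ and $L>R_{1}$. From $L>\inf_{i}b_{i}$ pick the \emph{least} index $m$ with $b_{m}<L$. Since $b_{m}\lor\bigvee_{k<m}a_{k}\ge L$ (it is an element of the family over which $L$ is the infimum) while $b_{m}<L$, the finite maximum $\bigvee_{0\le k\le m-1}a_{k}$ must be $\ge L$ and is attained at some $k^{\ast}$ with $0\le k^{\ast}\le m-1$, i.e.\ $a_{k^{\ast}}\ge L$ (note this already forces $m\ge 1$). By minimality of $m$ we have $b_{l}\ge L$ for every $l\le k^{\ast}$, hence $\bigwedge_{0\le l\le k^{\ast}}b_{l}\ge L$, so $a_{k^{\ast}}\land\bigwedge_{l\le k^{\ast}}b_{l}\ge L$ and therefore $R_{1}\ge L$, contradicting $L>R_{1}$. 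Thus $L\le R$, and with the previous paragraph $L=R$, which is exactly~(\ref{eq:goalofSublemquantitativeUntilAndRelease}).

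I do not expect a genuine obstacle here; the proof is entirely finitary lattice book-keeping. The only step that requires a little care is the one where I invoke that $\bigvee_{k<m}a_{k}$ is \emph{attained} (and the boundary case $i=0$, where this disjunction is empty and equals $0$) --- this is precisely what lets me extract the witness $k^{\ast}$ and is the analogue, in the proof, of the fact that ``release'' can always be discharged at a concrete position.
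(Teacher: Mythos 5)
Your proof is correct, but it is organized differently from the paper's. The paper proceeds by a global case distinction on the sequences --- either there is some $i$ with $b_{i}<a_{0}\lor\cdots\lor a_{i-1}$ (in which case it takes the least such $k$ and an $l$ attaining $a_{l}=a_{0}\lor\cdots\lor a_{k-1}$, and shows by explicit computation that \emph{both} sides equal the common value $b_{0}\land\cdots\land b_{k-1}\land a_{l}$), or there is no such $i$ (in which case both sides are shown to equal $\inf_{i}b_{i}$). You instead prove the two inequalities separately: for $L\ge R_{1}$ you use the uniform observation that every term $b_{i}\lor\bigvee_{k<i}a_{k}$ dominates every term $a_{j}\land\bigwedge_{l\le j}b_{l}$ (via the clean $j<i$ versus $j\ge i$ dichotomy), which needs no case analysis on the sequences at all; and for $L\le R$ you argue by contradiction, pivoting on the least $m$ with $b_{m}<L$ rather than the paper's least $k$ with $b_{k}<a_{0}\lor\cdots\lor a_{k-1}$. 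The essential combinatorial ingredient is shared --- both arguments extract a witness index from an \emph{attained} finite prefix maximum of the $a$'s, guarded below by a prefix of the $b$'s --- but your version is shorter and avoids identifying the common value explicitly, whereas the paper's computation exhibits that value ($b_{0}\land\cdots\land b_{k-1}\land a_{l}$), which is somewhat more informative about where the inf/sup are realized. Your handling of the boundary case ($i=0$, empty disjunction equal to $0$, forcing $m\ge 1$ since $L>R\ge 0$ implies $L>0$) is the one delicate point and you address it correctly.
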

  \begin{proof} (Of Sublem.~\ref{sublem:quantitativeUntilAndRelease})
   We distinguish two cases. Let us first assume that there exists
   $i\in\N$ such that $b_{i}<a_{0}\lor a_{1}\lor \cdots \lor a_{i-1}$.
   Let $k$ be the least number among such, that is, $k$ satisfies that
   \begin{equation}\label{eq:201410161346}
     b_{k} < a_{0}\lor a_{1}\lor \cdots \lor a_{k-1} 
	 \quad \mbox{ and } \quad 
	 \forall i\in[0,k-1].\; b_{i} \geq a_{0}\lor a_{1}\lor \cdots \lor a_{i-1}\enspace.
   \end{equation}
   Moreover, let $l \in [0,k-1]$ be a number such that $a_{l} = a_{0}\lor a_{1}\lor \cdots \lor a_{k-1}$. 
   We have 
  \begin{align*}
   &\inf_{i\in\N}
\bigl(\,
b_{i}\lor (a_{0}\lor a_{1}\lor \cdots\lor a_{i-1})
\,\bigr) 
\\
   &=\;
 b_{0}
\;\land\;
 (b_{1}\lor a_{0})
 \;\land\;
 (b_{2}\lor a_{0}\lor a_{1})
 \;\land\;
\cdots
\\
&
\qquad\qquad\qquad
 \;\land\;
 (b_{k}\lor a_{0}\lor \cdots\lor a_{k-1})
\land\,
   \inf_{i \geq k+1}
\bigl(\,
b_{i}\lor (a_{0}\lor a_{1}\lor \cdots\lor a_{i-1})
\,\bigr)
\\
   &=\;
b_{0}\land b_{1}\land\cdots\land b_{k-1} \land a_{l} 
\,\land\,
   \inf_{i \geq k+1}
\bigl(\,
b_{i}\lor (a_{0}\lor a_{1}\lor \cdots\lor a_{i-1})
\,\bigr)
\quad\text{by def.\ of $k$, (\ref{eq:201410161346}).}
  \end{align*}
  Since we have
  $a_{l} \leq b_{i}\lor (a_{0}\lor a_{1} \lor \cdots\lor a_{i-1})$
 for each $i \in[k+1,\infty)$,
  \begin{displaymath}
   a_{l} \;\leq\;
   \inf_{i \geq k+1}
\bigl(\,
b_{i}\lor (a_{0}\lor a_{1}\lor \cdots\lor a_{i-1})
\,\bigr)
  \end{displaymath}
  and we obtain
  \begin{equation}\label{eq:infAtFirstCase}
   \inf_{i\in\N}
\bigl(\,
b_{i}\lor (a_{0}\lor a_{1}\lor \cdots\lor a_{i-1})
\,\bigr) \\
   \;=\; 
   b_{0}\land b_{1}\land\cdots\land b_{k-1} \land a_{l}
\enspace.
  \end{equation}
  Now we compare the last value $b_{0}\land b_{1}\land\cdots\land b_{k-1} \land a_{l}$
   with 
 the right-hand side of our goal~(\ref{eq:goalofSublemquantitativeUntilAndRelease}).
By the definition of $k$ and $l$, for each $j \in [0,k-1]$, we have 
   \begin{displaymath}
     a_{j} \leq a_{0}\lor a_{1}\lor \cdots \lor a_{k-1} = a_{l}
	 \quad \mbox{ and } \quad 
	 \forall i\in[j+1,k-1].\; a_{j} \leq a_{0}\lor a_{1}\lor \cdots \lor a_{i-1} \leq b_{i}\enspace,
   \end{displaymath}
yielding
\begin{align*}
 a_{j}&\;\le\; a_{l}\land b_{j+1}\land b_{j+2}\land\cdots\land
 b_{k-1}\enspace, \quad\text{and hence}
\\
 a_{j}\land 
    (b_{0}\land b_{1}\land\cdots\land b_{j}) 
 &\;\le\;
	b_{0}\land b_{1}\land\cdots\land b_{j} \land b_{j+1}\land\cdots\land b_{k-1} \land a_{l}\enspace.
\end{align*}
The last inequality holds for each $j \in [k,\infty)$, too:
   \begin{align*}
   a_{j}\land 
    (b_{0}\land b_{1}\land\cdots\land b_{j}) 
	&\,\leq\,
	b_{0}\land b_{1}\land\cdots\land b_{k-1} \land b_{k} \\
	&\,\leq\,
	b_{0}\land b_{1}\land\cdots\land b_{k-1} \land (a_{0}\lor a_{1}\lor \cdots\lor a_{k-1}) \\
	&\qquad\qquad\qquad\qquad\qquad\qquad\qquad\qquad
	\text{by def.\ of $k$, (\ref{eq:201410161346})} \\
	&\,=\,
	b_{0}\land b_{1}\land\cdots\land b_{k-1} \land a_{l}
	\quad\text{by def.\ of $l$.}
   \end{align*}
 Consequently
   \begin{equation}\label{eq:supLeqInf}
    \sup_{j\in\N}
    \bigl(\,
    a_{j}\land 
    (b_{0}\land b_{1}\land\cdots\land b_{j})
\,\bigr)
 \,\leq\,
 b_{0}\land b_{1}\land\cdots\land b_{k-1} \land a_{l}\enspace.
   \end{equation}
We turn to the other part $     \inf_{i\in\N} b_{i} $
of the right-hand side of~(\ref{eq:goalofSublemquantitativeUntilAndRelease}).
   By the definition of $k$ and $l$, we have $b_{k} \,\leq\, a_{0}\lor a_{1}\lor \cdots\lor a_{k-1} \,=\, a_{l}$. Therefore
   \begin{equation}\label{eq:infLeqInf}
   \begin{aligned}
     \inf_{i\in\N} b_{i} 
	\;\leq\;
	b_{0}\land b_{1}\land\cdots\land b_{k-1} \land b_{k}
	\;\leq\;
	b_{0}\land b_{1}\land\cdots\land b_{k-1} \land a_{l}\enspace.
   \end{aligned}
   \end{equation}
   By~(\ref{eq:supLeqInf}) and~(\ref{eq:infLeqInf}), 
   \begin{equation}\label{eq:leqAtFirstCase}
   \Bigl(\,
    \sup_{j\in\N}
    \bigl(\,
    a_{j}\land 
    (b_{0}\land b_{1}\land\cdots\land b_{j})
\,\bigr)
\,\Bigr)
\lor
 \inf_{i\in\N} b_{i} 
 \;\leq\;
 b_{0}\land b_{1}\land\cdots\land b_{k-1} \land a_{l}\enspace,
   \end{equation}
   on the one hand. 
   On the other hand,  since $l\in[0,k-1]$,
   \begin{equation}\label{eq:geAtFirstCase}
   \begin{aligned}
   \Bigl(\,
    \sup_{j\in\N}
    \bigl(\,
    a_{j}\land 
    (b_{0}\land b_{1}\land\cdots\land b_{j})
\,\bigr)
\,\Bigr)
\lor
 \inf_{i\in\N} b_{i} 
 &\,\geq\,
    \sup_{j\in\N}
    \bigl(\,
    a_{j}\land 
    (b_{0}\land b_{1}\land\cdots\land b_{j})
\,\bigr) \\
 &\,\geq\,
 a_{l}\land 
    (b_{0}\land b_{1}\land\cdots\land b_{l}) \\
 &\,\geq\,
 a_{l} \land (b_{0}\land b_{1}\land\cdots\land b_{k-1})\enspace.
   \end{aligned}
   \end{equation}
   By~(\ref{eq:leqAtFirstCase}) and~(\ref{eq:geAtFirstCase}), 
   \begin{equation}\label{eq:infSupEqualAtFirstCase}
   \Bigl(\,
    \sup_{j\in\N}
    \bigl(\,
    a_{j}\land 
    (b_{0}\land b_{1}\land\cdots\land b_{j})
\,\bigr)
\,\Bigr)
\lor
 \inf_{i\in\N} b_{i} 
 \,=\,
 b_{0}\land b_{1}\land\cdots\land b_{k-1} \land a_{l}\enspace.
   \end{equation}
   By~(\ref{eq:infAtFirstCase}) and~(\ref{eq:infSupEqualAtFirstCase}), 
  \begin{displaymath}
   \inf_{i\in\N}
\bigl(\,
b_{i}\lor (a_{0}\lor a_{1}\lor \cdots\lor a_{i-1})
\,\bigr)
   \;=\;
\Bigl(\,
    \sup_{j\in\N}
    \bigl(\,
    a_{j}\land 
    (b_{0}\land b_{1}\land\cdots\land b_{j})
\,\bigr)
\,\Bigr)
\lor
 \inf_{i\in\N} b_{i}
\enspace.
  \end{displaymath}
This establish the claim, in our first case where there exists
   $i\in\N$ such that $b_{i}<a_{0}\lor a_{1}\lor \cdots \lor a_{i-1}$.

In the other case we assume  that $b_{i} \geq a_{0}\lor a_{1}\lor \cdots \lor a_{i-1}$ for each $i\in\N$. By this assumption, 
  $
b_{i}\lor (a_{0}\lor a_{1}\lor \cdots\lor a_{i-1})
   \;=\; 
   b_{i}
  $
  for each $i \in \N$.
  Therefore
  \begin{equation}\label{eq:infAtSecondCase}
   \inf_{i\in\N}
\bigl(\,
b_{i}\lor (a_{0}\lor a_{1}\lor \cdots\lor a_{i-1})
\,\bigr) \\
   \;=\; 
   \inf_{i\in\N} b_{i}
\enspace.
  \end{equation}
 
Let us now fix $j\in\N$. For each $i \in [j+ 1,\infty)$ we have 
  \begin{math}
    a_{j} \leq a_{0}\lor a_{1}\lor \cdots\lor a_{i-1} \leq b_{i}
  \end{math}, where the latter inequality holds because of the assumption.
 Therefore $a_{j}\le \inf_{i\geq j+1} b_{i}$; this is used in
  \begin{align*}
    a_{j}\land 
    (b_{0}\land b_{1}\land\cdots\land b_{j})
	&\;\leq\;
	(\inf_{i \geq j+1} b_{i}) \land (b_{0}\land b_{1}\land\cdots\land b_{j}) 
	\;=\;
	\inf_{i\in\N} b_{i}\enspace.
  \end{align*}
   This holds for any $j\in\N$; therefore
  $
  \sup_{j\in\N}
    \bigl(\,
    a_{j}\land 
    (b_{0}\land b_{1}\land\cdots\land b_{j})
\,\bigr)
	\,\leq\,
	\inf_{i\in\N} b_{i}$.
 This yields
	\begin{math}
   \Bigl(\,
    \sup_{j\in\N}
    \bigl(\,
    a_{j}\land 
    (b_{0}\land b_{1}\land\cdots\land b_{j})
\,\bigr)
\,\Bigr)
\lor
 \inf_{i\in\N} b_{i} 
 \,=\,
 \inf_{i\in\N} b_{i} 
   \end{math}, which is combined with~(\ref{eq:infAtSecondCase}) and 
   proves the claim~(\ref{eq:goalofSublemquantitativeUntilAndRelease}). 
  This concludes the proof of
   Sublem.~\ref{sublem:quantitativeUntilAndRelease}. 
  \end{proof}

 We turn back to the proof of Lem.~\ref{lem:correctnessOfA}.
By letting $a_{j}=\vec{d} \boxtimes \sem{ \pi^{j}, \psi_{1} }$ 
and $b_{i}=\vec{d} \boxtimes \sem{ \pi^{i}, \psi_{2} }$ in
 Sublem.~\ref{sublem:quantitativeUntilAndRelease}, we obtain
  \begin{equation}\label{eq:distrInfSup}
  \begin{aligned}
	& \inf_{i \in \N} \bigl(\, \max \bigl\{\, \vec{d} \boxtimes \sem{ \pi^{i}, \psi_{2} },\, \max_{0 \leq j \leq i-1} \bigl(\vec{d} \boxtimes \sem{ \pi^{j}, \psi_{1} }\bigr) \,\bigr\} \,\bigr)  \\
	&=\max \bigl\{\, \sup_{j \in \N} \bigl(\, \min \bigl\{\, \vec{d} \boxtimes \sem{ \pi^{j}, \psi_{1} },\, \min_{0 \leq i \leq j} \bigl(\vec{d} \boxtimes \sem{ \pi^{i}, \psi_{2} }\bigr) \,\bigr\} \,\bigr),\, \inf_{i \in \N} (\vec{d} \boxtimes \sem{ \pi^{i}, \psi_{2} }) \,\bigr\}\enspace.
  \end{aligned}
  \end{equation}

  By~(\ref{eq:untilEven}) and~(\ref{eq:distrInfSup}), we have
  \begin{equation}\label{eq:actionIsAntiMonotoneAndConti}
\begin{aligned}
&     \vec{d} \boxtimes \sem{ \pi, \psi_{1} \U \psi_{2} } 
\\
&= \max \bigl\{\, \sup_{j \in \N} \bigl\{\, \min \bigl\{\, \vec{d} \boxtimes \sem{ \pi^{j}, \psi_{1} },\, \min_{0 \leq i \leq j} \bigl(\vec{d} \boxtimes \sem{ \pi^{i}, \psi_{2} }\bigr) \,\bigr\} \,\bigr\},\, \inf_{i \in N} (\vec{d} \boxtimes \sem{ \pi^{i}, \psi_{2} }) \,\bigr\}\enspace.
\end{aligned}  
\end{equation}
  Let us now look at the value 
\begin{math}
 \Lang(\A_{\varphi, \varepsilon}^{(\psi_{1} \U
   \psi_{2},\vec{d})})(\pi) 
\end{math}. 
We analyze  possible run trees $\tau$ starting from the state
$(\psi_{1}\U\psi_{2},\vec{d})$, much like in the previous case where
 $|\vec{d}|$ is odd (in the current case it is even). It is easily seen
 from Def.~\ref{def:ABAforLTL} that $\tau$ is of one of the forms shown
 in Fig.~\ref{fig:untilRunTreeEven}.
\begin{figure}[tbp] 
\includegraphics[width=\textwidth]{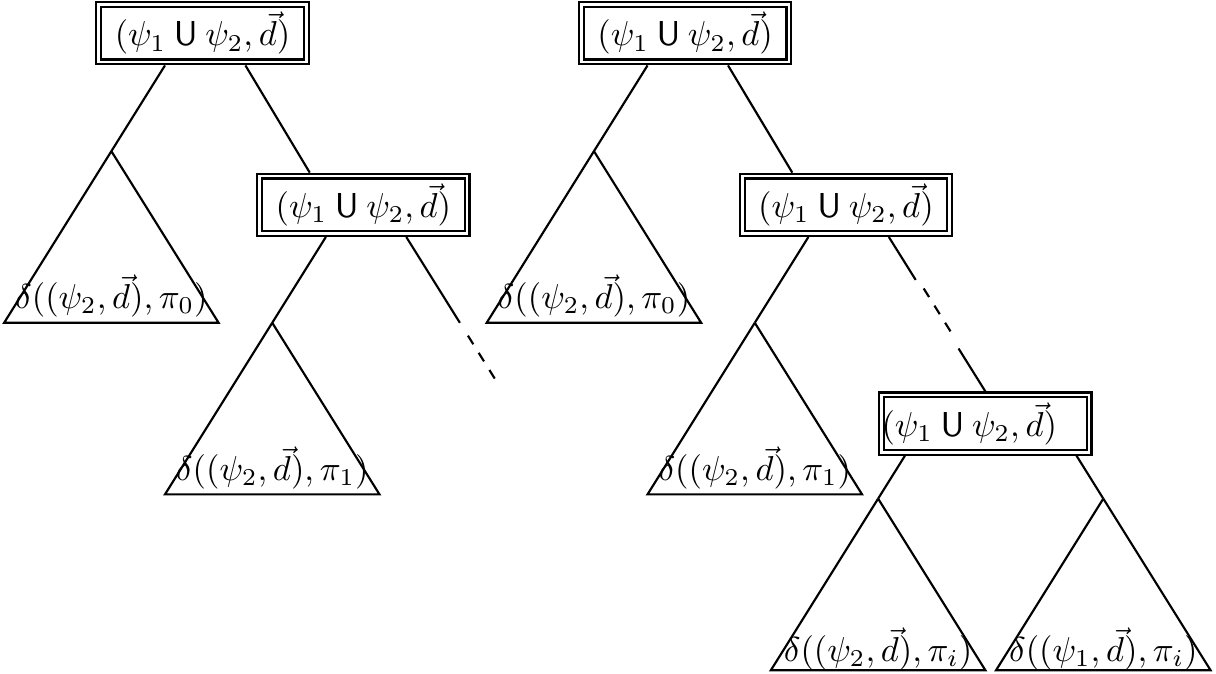}
 \caption{Possible run trees from the state $(\psi_{1} \U
   \psi_{2},\vec{d})$ in $\A_{\varphi,\varepsilon}$, when $|\vec{d}|$ is
even. The double-lined nodes  have
 the acceptance value $1$.}
\label{fig:untilRunTreeEven}
\end{figure}
\begin{itemize}
 \item 
  If $\tau$ is of the form in Fig.~\ref{fig:untilRunTreeEven} on the left, its utility
  $\min_{\rho\in\pathrm(\tau)}F^{\infty}(\rho)$ 
  is 
 \begin{math}
  \inf_{j \in \N} \Lang(\A_{\varphi, \varepsilon}^{(\psi_{2},\vec{d})})(\pi^{j})
 \end{math}; note that the rightmost path's value of $F^{\infty}$ is $1$
 and hence does not appear here. 
 \item 
  If $\tau$ is of the form in Fig.~\ref{fig:untilRunTreeEven} on the
       right, its utility   $\min_{\rho\in\pathrm(\tau)}F^{\infty}(\rho)$ 
  is given by
  \begin{math}
   \min \bigl\{\, \Lang(\A_{\varphi, \varepsilon}^{(\psi_{1},\vec{d})})(\pi^{i}),\, \min_{0 \leq j \leq i} \Lang(\A_{\varphi, \varepsilon}^{(\psi_{2},\vec{d})})(\pi^{j}) \,\bigr\}
  \end{math} where $i$ is the depth of the last occurrence of the node $(\psi_{1}\U\psi_{2},\vec{d})$.
\end{itemize}
The value 
$\Lang(\A_{\varphi, \varepsilon}^{(\psi_{1} \U
   \psi_{2},\vec{d})})(\pi) $
is defined as the supremum of these utilities.  Therefore:
  \begin{align*}
   & \Lang(\A_{\varphi, \varepsilon}^{(\psi_{1} \U
   \psi_{2},\vec{d})})(\pi) 
\\
 &= 
  \max \bigl\{\, \sup_{i \in \N} 
  \bigl(\; \min \bigl\{\, \Lang(\A_{\varphi, \varepsilon}^{(\psi_{1},\vec{d})})(\pi^{i}),\, \max_{0 \leq j \leq i} \Lang(\A_{\varphi, \varepsilon}^{(\psi_{2},\vec{d})})(\pi^{j}) \,\bigr\} \;\bigr),\; \inf_{j \in \N} \Lang(\A_{\varphi, \varepsilon}^{(\psi_{2},\vec{d})})(\pi^{j}) \,\bigr\} 
  \\
&\in
\left[
\begin{array}{r}
  \displaystyle \max \bigl\{\,
   \sup_{j\in\N}
 \bigl(\;
  \min \bigl\{\,  
 \vec{d}\boxtimes\sem{\pi^{j},\psi_{1}}
 ,\, 
\min_{0 \leq i \leq j} 
 \vec{d}\boxtimes\sem{\pi^{i},\psi_{2}}
 \, \bigr\}
 \;\bigr)\;,
 \inf_{i \in \N}
 \vec{d}\boxtimes\sem{\pi^{i},\psi_{2}} 
 \,\bigr\} - \varepsilon\,,
\phantom{hoge}
\\
  \displaystyle \max \bigl\{\,
   \sup_{j\in\N}
 \bigl(\;
  \min \bigl\{\,  
 \vec{d}\boxtimes\sem{\pi^{j},\psi_{1}}
 ,\, 
\min_{0 \leq i \leq j} 
 \vec{d}\boxtimes\sem{\pi^{i},\psi_{2}}
 \, \bigr\}
 \;\bigr)\;,
 \inf_{i \in \N}
 \vec{d}\boxtimes\sem{\pi^{i},\psi_{2}} 
 \,\bigr\}
\end{array}
\right]
\\
&
\qquad\qquad\qquad\qquad\qquad\qquad\qquad\qquad\qquad\qquad
\text{by the induction hypothesis}
\\
&=
\bigl[\,
    \vec{d} \boxtimes \sem{ \pi, \psi_{1} \U \psi_{2} }
   -\varepsilon\,,\;
    \vec{d} \boxtimes \sem{ \pi, \psi_{1} \U \psi_{2} } 
\,\bigr]
\qquad\text{by~(\ref{eq:actionIsAntiMonotoneAndConti}),}
\end{align*}
concluding the case when  $\psi = \psi_{1} \U \psi_{2}$ and $|\vec{d}|$ is
 even. 
  
  Suppose that $\psi = \psi_{1} \U_{\eta^{+k}} \psi_{2}$ and that
  ${|\vec{d}|}$ is odd. 
 We prove the claim by induction on $k$, going backwards, decrementing
 $k$ starting from the event
 horizon towards $k=0$. As the base case, assume that $k$ is big enough and we are
 beyond the event horizon, that is,  $\eta (k) \cdot \prod_{i =
  1}^{n} d_{i} \leq \varepsilon$.
Let $\pi \in
  (\mathcal{P}(\AP))^{\omega}$ and
  $\vec{d}=d_{1}\dotsc d_{n}$. 
 Then we have
 $\sem{ \pi, \psi } \cdot \prod_{i = 0}^{n} d_{i}
=
\sem{ \pi,  \psi_{1} \U_{\eta^{+k}} \psi_{2} } \cdot \prod_{i = 0}^{n} d_{i}
\leq \varepsilon$, by   Lem.~\ref{lem:maxTruthValOfDiscountedUntil}
 and that $\eta^{+k}(0)=\eta(k)$. It follows
 from~(\ref{eq:actionExplicitly}) that we have 
$0 \leq \vec{d} \boxtimes \sem{ \pi, \psi } - \vec{d} \boxtimes 0 \leq
 \varepsilon$
(note that $n=|\vec{d}|$ is odd). Therefore
\begin{align*}
 \Lang(\A_{\varphi, \varepsilon}^{(\psi_{1} \U_{\eta^{+k}}
 \psi_{2},\vec{d})})(\pi)
&=
 \vec{d}\boxtimes 0
 \qquad\text{by~(\ref{eq:defDeltaBeyondEventHorizon})}
\\
&\in
\bigl[\,\vec{d} \boxtimes \sem{ \pi, \psi }-\varepsilon,\,
\vec{d} \boxtimes \sem{ \pi, \psi }
\,\bigr]\enspace.
\end{align*}

Now, as the step case, assume that  $\eta (k) \cdot \prod_{i =
  1}^{n} d_{i} > \varepsilon$ and that the claim has been shown for
 $k+1$. The  analogue below of
 $\psi_{1}\U\psi_{2}\cong \psi_{2}\lor \bigl(\psi_{1}\land \X(\psi_{1}\U\psi_{2}) \bigr)$ follows easily from Def.~\ref{def:LTLdDSemantics}:
\begin{displaymath}
 \sem{\pi,\psi_{1} \U_{\eta^{+k}} \psi_{2}}
\;=\;
 \max\bigl\{\,
 \eta^{+k}(0)\cdot
\sem{\pi,\psi_{2}},\,
 \min\bigl\{\,
 \eta^{+k}(0)\cdot
\sem{\pi,\psi_{1}},\,
 \sem{\pi^{1}, \psi_{1} \U_{\eta^{+(k+1)}} \psi_{2}}
\,\bigr\}
\, \bigr\}\enspace.
\end{displaymath}
Therefore
\begin{equation}\label{eq:201410152355}
 \begin{aligned}
& \vec{d}\boxtimes
 \sem{\pi,\psi_{1} \U_{\eta^{+k}} \psi_{2}}
\\
&=
 \max\bigl\{\,
 \vec{d}\boxtimes\bigl(
\eta^{+k}(0)\cdot
\sem{\pi,\psi_{2}}
\bigr)
,\,
\\
&\qquad
 \min\bigl\{\,
 \vec{d}\boxtimes\bigl(
 \eta^{+k}(0)\cdot
\sem{\pi,\psi_{1}}
\bigr)
,\,
 \vec{d}\boxtimes
 \sem{\pi^{1}, \psi_{1} \U_{\eta^{+(k+1)}} \psi_{2}}
\,\bigr\}
\, \bigr\}
\\
&=
 \max\bigl\{\,
 \bigl(\vec{d}
\odot
\eta^{+k}(0)\bigr)
\boxtimes
\sem{\pi,\psi_{2}}
,\,
\\&\qquad
 \min\bigl\{\,
\bigl( \vec{d}\odot  \eta^{+k}(0)
\bigr)
\boxtimes
\sem{\pi,\psi_{1}}
,\,
 \vec{d}\boxtimes
 \sem{\pi^{1}, \psi_{1} \U_{\eta^{+(k+1)}} \psi_{2}}
\,\bigr\}
\, \bigr\}\enspace,
\end{aligned}\end{equation}
where the first equality is due to the monotonicity of
 $\vec{d}\boxtimes(\place)$, and the second is by~(\ref{eq:odotAndBoxtimes}).
Now
\begin{align*}
& 
 \Lang(\A_{\varphi, \varepsilon}^{(\psi_{1} \U_{\eta^{+k}}
 \psi_{2},\vec{d})})(\pi)
 \\
&=
 \max\bigl\{\,  
 \Lang(\A_{\varphi, \varepsilon}^{(
 \psi_{2},\vec{d}\odot \eta^{+k}(0))})(\pi)
 ,\,
 \min\bigl\{\,
 \Lang(\A_{\varphi, \varepsilon}^{(
 \psi_{1},\vec{d}\odot \eta^{+k}(0))})(\pi)
,\, 
 \Lang(\A_{\varphi, \varepsilon}^{(\psi_{1} \U_{\eta^{+(k+1)}}
 \psi_{2},\vec{d})})(\pi^{1})
\,\bigr\}
\,\bigr\}
\end{align*}
by Def.~\ref{def:ABAforLTL}. 
By the induction hypothesis (the claim has been shown for simpler
 formulas as well as $\psi_{1} \U_{\eta^{+(k+1)}}
 \psi_{2}$), a lower bound of the above value is given by
\begin{align*}
& \max\Bigl\{\,  
\Bigl( \bigl(\vec{d}
\odot
\eta^{+k}(0)\bigr)
\boxtimes
\sem{\pi,\psi_{2}}
\Bigr)
-\varepsilon
 ,\,
\\
&\qquad
 \min\Bigl\{\,
\Bigl( 
\bigl( \vec{d}\odot  \eta^{+k}(0)
\bigr)
\boxtimes
\sem{\pi,\psi_{1}}
\Bigr)
-\varepsilon,\,
\Bigl(
 \vec{d}\boxtimes
 \sem{\pi^{1}, \psi_{1} \U_{\eta^{+(k+1)}} \psi_{2}}
\Bigr)
-\varepsilon
\,\Bigr\}
\,\Bigr\}
\\
&=
\vec{d}\boxtimes
 \sem{\pi,\psi_{1} \U_{\eta^{+k}} \psi_{2}}
-\varepsilon
\qquad\text{by~(\ref{eq:201410152355}).}
\end{align*}
Similarly an upper bound $\vec{d}\boxtimes
 \sem{\pi,\psi_{1} \U_{\eta^{+k}} \psi_{2}}
$ is obtained by the induction hypothesis
 and~(\ref{eq:201410152355}). This proves the claim.

  

The remaining case where 
 $\psi = \psi_{1} \U_{\eta^{+k}} \psi_{2}$ and 
  ${|\vec{d}|}$ is even is similar to the last case. 
We describe only the base case of induction, where $k$ is  big enough so that
 $\eta (k) \cdot \prod_{i =
  1}^{n} d_{i} \leq \varepsilon$. By
 Lem.~\ref{lem:maxTruthValOfDiscountedUntil}
 we have $\sem{\pi,\psi}\in [0, \eta^{k}(0)]$; therefore
\begin{displaymath}\textstyle
 0
\;\le\;
 \eta(k)-\sem{\pi,\psi}
\;\le\; \eta(k)
 \;\le\; \varepsilon /\prod_{i =
  1}^{n} d_{i}\enspace.
\end{displaymath}
By~(\ref{eq:actionExplicitly}) and that $n$ is even, we have
\begin{align*}
 \vec{d}\boxtimes\sem{\pi,\psi} -\vec{d}\boxtimes \eta(k)
 \;&=\;
\bigl(
\prod_{i =
  1}^{n} d_{i} 
\bigr)\cdot
\bigl(\,\eta(k)-\sem{\pi,\psi}\,\bigr)
\;\in \; [0,\varepsilon]\enspace.
\end{align*}
Hence
\begin{align*}
 \Lang(\A_{\varphi, \varepsilon}^{(\psi_{1} \U_{\eta^{+k}}
 \psi_{2},\vec{d})})(\pi)
&=
 \vec{d}\boxtimes \eta(k)
 \qquad\text{by~(\ref{eq:defDeltaBeyondEventHorizon})}
\\
&\in
\bigl[\,\vec{d} \boxtimes \sem{ \pi, \psi }-\varepsilon,\,
\vec{d} \boxtimes \sem{ \pi, \psi }
\,\bigr]\enspace.
\end{align*}
This concludes the proof.
\end{proof}

\subsection{Proof of Lem.~\ref{lem:fromOptimalInATimesKToOptimalInA}}
\label{pf:lemfromOptimalInATimesKToOptimalInA}
\begin{proof}
 It follows easily from the definition that there is a bijective
 correspondence between: a run $\zeta=
 (q'_{0}, s'_{0})\,\bullet\,
 (q'_{1}, s'_{1})\,\bullet\,
 \dotsc
$ of $\A\times\K$; and a pair $(\xi,\rho)$ of
 a path $\xi=s'_{0}s'_{1}\dotsc\in\pathrm(\K)$ of $\K$ and 
 a run $\rho$ over $\lambda(\xi)$ of $\A$. Moreover, 
 the acceptance value of $\zeta$ in $\A\times\K$ is equal to
 that of $\rho$ in $\A$. The claim follows immediately.
\end{proof}

\subsection{Proof of Thm.~\ref{thm:mainWithoutPropositional}}
\label{pf:thmmain}
\begin{proof}
 \begin{align*}
  \sem{s_{0}s_{1}\dotsc,\varphi}
  \;&\geq\;
  \Lang(\Ana_{\varphi,\varepsilon})\bigl(\,\lambda(s_{0})\lambda(s_{1})\dotsc\,\bigr)
  \qquad\text{by Cor.~\ref{cor:NBAforLTL}}
 \\
 &=\;
  \max_{\xi\in\pathrm(\K)}
  \Lang(\Ana_{\varphi,\varepsilon})\bigl(\,\lambda(\xi)\,\bigr)
  \qquad\text{by Lem.~\ref{lem:fromOptimalInATimesKToOptimalInA}}
 \\
 &\geq\;
  \sup_{\xi\in\pathrm(\K)}
\sem{\xi,\varphi}
-\varepsilon
  \qquad\text{by Cor.~\ref{cor:NBAforLTL}.}
 \end{align*}
The solution $s_{0}s_{1}\dotsc$  thus obtained arises from a lasso
 computation of $\Ana_{\varphi,\varepsilon}\times\K$ (by the algorithm
 in Lem.~\ref{lem:lassoOptimalityForQuantitativeAcceptAutom}), hence is
 ultimately periodic.
\end{proof}

\subsection{Proof of Prop.~\ref{prop:sizeOfAltAutom}}
\label{pf:propsizeOfAltAutom}
\begin{proof}
  In the proof of Lem.~\ref{lem:correctnessWithPropositional}, we construct $\A_{\varphi,\varepsilon}^{\vec{d}}$ inductively.
 We shall therefore prove, inductively on the construction on $\varphi$, that the size of the state space of $\A_{\varphi,\varepsilon}^{\vec{d}}$ is singly exponential in $|\langle \varphi
 \rangle|$ and in the length of the description of $\varepsilon$.
 
 In the case where $\varphi =
 \true,p,\varphi_{1}\land\varphi_{2},\lnot\varphi',\X\varphi'$ or
 $\varphi_{1} \U \varphi_{2}$, the claim is obvious.
 
 Suppose that $\varphi = \varphi_{1} \U_{\expo_{\lambda}^{+k}}
 \varphi_{2}$ where $\lambda \in (0,1)$. Let $k_{\max} = \lceil
 \log_{\lambda}\varepsilon\rceil + 1$. Recall that
 the construction of 
 $\A_{\varphi_{1} \U_{\expo_{\lambda}^{+k}}
 \varphi_{2},\varepsilon}^{\vec{d}}$ in
 Lem.~\ref{lem:correctnessWithPropositional}
 is by backward induction on $k$, from $k=k_{\max}$ to $k=0$. 
 In the base case when $k = k_{\max}$,  we have
 $\expo_{\lambda}^{+k} (0) \leq \varepsilon$ (beyond the event
 horizon);  in this case
 the size of the state space of $\A_{\varphi,\varepsilon}^{\vec{d}}$ is
 one. In the step case, the state space of
 $\A_{\varphi_{1} \U_{\expo_{\lambda}^{+k}}
 \varphi_{2},\varepsilon}^{\vec{d}}$ is the union of: 
 those of the two automata for $\varphi_{1}$ and $\varphi_{2}$;
 that of the automaton $\A_{\varphi_{1} \U_{\expo_{\lambda}^{+(k+1)}} \varphi_{2},\varepsilon}^{\vec{d}}$; and 
 the singleton of the initial state of
 $\A_{\varphi,\varepsilon}^{\vec{d}}$. 
 Overall, the state space of
 $\A_{\varphi_{1} \U_{\expo_{\lambda}^{+k}}
 \varphi_{2},\varepsilon}^{\vec{d}}$ increases as $k$ decreases, and the
 maximum is when $k=0$---in which case the state space of
 $\A_{\varphi_{1} \U_{\expo_{\lambda}}
 \varphi_{2},\varepsilon}^{\vec{d}}$ is roughly 
$\mathcal{O} (k_{\max})  = \mathcal{O} (\lceil
 \log_{\lambda}\varepsilon\rceil + 1)$ copies of
those of the two automata for $\varphi_{1}$ and $\varphi_{2}$. 
 Now we appeal to the fact used in~\cite{AlmagorBK14} that the value
 $k_{\max} \sim \log_{\lambda} \varepsilon = \log \varepsilon / \log \lambda$ is
 polynomial in the length of the description of $\lambda$---hence in 
$|\langle\varphi\rangle|$---and
 $\varepsilon$.\footnote{It is not explicit in~\cite{AlmagorBK14} what
 is meant by the description length of $\lambda\in(0,1)$. For the claimed
 fact to be true---that
 $\log_{\lambda} \varepsilon = \log \varepsilon / \log \lambda$ is
 polynomial in the length of the description of $\lambda$---we expect it to be $a+b$ where $\lambda=a/b$.
 For example, when $\lambda=1-\frac{1}{b}$,  we have
 \begin{math}
    \log_{\lambda} \varepsilon 
	= \frac{\log \varepsilon}{\log \lambda} 
	= \frac{\log \varepsilon}{\log (1-\frac{1}{b})} 
    = \frac{-\log \epsilon}{\log b - \log (b-1)} 
	\leq b \cdot (-\log \varepsilon)
   \end{math}
  where for the last inequality we used $(\log x)'=\frac{1}{x}$. 
  This is linear in $b$.
 }
 By this fact and the induction hypothesis, the size of the state space of $\A_{\varphi,\varepsilon}^{\vec{d}}$ is singly exponential in $|\langle \varphi
 \rangle|$ and in the length of the description of $\varepsilon$.
 
 Suppose that $\varphi = \varphi_{1} \oplus \varphi_{2}$. Since
 $(\vec{d}\boxtimes v_{1} - \varepsilon) \oplus (\vec{d}\boxtimes v_{2}
 - \varepsilon) = \vec{d}\boxtimes(v_{1} \oplus v_{2}) - \varepsilon$,
 we have $\A_{\varphi,\varepsilon}^{\vec{d}}$ coincide with
 $\A_{\varphi_{1},\varepsilon}^{\vec{d}} \oplus
 \A_{\varphi_{2},\varepsilon}^{\vec{d}}$---where the latter is defined
 in Prop.~\ref{prop:ClosedUnderIncOperator}. (We note that the
 construction  in Prop.~\ref{prop:ClosedUnderIncOperator} can be readily
 adapted to
 \emph{alternating} $[0,1]$-acceptance automata, too.) Hence the size of the state space of $\A_{\varphi,\varepsilon}^{\vec{d}}$ is polynomial in those of $\A_{\varphi_{1},\varepsilon}^{\vec{d}}$ and $\A_{\varphi_{2},\varepsilon}^{\vec{d}}$. By the induction hypothesis, the size of the state space of $\A_{\varphi,\varepsilon}^{\vec{d}}$ is singly exponential in $|\langle \varphi
 \rangle|$ and in the length of the description of $\varepsilon$.
\end{proof}

\subsection{Proof of Thm.~\ref{thm:mainComplexity}}
\begin{proof}
 The construction in Prop.~\ref{prop:ABAtoNBA} (from $\A_{\varphi,\varepsilon}$ to
 $\Ana_{\varphi,\varepsilon}$) results in $\Ana_{\varphi,\varepsilon}$
 that is exponentially bigger than $\A_{\varphi,\varepsilon}$; the size of 
 the product $\Ana_{\varphi,\varepsilon}\times\K$
 (Def.~\ref{def:productAutomata}) is linear in those of
 $\Ana_{\varphi,\varepsilon}$ and $\K$; and finding
 an optimal run by
 Lem.~\ref{lem:lassoOptimalityForQuantitativeAcceptAutom} is in
 NLOGSPACE. Combined with Prop.~\ref{prop:sizeOfAltAutom}, the overall
 complexity is EXPSPACE in $|\langle \varphi
 \rangle|$ and NLOGSPACE in the size of $\K$.
\end{proof}

\subsection{Proof of Thm.~\ref{thm:mainComplexityWithoutPropositional}}
Firstly we give an alternative proof to the following statement (that is
a restriction of Prop.~\ref{prop:sizeOfAltAutom}). It is used in the
proof of Thm.~\ref{thm:mainComplexityWithoutPropositional}.

\begin{sublem}[size of  $\A_{\varphi,\varepsilon}$, for $\LTLd{\Dexp,\emptyset}$]\label{sublem:sizeOfAltAutomWithoutPropositional}
  Let $\varphi$ be an $\LTLd{\Dexp,\emptyset}$ formula and $\varepsilon\in(0,1)\cap\Q$
 be a positive rational number.  The size of the state space of
 the alternating $[0,1]$-acceptance automaton
 $\A_{\varphi,\varepsilon}$ is singly exponential in $|\langle \varphi
 \rangle|$ and in the length of the description of $\varepsilon$. \qed
\end{sublem}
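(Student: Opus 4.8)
The plan is to bound the size of the state space $Q = \mathit{xcl}(\varphi) \times [0,1]^{+}$ restricted to the reachable part, using the analysis already carried out in the proof of Lem.~\ref{lem:AphiEpsIsFiniteState} but now made quantitative. Recall from there that reachable states $(\psi,\vec{d})$ satisfy three constraints: (1) the length of $\vec{d}$ is bounded by the number of negations in $\varphi$, hence by $|\langle\varphi\rangle|$; (2) every entry $d_j$ of $\vec{d}$ is a product of discount factors $\eta_{\ell}(k_{\ell})$ with $\eta_{\ell}$ occurring in $\varphi$, and only finitely many such products exceed (the relevant fraction of) $\varepsilon$; (3) in $\mathit{xcl}(\varphi)$ only formulas $\psi_1 \U_{\eta^{+k}} \psi_2$ with $\eta(k)$ still above the event horizon can occur in a reachable state, so $k$ is bounded. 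First I would make each of these three bounds \emph{explicit} in terms of $|\langle\varphi\rangle|$ and the description length of $\varepsilon$, specialising to $\D = \Dexp$ so that every $\eta$ is of the form $\expo_{\lambda}$ with $\lambda = a/b \in (0,1)\cap\Q$.

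The key quantitative input is the one already invoked in the proof of Prop.~\ref{prop:sizeOfAltAutom}: for an exponential discounting function $\expo_{\lambda}$, the least $k$ with $\expo_{\lambda}(k) = \lambda^{k} \le \varepsilon$ is $k_{\max} = \lceil \log_{\lambda}\varepsilon \rceil = \lceil \log\varepsilon / \log\lambda \rceil$, and this quantity is \emph{polynomial} in the description length of $\lambda$ and of $\varepsilon$ (see the footnote in the proof of Prop.~\ref{prop:sizeOfAltAutom} for the estimate when $\lambda = 1-\tfrac1b$; the general bound is similar). Thus: (3) the number of relevant shift parameters $k$ for each discounting subformula is polynomial in $|\langle\varphi\rangle|$ and in $|\varepsilon|$, so $|\mathit{xcl}(\varphi)|$ restricted to reachable formulas is polynomial in these parameters. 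For (2), each entry $d_j$ is a product of at most $|\langle\varphi\rangle|$ factors $\lambda_{\ell}^{k_{\ell}}$, and to remain above the threshold each exponent $k_{\ell}$ is at most the corresponding $k_{\max}^{(\ell)}$, which is polynomial; hence there are at most polynomially-many choices per factor, and therefore at most $2^{\mathrm{poly}(|\langle\varphi\rangle|,|\varepsilon|)}$ admissible values of $d_j$. For (1), $\vec{d}$ has length at most $|\langle\varphi\rangle|$, so the number of admissible discount sequences is at most $\bigl(2^{\mathrm{poly}(|\langle\varphi\rangle|,|\varepsilon|)}\bigr)^{|\langle\varphi\rangle|} = 2^{\mathrm{poly}(|\langle\varphi\rangle|,|\varepsilon|)}$. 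Multiplying the (polynomially-many) reachable formulas by the (singly-exponentially-many) admissible discount sequences gives the claimed singly-exponential bound on $|Q|$.

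The main obstacle, and the step requiring the most care, is the bookkeeping in (2): one must verify that every entry $d_j$ appearing in a reachable state really is a \emph{product} $\prod_{\ell} \lambda_{\ell}^{k_{\ell}}$ of the discount bases of $\varphi$ (this follows by inspecting the only clause, (\ref{eq:defDeltaWithinEventHorizon}), that multiplies the last entry of $\vec{d}$, via the operator $\odot$, and noting that $\concatseq$ only appends the fresh entry $1$), and that the reachability condition $\eta(0)\cdot\prod_{i=1}^{n}d_i > \varepsilon$ together with strict monotonicity of each $\expo_{\lambda_\ell}$ forces each exponent $k_{\ell}$ below the corresponding $k_{\max}^{(\ell)}$; here one uses $d_i \le 1$ so that no single exponent can be "compensated" by the others. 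Once this is pinned down, counting the distinct products reduces to counting tuples $(k_1,\dots,k_m)$ with $m \le |\langle\varphi\rangle|$ and each $k_\ell$ polynomially bounded, which is $2^{\mathrm{poly}}$ as needed; an identical analysis for the shift parameter $k$ in condition (3) completes the proof. The argument is essentially a streamlined re-packaging of the proofs of Lem.~\ref{lem:AphiEpsIsFiniteState} and Prop.~\ref{prop:sizeOfAltAutom} specialised to $\Fcal = \emptyset$, so no new ideas are needed beyond making the finiteness bounds explicit.
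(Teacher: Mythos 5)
Your proposal is correct and follows essentially the same route as the paper's own proof: it bounds the reachable state space by multiplying a polynomial count of formulas in $\mathit{xcl}(\varphi)$ (using the event-horizon bound $k_{\max}\sim\log_{\lambda}\varepsilon$, which is polynomial in the description lengths) by a singly exponential count of admissible discount sequences (length bounded by the number of negations, each entry a product of discount factors exceeding $\varepsilon$). The only cosmetic difference is that you organize the entries $d_j$ as products of powers $\lambda_{\ell}^{k_{\ell}}$ with polynomially bounded exponents, while the paper counts them as words of length at most $\log_{\lambda_{0}}\varepsilon$ over the alphabet of bases; both yield the same $2^{\mathrm{poly}}$ bound.
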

\begin{proof} (Of Sublem.~\ref{sublem:sizeOfAltAutomWithoutPropositional})
   Recall that a state of $\A_{\varphi,\varepsilon}$ is a pair
 $(\psi,\vec{d})$ 
 of $\psi \in \mathit{xcl} (\varphi)$ and $\vec{d} \in [0,1]^{+}$. 
 We first claim that the number of different $\psi$'s is polynomial in 
 $|\langle\varphi\rangle|$ and $\log\varepsilon$. The claim is obvious
 except for the number of the formulas $\psi$ of the form
 \begin{math}
 \psi_{1} \U_{\eta^{+i}} \psi_{2}
 \end{math}, for varying $i\in\N$.
 Let
 $\lambda_{0}$ be the maximum number 
 in $\varphi$ 
 used as the base of an exponential discounting function.
 For each subformula $\psi_{1} \U_{\eta} \psi_{2}$ of $\varphi$,
 the numbers $i$ for which we have a state
 \begin{math}
  \bigl(\,\psi_{1} \U_{\eta^{+i}} \psi_{2},
 \vec{d}
 \,\bigr)
 \end{math}
 in $\A_{\varphi,\varepsilon}$ is bounded by $1 + \lceil \log_{\lambda_{0}}
 \varepsilon\rceil$. Now we appeal to the fact used in~\cite{AlmagorBK14} that the value
 $\log_{\lambda_{0}} \varepsilon = \log \varepsilon / \log \lambda_{0}$ is
 polynomial in the length of the description of $\lambda_{0}$---hence in 
 $|\langle\varphi\rangle|$---and
 $\varepsilon$.

 Our second claim is that the number of different $\vec{d}$'s occurring
 in states of  $\A_{\varphi,\varepsilon}$ is exponential in
 $|\langle\varphi\rangle|$ and the description length of 
 $\varepsilon$, hence is the bottleneck in complexity. The length of a discount sequence $\vec{d}$ 
 is bounded by the number of negations in $\varphi$, therefore by
  $|\langle\varphi\rangle|$. Each entry $d_{i}$ is a multiple
 $\lambda_{i_{1}}\lambda_{i_{2}}\dotsc \lambda_{i_{m}}$ of 
 different discounting bases $\lambda_{j}$ (there are 
 at most   $|\langle\varphi\rangle|$-many such), and since its value 
 must be bigger than $\varepsilon$,  the length $m$ of such a multiple is at most
 $\log_{\lambda_{0}}\varepsilon$. Therefore
 the number of candidates for $d_{i}=\lambda_{i_{1}}\lambda_{i_{2}}\dotsc \lambda_{i_{m}}$ is bounded by
 $|\langle\varphi\rangle|^{\log_{\lambda_{0}}\varepsilon}$; appealing to
 the fact (see~\cite{AlmagorBK14}) that $\log_{\lambda} \varepsilon = \log \varepsilon / \log \lambda$ is
 polynomial in the length of the description of $\lambda$ and
 $\varepsilon$, we obtain the claim.
\end{proof}

\begin{proof} (Of Thm.~\ref{thm:mainComplexityWithoutPropositional}, sketch) 
 We describe how to avoid 
the exponential blowup in the translation from 
$\A_{\varphi,\varepsilon}$ to  $\Ana_{\varphi,\varepsilon}$.

 Looking at the construction of Prop.~\ref{prop:ABAtoNBA} in case of
 $\A=\A_{\varphi,\varepsilon}$, we have $V_{Q}=\{0,1\}$, therefore
\begin{equation}\label{eq:201410162348}
 Q'\;=\;
{\mathcal{P} (Q \times 2)} \times V_{\delta} \times 2
 \;\cong\;
 \bigl(\mathcal{P} (Q)\bigr)^{2} \times V_{\delta} \times 2\enspace.
\end{equation}
Here the original state space $Q$ is bounded by 
 $\mathit{xcl}_{\varepsilon} (\varphi)\times
 |\langle\varphi\rangle|^{\log_{\lambda_{0}}\varepsilon}$, where
 \begin{displaymath}
  \mathit{xcl}_{\varepsilon} (\varphi) 
 \;=\; \mathit{xcl}(\varphi) \setminus \{ \varphi_{1} \U_{\eta} \varphi_{2} \in \mathit{xcl} (\varphi) \mid \eta(0) < \varepsilon\}
 \end{displaymath} is a finite set and the second component 
$ |\langle\varphi\rangle|^{\log_{\lambda_{0}}\varepsilon}$ is from the
 proof of Prop.~\ref{sublem:sizeOfAltAutomWithoutPropositional}.

The optimization lies in the reduction of
$\mathcal{P} (Q)$
that occurs in~(\ref{eq:201410162348}) to 
\begin{equation}\label{eq:QfiveTimes}
\bigl((Q \times Q^{2} \times Q^{2}) \cup \{ \bullet \} \bigr)^{\mathit{xcl}_{\varepsilon} (\varphi)} \enspace,
\end{equation} hence from a double exponential to a single
 exponential; 
recall from the proof  
 of Prop.~\ref{sublem:sizeOfAltAutomWithoutPropositional} that $Q$ is exponential and $  \mathit{xcl}_{\varepsilon}
 (\varphi) $ is polynomial, in $|\langle\varphi\rangle|$ and the
 description length of $\varepsilon$. 

The reduction is done concretely
 as follows. Given a set 
\begin{equation}\label{eq:aSetToBeSuppressed}
 \bigl\{\,
(\psi,\vec{d_{1}}),\,
(\psi,\vec{d_{2}}),\,\dotsc,
(\psi,\vec{d_{m}})
\,\bigr\}
\end{equation}
 of states of $Q$ with a common first component $\psi$, we suppress the
 set into the
 function 
 \begin{equation}\label{eq:piecewiseLinear}
   (\vec{d_{1}}\land\cdots\land\vec{d_{m}})\boxtimes(\place)
   \;\colon\;
   v\;\longmapsto \;
   \min\{\vec{d_{1}}\boxtimes v,\cdots,\vec{d_{m}}\boxtimes v\}\
 \end{equation}
  that does the same job. The latter is a piecewise linear function on
 $[0,1]$ and hence is presented as a disjunction of pairs $(f_{i}, [l_{i},
 r_{i}])$ of  a linear function $f_{i}$ and its domain (here $l_{i},
 r_{i}\in(0,1)$). Now $f_{i}$ is represented by some discount sequence
 so
 there are at most $|Q|$-many of them. A point $l_{i}\in [0,1]$  is expressed as the cross point of two linear functions,
 each represented by a discount sequence.
The same goes for $r_{i}$. 
 Moreover, disjunction is taken out of a single state in the resulting
 automaton---from alternating to non-alternating we only need to bundle
 up states in conjunction. 
In summary, to express 
 the piecewise linear function in~(\ref{eq:piecewiseLinear}) we need: $Q$ to
 represent $f_{i}$; $Q^{2}$ to represent $l_{i}$; and
$Q^{2}$ to represent $r_{i}$, resulting in $Q\times Q^{2}\times Q^{2}$
 in~(\ref{eq:QfiveTimes}). 

 We consider all those sets in the form
 of~(\ref{eq:aSetToBeSuppressed}), therefore we need 
$Q\times Q^{2}\times Q^{2}$ for each formula $\psi\in\mathit{xcl}_{\varepsilon}
 (\varphi)$. The set $\{\bullet\}$ is in~(\ref{eq:QfiveTimes}) to take
 care of the case when the set~(\ref{eq:aSetToBeSuppressed}) for the
 formula $\psi$ is empty.
\end{proof}

\section{Reduction of Fuzzy Automata to $[0,1]$-Acceptance Automata}
\label{appendix:fuzzyAndZeroOne}
A generalization of $[0,1]$-acceptance automaton
  is naturally obtained by making transitions also
$[0,1]$-weighted. The result is called \emph{fuzzy automaton}
  and studied e.g.\ in~\cite{Rahonis05}.
  Here we show that this generalization does not add expressivity. In
  fact we prove a more general result, parametrizing $[0,1]$
  into a general semiring 
  $\mathbb{K}$ (under certain conditions).


We
follow~\cite{DrosteP07} and
impose certain conditions on a semiring $K$ of weights.
\begin{defi}[\cite{DrosteP07}]\label{def:conditionsOnSemiring}
A tuple $\mathbb{K}=(K,\leq,+,\cdot,0,1)$ is called an \emph{ordered semiring} if $(K,+,\cdot,0,1)$ is a semiring, $(K,\leq)$ is a partially ordered set and both $+$ and $\cdot$ are monotonic.

 An ordered semiring $\mathbb{K}=(K,\leq,+,\cdot,0,1)$ is said to be
 \emph{lattice-complete} if: $(K,\leq)$ is a complete lattice; the units
 $0,1$ of $+,\cdot$ satisfy $0 \leq x \leq 1$ for each $x \in K$; and  
  \begin{displaymath}
    y + \sup_{i \in I} x_{i} \;=\; \sup_{i \in I} (y + x_{i})
  \end{displaymath}
  for each family $(x_{i})_{i \in I}$ and each $y \in K$. We define an
 infinite sum, as usual, by
  \begin{displaymath}
    \sum_{i \in I} x_{i} \;=\; \sup_{F \in {\mathcal{P}_{\mathrm{fin}}} (I)} \sum_{i \in F} x_{i}
  \end{displaymath}
  where ${\mathcal{P}_{\mathrm{fin}}} (I)$ is the set of finite subsets
 of $I$.

 A semiring is \emph{locally finite} if the underlying monoid $(K,\cdot,1)$ is locally
 finite, that is: for each finite subset $F\subseteq K$, the submonoid of $(K,\cdot,1)$ generated by $F$ is finite.
\end{defi}
The notion of $\mathbb{K}$-weighted (B\"uchi) automaton is
studied in~\cite{DrosteP07}, from which the following definition is taken.
\begin{defi}[$\mathbb{K}$-acceptance (B\"uchi) automaton, $\mathbb{K}$-weighted (B\"uchi) automaton]
  Let $(K,\leq,+,\cdot,0,1)$ be a lattice-complete semiring. A \emph{$\mathbb{K}$-acceptance (B\"uchi) automaton} is a tuple $\A = (\Sigma,Q,I,\delta,F)$, where $\Sigma$ is a finite  alphabet,
 $Q$ is a finite set of states, $I \subseteq Q$ is a set of initial
 states, $\delta : Q \times \Sigma \rightarrow \mathcal{P}(Q)$ is a
 transition function and $F : Q \rightarrow K$ is 
 a function that assigns an \emph{acceptance value} to each state.
  We define the language $\Lang(\A) : \Sigma^{\omega} \rightarrow K$ of $\A$ as
  \begin{displaymath}
    \Lang(\A)(w) = \sum_{ \rho \in \run(w)} \max \{ F(q) \mid q \in \mathrm{Inf}(\rho) \}\enspace.
  \end{displaymath}
  
  A \emph{$\mathbb{K}$-weighted (B\"uchi) automaton} is a tuple $\A =
 (\Sigma,Q,I,\delta,F)$, where $\Sigma$ is a finite  alphabet, $Q$ is a
 finite set of states, $I : Q \rightarrow K$ is a function assigns an
 \emph{initial weight} to each state, $\delta : Q \times \Sigma
 \rightarrow {K}^{Q}$ is a ($\mathbb{K}$-weighted) transition function
 and $F : Q \rightarrow K$ is a function assigns an \emph{acceptance value} to each state.
  We define the language $\Lang(\A) : \Sigma^{\omega} \rightarrow K$ of $\A$ by
  \begin{displaymath}
    \Lang(\A)(w) = \sum_{q_{0} q_{1} \ldots \in Q^{\omega}} \inf_{n \in \N} \sup_{i \geq n} \bigl(\, I(q_{0}) \cdot \delta (q_{0},w_{0})(q_{1}) \cdot\,\cdots\,\cdot \delta (q_{i-1},w_{i-1})(q_{i}) \cdot F(q_{i}) \,\bigr)\enspace.
  \end{displaymath}
\end{defi}

These notions specialize to $[0,1]$-acceptance automaton and fuzzy automaton~\cite{Rahonis05}
 by taking the fuzzy semiring $([0,1],\max,\min,0,1)$ as $\mathbb{K}$ in the above definitions.

Locally finiteness of a semiring~\cite{DrosteP07} is central in the
following result. Its proof is not hard but
 the result is not explicit in~\cite{DrosteP07} or elsewhere.
\begin{lem}\label{lem:nondeterminization}
  Let $\mathbb{K}=(K,\leq,+,\cdot,0,1)$ be a lattice-complete semiring
 and $\A = (\Sigma,Q,I,\delta,F)$ be a $\mathbb{K}$-weighted
 automaton.
 If
 $\mathbb{K}$ is locally finite (Def.~\ref{def:conditionsOnSemiring}),
 there exists a $\mathbb{K}$-acceptance
 automaton $\A' = (\Sigma,Q',I',\delta',F')$ such that
 \begin{math}
  \Lang(\A) = \Lang(\A')
 \end{math}.
\end{lem}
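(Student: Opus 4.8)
The plan is to ``push'' the transition weights of $\A$ into its state space; this is possible precisely because $\mathbb{K}$ is locally finite. Let $S\subseteq K$ be the finite set consisting of $1$ together with all weights occurring in $\A$ --- i.e.\ all the values $I(q)$, $\delta(q,a)(q')$ and $F(q)$ for $q,q'\in Q$, $a\in\Sigma$ --- and let $M$ be the submonoid of $(K,\cdot,1)$ generated by $S$, which is finite by local finiteness (Def.~\ref{def:conditionsOnSemiring}). I would then define a $\mathbb{K}$-acceptance automaton $\A'=(\Sigma,Q',I',\delta',F')$ by
\[
 Q'=Q\times M,\qquad
 I'=\{\,(q,I(q))\mid q\in Q\,\},\qquad
 F'(q,m)=m\cdot F(q),
\]
\[
 \delta'\bigl((q,m),a\bigr)=\{\,(q',\,m\cdot\delta(q,a)(q'))\mid q'\in Q\,\}\enspace.
\]
These are well defined since $I(q),\delta(q,a)(q'),F(q)\in S\subseteq M$ and $M$ is closed under $\cdot$; note that $\A'$ has a finite state set, as required.

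Next I would observe a bijective correspondence between $\run_{\A'}(w)$ and $Q^{\omega}$. A run $(q_0,m_0)(q_1,m_1)\dotsc$ of $\A'$ over $w=w_0w_1\dotsc$ is forced to satisfy $m_0=I(q_0)$ and $m_{i+1}=m_i\cdot\delta(q_i,w_i)(q_{i+1})$, so it is determined by its state projection $q_0q_1\dotsc\in Q^{\omega}$; conversely every sequence in $Q^{\omega}$ lifts uniquely to such a run (crucially, $\delta'$ supplies a successor $(q',m\cdot\delta(q,a)(q'))$ for \emph{every} $q'$, irrespective of whether that weight is $0$). Under this bijection the prefix weight $m_i=I(q_0)\cdot\delta(q_0,w_0)(q_1)\cdots\delta(q_{i-1},w_{i-1})(q_i)$ appearing in the definition of $\Lang(\A)(w)$ is exactly the second component of the $i$-th state of the lifted run. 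Since a semiring has $0\cdot x=0$ and $0\le x$ for all $x\in K$, any sequence that is not a genuine transition sequence of $\A$ carries a $0$ factor from some index on and hence contributes $0$ to both $\Lang(\A)(w)$ and $\Lang(\A')(w)$; so it suffices to match the per-run contributions.

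The heart of the argument is the identity
\[
 \inf_{n\in\N}\ \sup_{i\geq n}\bigl(m_i\cdot F(q_i)\bigr)
 \;=\;
 \max\bigl\{\,m\cdot F(q)\mid (q,m)\in\mathrm{Inf}\bigl((q_i,m_i)_{i\in\N}\bigr)\,\bigr\}\enspace,
\]
whose right-hand side is exactly $\max\{F'(q')\mid q'\in\mathrm{Inf}(\rho')\}$ for the lifted run $\rho'$. Here I would use finiteness of $Q'=Q\times M$: the pairs $(q_i,m_i)$ take values in a finite set, so $P:=\mathrm{Inf}\bigl((q_i,m_i)_i\bigr)$ is finite and nonempty; choosing $N$ with $(q_i,m_i)\in P$ for all $i\ge N$, every element of $P$ still appears at some index $\ge n$ whenever $n\ge N$, whence $\sup_{i\ge n}(m_iF(q_i))=\sup_{(q,m)\in P}(m\cdot F(q))$, a finite (hence attained) supremum. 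As $n\mapsto\sup_{i\ge n}(m_iF(q_i))$ is non-increasing and $(K,\le)$ is a complete lattice, $\inf_n$ of it equals this eventual constant, proving the identity. Reindexing the two sums along the bijection (a sum over a family depends only on the indexed family of values) then yields $\Lang(\A)(w)=\Lang(\A')(w)$ for every $w\in\Sigma^{\omega}$, i.e.\ $\Lang(\A)=\Lang(\A')$.

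The main obstacle I expect is exactly this last step: faithfully reconciling the $\limsup$-style weighted-B\"uchi semantics with the ``$\max$ over $\mathrm{Inf}$'' semantics of $\mathbb{K}$-acceptance automata, and making explicit that local finiteness is what makes the reconciliation work --- without it the accumulated weights $m_i$ could range over an infinite set, and then $\inf_n\sup_{i\ge n}$ genuinely need not equal a maximum over an $\mathrm{Inf}$-set, so the state space $Q\times M$ would be infinite and the reduction would break. A secondary, more routine concern is the bookkeeping around $0$-weighted transitions and initial weights, and verifying that $\A'$ as constructed really is a legal $\mathbb{K}$-acceptance automaton.
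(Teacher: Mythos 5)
Your proposal is correct and is essentially the paper's own construction: push the accumulated products into the state space via the finite submonoid generated by the occurring weights (the paper takes $I'=I\times\{1\}$ where you seed the monoid component with $I(q)$, an inessential variation). You additionally spell out the reconciliation of the $\inf_n\sup_{i\ge n}$ semantics with the $\max$-over-$\mathrm{Inf}$ semantics, which the paper dismisses as ``straightforward''; your argument for that step is sound.
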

\begin{proof}
 Let $(F,\cdot,1)$ be the submonoid of $(K,\cdot,1)$ generated by
 the (finite) set of weights of transitions occurring in $\A$, that is,
 $\{\delta (q,a)(q')\mid  q,q' \in Q, a \in \Sigma\}$. The set $F$ is
 finite since $\mathbb{K}$ is locally finite.
 We now define $\A' = (\Sigma,Q',I',\delta',F')$ as follows.
 \begin{align*}
  Q' &= Q \times F\enspace,
  &
  I' &= I \times \{ 1 \}\enspace,
  \\
  \delta' \bigl(\,(q,k),\,a\,\bigr)
  &=
  \bigl\{\;
  \bigl(\,q',\;k \cdot \delta(q,a)(q')\,\bigr)
  \;\bigl|\bigr.\;
  q'\in Q
  \;\bigr\}\enspace,
  &
  F'(q,k) &=  k \cdot F(q)\enspace.
 \end{align*}
 The proof of $\Lang(\A)=\Lang(\A')$ is straightforward.
\end{proof}

It is straightforward that the fuzzy semiring $([0,1],\max,\min,0,1)$ is locally
finite. This leads to:
\begin{cor}
  Let $\A$ be a fuzzy automaton. There exists a $[0,1]$-acceptance 
  automaton $\A'$ such that 
 \begin{math}
  \Lang(\A) = \Lang(\A')
 \end{math}.
\qed
\end{cor}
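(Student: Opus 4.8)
The plan is to instantiate Lemma~\ref{lem:nondeterminization} at the \emph{fuzzy semiring} $\mathbb{K}_{[0,1]}=([0,1],\leq,\max,\min,0,1)$. By the remark just preceding the corollary, a fuzzy automaton is by definition a $\mathbb{K}_{[0,1]}$-weighted automaton, and a $[0,1]$-acceptance automaton is a $\mathbb{K}_{[0,1]}$-acceptance automaton (up to the convention, discussed below, that the latter forbids dead-end states). So it is enough to verify that $\mathbb{K}_{[0,1]}$ satisfies the hypotheses of Lemma~\ref{lem:nondeterminization}, i.e.\ that it is lattice-complete and locally finite in the sense of Def.~\ref{def:conditionsOnSemiring}, and then to read off that the resulting $\mathbb{K}_{[0,1]}$-acceptance automaton is a $[0,1]$-acceptance automaton.

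Lattice-completeness is routine: $([0,1],\leq)$ is a complete lattice, the additive unit $0$ (of $\max$) and the multiplicative unit $1$ (of $\min$) satisfy $0\leq x\leq 1$ for all $x$, and the required distributivity $\max(y,\sup_i x_i)=\sup_i\max(y,x_i)$ holds in any complete lattice. Local finiteness is where the shape of the multiplication matters: the multiplicative monoid is $([0,1],\min,1)$, and since $\min$ is idempotent, commutative and associative, any finite product $a_{i_1}\cdot\dots\cdot a_{i_m}$ of elements of a finite set $F\subseteq[0,1]$ equals $\min\{a_{i_1},\dots,a_{i_m}\}\in F$, with the empty product being $1$; hence the submonoid generated by $F$ is contained in $F\cup\{1\}$ and is finite. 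Therefore Lemma~\ref{lem:nondeterminization} applies to the given fuzzy automaton $\A$ and yields a $\mathbb{K}_{[0,1]}$-acceptance automaton $\A'$ with $\Lang(\A)=\Lang(\A')$.

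Finally I would identify such an $\A'$ with a genuine $[0,1]$-acceptance automaton. Unfolding the $\mathbb{K}_{[0,1]}$-acceptance semantics, the infinite $\sum$ is a supremum, so $\Lang(\A')(w)=\sup_{\rho\in\run(w)}\max\{F(q)\mid q\in\mathrm{Inf}(\rho)\}$; since $F$ takes only finitely many values (the state space is finite), this supremum is attained and equals $\max\{F(q)\mid\exists\rho\in\run(w).\,q\in\mathrm{Inf}(\rho)\}$, which is exactly~(\ref{eq:acceptanceZeroOneAutom}). The only mismatch with Def.~\ref{def:zeroOneAcceptanceBuchi} is that it requires $\delta(q,a)\neq\emptyset$, whereas the automaton produced by Lemma~\ref{lem:nondeterminization} may have dead ends; this is removed by adjoining a fresh sink state $q_\bot$ with $F(q_\bot)=0$ and $\delta(q_\bot,a)=\{q_\bot\}$ for every $a$, and redirecting every previously empty image to $\{q_\bot\}$. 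Any run entering $q_\bot$ eventually has $\mathrm{Inf}=\{q_\bot\}$ and contributes the value $0$, which, being the bottom of $[0,1]$ and the unit of the outer supremum, leaves the language unchanged. I do not anticipate any real obstacle: the whole argument is a direct specialization of Lemma~\ref{lem:nondeterminization}, and the only points needing (easy) care are the local-finiteness check, which relies on idempotence of $\min$, and the bookkeeping that turns a $\mathbb{K}_{[0,1]}$-acceptance automaton into a dead-end-free $[0,1]$-acceptance automaton.
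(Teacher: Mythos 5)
Your proof is correct and follows exactly the paper's route: the paper derives this corollary in the same way, by observing that the fuzzy semiring $([0,1],\max,\min,0,1)$ is locally finite (via idempotence of $\min$) and invoking Lemma~\ref{lem:nondeterminization}. You are in fact more careful than the paper, which states only that local finiteness is ``straightforward'' and silently glosses over both the identification of the two acceptance semantics and the dead-end mismatch with Def.~\ref{def:zeroOneAcceptanceBuchi} that you patch with a sink state.
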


The main results of~\cite{Rahonis05,DrosteP07} concern the
characterization of so-called \emph{$\omega$-rational formal power
series} over $\mathbb{K}$---those which are generated by
$\omega$-regular-like expressions---by $\mathbb{K}$-\emph{weighted} B\"uchi
automata. Lem.~\ref{lem:nondeterminization} therefore gives us another
characterization by $\mathbb{K}$-\emph{acceptance} B\"uchi automata.



\end{document}

